\documentclass[11pt]{article}
\usepackage{amsmath,amssymb,amsthm,bm,subfig,fullpage,xspace,graphicx}
\usepackage[colorlinks=true]{hyperref}
\usepackage{algorithm,algorithmicx}
\usepackage[noend]{algpseudocode}
\newtheorem{theorem}{Theorem}[section]
\newtheorem{lemma}[theorem]{Lemma}

\newtheorem{proposition}[theorem]{Proposition}

\newtheorem{corollary}[theorem]{Corollary}

\theoremstyle{definition}
\newtheorem{definition}[theorem]{Definition}
\newtheorem{remark}[theorem]{Remark}
\newtheorem{example}[theorem]{Example}

\newcommand{\caI}{\mathcal{I}}
\newcommand{\caL}{\mathcal{L}}

\newcommand{\caN}{\mathcal{N}}

\newcommand{\caR}{\mathcal{R}}

\newcommand{\bbN}{\mathbb{N}}
\newcommand{\bbR}{\mathbb{R}}
\newcommand{\bmb}{{\bm b}}
\newcommand{\bmd}{{\bm d}}
\newcommand{\bmg}{{\bm g}}
\newcommand{\bmp}{{\bm p}}
\newcommand{\bmv}{{\bm v}}
\newcommand{\bmw}{{\bm w}}
\newcommand{\bmx}{{\bm x}}
\newcommand{\bmy}{{\bm y}}
\newcommand{\bmz}{{\bm z}}
\newcommand{\bmone}{{\bm 1}}
\newcommand{\bmzero}{{\bm 0}}

\newcommand{\bmeta}{{\bm \eta}}
\newcommand{\dr}{\mathrm{d}r}
\newcommand{\dt}{\mathrm{d}t}
\newcommand{\dx}{\mathrm{d}\bmx}
\newcommand{\dtau}{\mathrm{d}\tau}
\newcommand{\cut}{\mathrm{cut}}
\newcommand{\vol}{\mathrm{vol}}
\newcommand{\supp}{\mathrm{supp}}
\newcommand{\set}[1]{\{#1\}}
\newcommand{\bigset}[1]{\bigl\{#1\bigr\}}
\newcommand{\Bigset}[1]{\Bigl\{#1\Bigr\}}

\newcommand{\thr}{\mathrm{thr}}

\newcommand{\E}{\mathop{\mathbf{E}}}
\newcommand{\argmax}{\mathop{\mathrm{argmax}}}
\newcommand{\argmin}{\mathop{\mathrm{argmin}}}
\newcommand{\Lovasz}{Lov{\'a}sz\xspace}

\newcommand{\SDP}{\mathrm{SDP}}
\newcommand{\poly}{\mathrm{poly}}

\allowdisplaybreaks

\title{Cheeger Inequalities for Submodular Transformations}
\author{Yuichi Yoshida\thanks{Supported by JST ERATO Grant Number JPMJER1305 and JSPS KAKENHI Grant Number JP17H04676.}\\
National Institute of Informatics\\
\texttt{yyoshida@nii.ac.jp}}

\begin{document}
\maketitle
\begin{abstract}
  The Cheeger inequality for undirected graphs, which relates the conductance of an undirected graph and the second smallest eigenvalue of its normalized Laplacian, is a cornerstone of spectral graph theory.
  The Cheeger inequality has been extended to directed graphs and hypergraphs using normalized Laplacians for those, that are no longer linear but piecewise linear transformations.

  In this paper, we introduce the notion of a submodular transformation $F:\set{0,1}^n \to \bbR^m$, which applies $m$ submodular functions to the $n$-dimensional input vector, and then introduce the notions of its Laplacian and normalized Laplacian.
  With these notions, we unify and generalize the existing Cheeger inequalities by showing a Cheeger inequality for submodular transformations, which relates the conductance of a submodular transformation and the smallest non-trivial eigenvalue of its normalized Laplacian.
  This result recovers the Cheeger inequalities for undirected graphs, directed graphs, and hypergraphs, and derives novel Cheeger inequalities for mutual information and directed information.

  Computing the smallest non-trivial eigenvalue of a normalized Laplacian of a submodular transformation is NP-hard under the small set expansion hypothesis.
  In this paper, we present a polynomial-time $O(\log n)$-approximation algorithm for the symmetric case, which is tight, and a polynomial-time $O(\log^2n+\log n \cdot \log m)$-approximation algorithm for the general case.

  We expect the algebra concerned with submodular transformations, or \emph{submodular algebra}, to be useful in the future not only for generalizing spectral graph theory but also for analyzing other problems that involve piecewise linear transformations, e.g., deep learning.
\end{abstract}

\thispagestyle{empty}
\newpage

\tableofcontents
\thispagestyle{empty}
\setcounter{page}{0}
\newpage


\section{Introduction}

\subsection{Background}

Spectral graph theory is concerned with the relations between the properties of a graph and the eigenvalue/vectors of matrices associated with the graph (refer to~\cite{chung1997spectral} for a book).
One of the most seminal results in spectral graph theory is the Cheeger inequality~\cite{Alon:1986gz,Alon:1985jg}, which we briefly review below.
Let $G = (V,E)$ be an undirected graph.
The \emph{conductance} of a vertex set $\emptyset \subsetneq S \subsetneq V$ is defined as
\[
  \phi_G(S) = \frac{\cut_G(S)}{\min\set{\vol_G(S),\vol_G(V\setminus S)}},
\]
where the \emph{cut size} of $S$, denoted by $\cut_G(S)$, is the number of edges between $S$ and $V \setminus S$, and the \emph{volume} of $S$, denoted by $\vol_G(S)$, is the sum of degrees of the vertices in $S$.
The \emph{conductance} $\phi_G$ of $G$ is the minimum conductance of a vertex set $\emptyset \subsetneq S \subsetneq V$.
The problem of finding a vertex set of a small conductance has been intensively studied because such a set can be regarded as a tight community~\cite{Gleich:2012kq,Leskovec:2010fp}.
Although computing $\phi_G$ is an NP-hard problem, we can well approximate it using the Cheeger inequality, which relates $\phi_G$ and an eigenvalue of a matrix constructed from $G$ known as the normalized Laplacian.
Here, the \emph{Laplacian} of $G$ is the matrix $L_G = D_G - A_G$, where $D_G \in \bbR^{V \times V}$ is the diagonal matrix consisting of the degrees of vertices and $A_G \in \bbR^{V \times V}$ is the adjacency matrix, and the \emph{normalized Laplacian} of $G$ is the matrix $\caL_G= D_G^{-1/2}L_G D_G^{-1/2} = I - D_G^{-1/2}A_G D_G^{-1/2}$.
Then, the Cheeger inequality~\cite{Alon:1986gz,Alon:1985jg} states that
\begin{align}
  \frac{\lambda_G}{2} \leq \phi_G \leq \sqrt{2\lambda_G}, \label{eq:undirected-cheeger}
\end{align}
where $\lambda_G$ is the second smallest eigenvalue of $\caL_G$ (note that the smallest eigenvalue is zero with the corresponding trivial eigenvector $D_G^{1/2}\bmone$, where $\bmone$ is the all-one vector).
Indeed, the second inequality of~\eqref{eq:undirected-cheeger} yields an algorithm, which computes a set $\emptyset \subsetneq S \subsetneq V$ of conductance at most $\sqrt{2\lambda_G} = O(\sqrt{\phi_G})$ from an eigenvector corresponding to $\lambda_G$.
Moreover, the Cheeger inequality is tight in the sense that computing a set with a conductance $o(\sqrt{\phi_G})$ is NP-hard~\cite{Raghavendra:2012fc} assuming the small set expansion hypothesis (SSEH)~\cite{Raghavendra:2010jg}.

Extensions of the Cheeger inequality were recently proposed for directed graphs~\cite{Yoshida:2016ig} and hypergraphs~\cite{Chan:2018eu,Louis:2014tg} by using modified notions of conductance and a normalized Laplacian.
We note that normalized Laplacians for directed graphs and hypergraphs are no longer linear but piecewise linear transformations.
We can show that those normalized Laplacians always have the eigenvalue of zero associated with a trivial eigenvector, and that they also have a non-trivial eigenvalue in the sense that the corresponding eigenvector is orthogonal to the trivial eigenvector.
Then, the extended Cheeger inequalities~\cite{Chan:2018eu,Louis:2014tg,Yoshida:2016ig} relate the conductance of a directed graph or a hypergraph with the smallest non-trivial eigenvalue of its normalized Laplacian.
However, as those normalized Laplacians are no longer linear transformations, computing its smallest non-trivial eigenvalue becomes NP-hard under the SSEH~\cite{Chan:2018eu,Louis:2014tg}.
Although a polynomial-time $O(\log n)$-approximation algorithm is known for hypergraphs on $n$ vertices, no non-trivial polynomial-time approximation algorithm is known for directed graphs.

\subsection{Our contributions}

In this paper, we unify and extend the existing Cheeger inequalities discussed above by introducing the notions of a submodular transformation and its normalized Laplacian.
A set function $F\colon\set{0,1}^V \to \bbR$ is called \emph{submodular} if $F(S) + F(T) \geq F(S \cap T) + F(S \cup T)$ for every $S,T\subseteq V$.
We note that the cut function $\cut_G\colon\set{0,1}^V \to \bbR$ associated with an undirected graph, a directed graph, or a hypergraph $G$ is submodular, where $\cut_G(S)$ for a vertex set $S$ represents the number of edges, arcs, or hyperedges leaving $S$ and entering $V \setminus S$.
We say that a function $F\colon\set{0,1}^V \to \bbR^E$ is a \emph{submodular transformation} if $F_e\colon S \mapsto F(S)(e)$ is a submodular function for every $e \in E$.

To derive a Cheeger inequality for a submodular transformation $F\colon\set{0,1}^V \to \bbR^E$ with $F(\emptyset)=F(V)=\bmzero$, we need to define the conductance of a set with respect to $F$ and the normalized Laplacian associated with $F$.
First, we define the \emph{degree} $\bmd_F(v)$ of $v \in V$ as the number of $F_e$'s to which $v$ is relevant. (See Section~\ref{sec:pre} for the formal definition.)
For a set $S \subseteq V$, we define the \emph{volume} of $S$ as $\vol_F(S) = \sum_{v \in S}\bmd_F(v)$ and the \emph{cut size} of $S$ as $\cut_F(S) = \sum_{e \in E}F_e(S)$.
Then, we define the \emph{conductance} $\phi_F(S)$ of a set $\emptyset \subsetneq S \subsetneq V$ as
\[
  \phi_F(S) = \frac{\min \set{\cut_F(S),\cut_F(V \setminus S)}}{\min\set{\vol_F(S),\vol_F(V\setminus S)}}.
\]
We define the \emph{conductance} of $F$ as $\phi_F = \min_{\emptyset \subsetneq S \subsetneq V}\phi_F(S)$.

\begin{example}\label{ex:undirected-conductance}
  Let $G=(V, E)$ be an undirected graph.
  Now, we consider a submodular transformation $F\colon\set{0,1}^V \to \bbR^E$, where $F_e$ is the cut function of the undirected graph with a single edge $e$.
  Then, $\bmd_F(v)$ for a vertex $v \in V$ coincides with the usual degree of $v$, and $\cut_F(S)$ for a vertex set $S \subseteq V$ coincides with the usual cut size of $S$.
  As $\cut_F$ is symmetric, that is, $\cut_F(S) = \cut_F(V\setminus S)$ holds for every vertex set $S \subseteq V$, $\phi_F(S)$ coincides with the conductance of $S$ in the graph sense.
\end{example}


For a set function $F\colon\set{0,1}^V \to \bbR$, we define its \emph{\Lovasz extension} $f\colon\bbR^V \to \bbR$ as $f(\bmx) = \max_{\bmw \in B(F)}\langle \bmw, \bmx\rangle$, where $B(F) \subseteq \bbR^V$ is the \emph{base polytope} of $F$ (see Section~\ref{sec:pre} for the definition).
Then, using a submodular transformation $F\colon\set{0,1}^V \to \bbR^E$ with $F(\emptyset)=F(V)=\bmzero$, we define its \emph{Laplacian} $L_F\colon\bbR^V \to \set{0,1}^{\bbR^V}$ as
\[
  L_F(\bmx) := \left\{\sum_{e \in E}\bmw_e f_e(\bmx) \mid \bmw_e \in \partial f_e(\bmx) \; (e \in E)\right\} = \left\{\sum_{e \in E}\bmw_e \langle \bmw_e, \bmx\rangle \mid \bmw_e \in \partial f_e(\bmx)\; (e \in E)\right\},
\]
where $f_e\colon \bbR^V \to \bbR$ is the \Lovasz extension of $F_e$ and $\partial f_e(\bmx) := \argmax_{\bmw \in B(F_e)}\langle \bmw,\bmx\rangle \subseteq \bbR^V$ is the subdifferential of $f_e$ at $\bmx$.
See Section~\ref{sec:Laplacian} for more detailed explanation.
We note that $L_F$ is set-valued and $L_F(\bmx)$ forms a convex polytope in $\bbR^V$.
However, $L_F(\bmx)$ consists of a single point almost everywhere (as so does $\partial f_e(\bmx)$), and hence we can almost always regard $L_F$ as a function that maps a vector in $\bbR^V$ to another vector in $\bbR^V$.
Moreover, around $\bmx \in \bbR^V$ with $L_F(\bmx)$ consisting of a single point, $L_F$ acts as a linear transformation.
Hence, we can basically regard $L_F$ as a piecewise linear function.

Next, we define the \emph{normalized Laplacian} $\caL_F\colon\bbR^V \to \bbR^V$ as $\caL_F(\bmx) = D_F^{-1/2}L_F(D_F^{-1/2}\bmx)$, where $D_F\in \bbR^{V \times V}$ is a diagonal matrix with ${(D_F)}_{vv} = \bmd_F(v)\;(v \in V)$.
We say that $\lambda \in \bbR$ is an \emph{eigenvalue} of $\caL_F$ if there exists a non-zero vector $\bmv \in \bbR^{V}$ such that $\caL_F(\bmv) \ni \lambda\bmv$.
As with the normalized Laplacian for an undirected graph, using the assumption $F(\emptyset) = F(V) = \bmzero$, we can show that $\caL_F$ is positive-semidefinite, that is, all the eigenvalues are non-negative, and that $\caL_F(D_F^{1/2}\bmone) \ni \bmzero$, that is, $0$ is the smallest eigenvalue of $\caL_F$ with the corresponding trivial eigenvector $D_F^{1/2}\bmone$.
Then, we can also show that there exists a non-trivial eigenvalue in the sense that the corresponding eigenvector is orthogonal to $D_F^{1/2}\bmone$.
We denote by $\lambda_F$ the smallest non-trivial eigenvalue of $\caL_F$.

\begin{example}\label{ex:undirected-Laplacian}
  For an undirected graph $G=(V,E)$, we define a submodular transformation $F\colon\set{0,1}^V \to \bbR^E$ as in Example~\ref{ex:undirected-conductance}.
  Then, $\caL_F$ essentially equals to the usual normalized Laplacian $\caL_G$ for $G$ because $\caL_F(\bmx)$ consists of a single vector $L_G \bmx$. (See Example~\ref{ex:undirected-Laplacian-detailed} for details.)
  Moreover, $\lambda_F$ is equal to the second smallest eigenvalue of $\caL_G$.
\end{example}

We show the following Cheeger inequality that relates $\phi_F$ and $\lambda_F$:
\begin{theorem}\label{the:intro-cheeger}
  Let $F\colon\set{0,1}^V \to \bbR^E$ be a submodular transformation with $F(\emptyset) = F(V) = \bmzero$ and $F(S) \in [0,1]$ for every $S \subseteq V$.
  Then, we have
  \[
    \frac{\lambda_F}{2} \leq \phi_F \leq 2\sqrt{\lambda_F}.
  \]
\end{theorem}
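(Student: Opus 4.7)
The plan is to adapt the classical Rayleigh-quotient-plus-sweep proof of the Cheeger inequality, with the Lov\'asz extensions $f_e$ playing the role of the bilinear form. The starting point is the variational characterization
\[
  \lambda_F \;=\; \min_{\bmy \neq \bmzero,\; \langle \bmy, D_F\bmone\rangle = 0}\; \frac{\sum_{e \in E} f_e(\bmy)^2}{\langle \bmy, D_F\bmy\rangle},
\]
which follows from the identity $\bmy^\top \bmu = \sum_e f_e(\bmy)^2$ valid for every $\bmu \in L_F(\bmy)$, together with the fact that $D_F^{1/2}\bmone$ spans the trivial eigenspace. For the easy direction $\lambda_F/2 \le \phi_F$, I plug in the two-valued test vector taking the values $a := 1/\vol_F(S)$ on a set $S$ attaining $\phi_F$ and $-b := -1/\vol_F(V\setminus S)$ on its complement; since $F_e(\emptyset) = F_e(V) = 0$, a direct computation of the Lov\'asz extension of a two-valued vector gives $f_e(\bmy) = (a+b) F_e(S)$ and $f_e(-\bmy) = (a+b) F_e(V\setminus S)$. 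The hypothesis $F_e(S) \in [0,1]$ allows replacing $F_e(\cdot)^2$ by $F_e(\cdot)$, and choosing whichever of $\pm\bmy$ corresponds to the smaller cut side then yields $\lambda_F \le 2\phi_F$.

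For the harder direction $\phi_F \le 2\sqrt{\lambda_F}$, let $\bmv$ be a non-trivial eigenvector with $\caL_F(\bmv) \ni \lambda_F \bmv$ and set $\bmy = D_F^{-1/2}\bmv$, so the Rayleigh quotient $R(\bmy) := \sum_e f_e(\bmy)^2/\langle \bmy, D_F\bmy\rangle$ equals $\lambda_F$. Shifting $\bmy$ by a weighted-median constant $c^*$ leaves each $f_e(\bmy)$ invariant (since $F_e(V) = 0$) and can only enlarge the denominator, so I may assume that both $\{v : y_v > 0\}$ and $\{v : y_v < 0\}$ have $F$-volume at most $\vol_F(V)/2$. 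Decomposing $\bmy = \bmy_+ - \bmy_-$, letting $\bmz$ denote the chosen side, and applying the combinatorial sweep to $\bmz^2$, the integral form of the Lov\'asz extension gives
\[
  \int_0^\infty \cut_F\bigl(\{v : z_v^2 > t\}\bigr)\,\dt \;=\; \sum_e f_e(\bmz^2) \;=\; \sum_e \int_0^\infty 2s\, F_e\bigl(\{v : z_v > s\}\bigr)\,\mathrm{d}s \;\le\; 2\sum_e z_{\max,e}\, f_e(\bmz),
\]
where $z_{\max,e}$ is the maximum of $z_v$ over vertices relevant to $e$. Cauchy--Schwarz combined with $\sum_e z_{\max,e}^2 \le \sum_v \bmd_F(v)\, z_v^2 = \|D_F^{1/2}\bmz\|^2$ upgrades this to $2\|D_F^{1/2}\bmz\|\sqrt{\sum_e f_e(\bmz)^2}$, and dividing by $\int_0^\infty \vol_F(\{v : z_v^2 > t\})\,\dt = \|D_F^{1/2}\bmz\|^2$ delivers, via the averaging principle, a level set of conductance at most $2\sqrt{R(\bmz)}$.

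The main obstacle is the last step: showing $R(\bmz) \le \lambda_F$ (up to an absolute constant) for at least one of $\bmz \in \{\bmy_+,\bmy_-\}$. For symmetric submodular $F_e$---which covers undirected graphs and hypergraphs---the edge-wise inequality $f_e(\bmy)^2 \ge f_e(\bmy_+)^2 + f_e(\bmy_-)^2$ can be verified directly, and together with the clean decomposition $\langle \bmy, D_F\bmy\rangle = \langle \bmy_+, D_F\bmy_+\rangle + \langle \bmy_-, D_F\bmy_-\rangle$ it forces $\min\{R(\bmy_+), R(\bmy_-)\} \le \lambda_F$ outright. For general submodular transformations---already in the directed-graph setting---this edge-wise inequality can fail, and I expect to recover a sufficient bound by pairing the subgradient identity $\sum_e \bmw_e f_e(\bmy) = \lambda_F D_F \bmy$ (with $\bmw_e \in \partial f_e(\bmy)$) against $\bmy_+$, which yields $\sum_e f_e(\bmy)\, \langle \bmw_e, \bmy_+\rangle = \lambda_F \|D_F^{1/2}\bmy_+\|^2$, and then deducing a lower bound of the form $\sum_e f_e(\bmy)\, \langle \bmw_e, \bmy_+\rangle \ge \sum_e f_e(\bmy_+)^2$ from the base-polytope structure of submodular Lov\'asz extensions. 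This delicate step is what must unify the ad hoc arguments previously available for undirected, directed, and hypergraph Laplacians.
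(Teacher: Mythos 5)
Your left inequality is fine (a two-valued test vector in place of the paper's projected indicator, but the same computation), and your sweep/Cauchy--Schwarz argument for the level sets of $\bmz^2$ is essentially the paper's Lemma~\ref{lem:strong-rounding} computation in integral form. The genuine gap is exactly the step you flag at the end: without the bound $\min\{R(\bmy_+),R(\bmy_-)\}\le R(\bmy)$ for \emph{general} submodular transformations, the hard direction $\phi_F\le 2\sqrt{\lambda_F}$ is only proved in the symmetric case, and your proposed repair via the subgradient identity is left as an expectation rather than an argument (and as stated the inequality $\sum_e f_e(\bmy)\langle\bmw_e,\bmy_+\rangle\ge\sum_e f_e(\bmy_+)^2$ is not justified).

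The good news is that the premise behind your detour is mistaken: the edge-wise inequality $f_e(\bmy)^2\ge f_e(\bmy_+)^2+f_e(\bmy_-)^2$ does \emph{not} fail in the directed-graph setting, or for any submodular $F_e$ with $F_e\ge 0$ and $F_e(V)=0$. This is the paper's Lemma~\ref{lem:reverse-triangle-inequality}. The point is that if $v_1,\dots,v_n$ is a decreasing ordering for $\bmy$, it is simultaneously a decreasing ordering (up to ties) for $\bmy_+$ and for $\bmy_-$, so by Lemma~\ref{lem:extreme-point} a single greedy extreme point $\bmw^*\in B(F_e)$ attains all three maxima; hence $f_e(\bmy)=\langle\bmw^*,\bmy_+\rangle+\langle\bmw^*,\bmy_-\rangle=f_e(\bmy_+)+f_e(\bmy_-)$, and squaring together with non-negativity of $f_e$ (which follows from $\bmzero\in B(F_e)$, i.e.\ from $F_e\ge 0$ and $F_e(V)=0$) gives the claim. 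You can check this directly for a directed arc $(u,v)$: in every sign pattern of $(\bmy(u),\bmy(v))$ one has $\max\{\bmy(u)-\bmy(v),0\}^2\ge\max\{\bmy_+(u)-\bmy_+(v),0\}^2+\max\{\bmy_-(u)-\bmy_-(v),0\}^2$. Inserting this lemma in place of your symmetric-only argument closes the gap and makes your proof go through for all $F$ satisfying the hypotheses, along the same lines as the paper's.
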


We now see several instantiations of Theorem~\ref{the:intro-cheeger}.
\begin{example}[Undirected graphs]\label{ex:undirected-graph}
  For an undirected graph $G=(V,E)$, we define a submodular transformation $F\colon\set{0,1}^V \to \bbR^E$ as in Example~\ref{ex:undirected-conductance}.
  Then, Theorem~\ref{the:intro-cheeger} reduces to the Cheeger inequality for undirected graphs (with a slightly worse coefficient in the right inequality, that is, $2$ instead of $\sqrt{2}$).
\end{example}

\begin{example}[Directed graphs]\label{ex:directed-graph}
  Let $G=(V,E)$ be a directed graph.
  Then, we define a submodular transformation $F\colon\set{0,1}^V \to \bbR^E$ so that, for each arc $e \in E$, $F_e\colon\set{0,1}^V \to \bbR$ is the cut function of the directed graph with a single arc $e$.
  Then, $\bmd_F(v)$ for a vertex $v \in V$ is the number of arcs to which $v$ is incident as a head or a tail, and $\cut_F(S)$ for a vertex set $S \subseteq V$ is the number of arcs leaving $S$ and entering $V \setminus S$.
  Then, the Cheeger inequality derived from Theorem~\ref{the:intro-cheeger} coincides with that in~\cite{Yoshida:2016ig}.
\end{example}

\begin{example}[Hypergraphs]\label{ex:hypergraph}
  Let $G=(V,E)$ be a hypergraph.
  Then, we define a submodular transformation $F\colon\set{0,1}^V \to \bbR^E$ so that, for each hyperedge $e \in E$, $F_e\colon\set{0,1}^V \to \bbR$ is the cut function of the hypergraph with a single hyperedge $e$.
  Then, $\bmd_F(v)$ for a vertex $v \in V$ is the number of hyperedges incident to $v$, and $\cut_F(S)$ for a vertex set $S \subseteq V$ is the number of hyperedges containing a vertex in $S$ and another vertex in $V \setminus S$.
  Then, the Cheeger inequality derived from Theorem~\ref{the:intro-cheeger} coincides with that in~\cite{Chan:2018eu,Louis:2014tg}.
\end{example}

\begin{example}[Hypergraphs with each hyperedge being associated with a submodular function]\label{ex:map}
  A slightly more general setting than Example~\ref{ex:hypergraph} is that each hyperedge $e \in E$ is associated with a submodular function $F_e:\set{0,1}^e \to \bbR$ and a submodular transformation $F\colon\set{0,1}^V \to \bbR^E$ is defined so that $F(S)(e) = F_e(S \cap e)$ for each $e \in E$.
  Finding a large set $S \subseteq V$ achieving small $\cut_F(S)$ has numerous applications, including image segmentation and denoising via MAP inference on Markov random fields~\cite{Ishikawa:2009jl,Ishikawa:2011gt,Kolmogorov:2004kp}, clustering based on network motifs~\cite{Li:2017up}, and learning ranking data~\cite{Li:2017up}.
  We can formulate such a problem as finding a set $S \subseteq V$ with small $\phi_F$, and we can bound it from below and above by Theorem~\ref{the:intro-cheeger} using $\lambda_F$.
\end{example}


Theorem~\ref{the:intro-cheeger} also derives some novel Cheeger inequalities for joint distributions.
\begin{example}[Mutual information]\label{ex:mutual-information}
  Let $V$ be a set of discrete random variables with $|V| = n$.
  Then, it is known that the mutual information $\caI(S;V \setminus S)$ as a function of $S$ satisfies submodularity.
  Now, we define a submodular transformation (or, function) $F\colon\set{0,1}^V \to \bbR$ as $\caI$, divided by $H(V)$ for normalization, where $H(V)$ is the entropy of $V$.
  Then, $\bmd_F(v) = 1$ for $v \in V$, and $\cut_F(S) = \caI(S;V\setminus S)/H(V)$.
  Since $\caI(S;V \setminus S)$ is symmetric, we have $\phi_F = \min_{\emptyset \subsetneq S \subsetneq V}\caI(S;V \setminus S)/\min\set{|S|,n-|S|}H(V)$.
  Intuitively speaking, $\phi_F$ is small when there is a partition of $V$ into large sets $S$ and $V-S$ such that we obtain little information on $V-S$ by observing $S$, and vice versa.
  The problem of finding a large set $S$ with a small conductance frequently appears in machine learning~\cite{Narasimhan:2004tm,Narasimhan:2005uy} and application domains~\cite{Alhoniemi:2007ec,Cardoso:2003ug,Zhou:2003ip,Zhou:2004ij}, and it can be formalized as computing $\phi_F$.
  We can bound $\phi_F$ from below and above by Theorem~\ref{the:intro-cheeger} using $\lambda_F$.
  The argument can be generalized to multivariate Gaussians by defining the mutual information via differential entropy, which is also submodular.
\end{example}

\begin{example}[Directed information]\label{ex:directed-information}
  Let $V$ be a finite set with $|V|=n$ and for each $v \in V$, we consider a sequence $(v_1,\ldots,v_\tau)$ of Boolean random variables, where we regard $v_t$ as the random variable associated with $v \in V$ at time $t \in \set{1,\ldots,\tau}$.
  Then, for a set $S \subseteq V$ and $t \in \set{1,\ldots,\tau}$, we define $S_t = \set{v_t \mid v \in S}$ as the set of random variables associated with $S$ available at time $t$, and define $S_{\leq t} = \set{S_1,\ldots,S_t}$.
  For two sets $S,T \subseteq V$, the \emph{directed information} from $S$ to $T$, denoted by $\caI(S \to T)$, is defined as $\sum_{t=1}^\tau\caI(S_{\leq t}; T_t \mid T_{\leq t-1})$, which measures the amount of information that flows from $S_{\leq \tau}$ to $T_{\leq \tau}$.
  Directed information has many applications in causality analysis~\cite{Massey:1990vy,Permuter:2011jr,Permuter:2009fg}.
  The directed information $\caI(S \to V \setminus S)$ as a function of $S$ is known to be submodular but is unnecessarily symmetric~\cite{Zhou:2016vc}.

  As in Example~\ref{ex:mutual-information}, we define a submodular transformation (or, function) $F\colon\set{0,1}^V \to \bbR$ as $\caI$, divided by $n\tau$ for normalization.
  Then, we can bound $\phi_F$ from below and above by Theorem~\ref{the:intro-cheeger} using $\lambda_F$.
\end{example}
For Examples~\ref{ex:directed-graph},~\ref{ex:map},~\ref{ex:mutual-information}, and~\ref{ex:directed-information}, although several algorithms for computing $\phi_F$ have been proposed in the literature~\cite{Alhoniemi:2007ec,Cardoso:2003ug,Li:2017up,Yoshida:2016ig,Zhou:2003ip,Zhou:2004ij}, to the best of our knowledge, no theoretically guaranteed algorithms have been known.

The right inequality in Theorem~\ref{the:intro-cheeger} is algorithmic in the following sense:
Given a vector $\bmx \in \bbR^V$ orthogonal to $D_F^{1/2}\bmone$, we can compute in polynomial time a set $\emptyset\subsetneq S\subsetneq V$ such that $\phi_F(S) \leq 2\sqrt{\caR_F(\bmx)}$, where $\caR_F(\bmx)$ is the \emph{Rayleigh quotient} of $\caL_F$ defined as
\[
  \caR_F(\bmx) = \frac{\bigl\langle \bmx, \caL_F(\bmx)\bigr\rangle}{\|\bmx\|_2^2}.
\]
Here, we can show that $\langle \bmx,\bmy\rangle$ has the same value for any $\bmy \in \caL_F(\bmx)$, and hence we denote it by $\langle \bmx,\caL_F(\bmx)\rangle$ by abusing the notation.
We can show that $\lambda_F$ is the minimum of $\caR_F(\bmx)$ subject to $\bmx \neq \bmzero$ and $\bmx$ being orthogonal to the trivial eigenvector, that is, $D_F^{1/2}\bmone$.

\begin{example}
  For a submodular transformation $F\colon\set{0,1}^V \to \bbR^E$ associated with a undirected graph $G=(V,E)$ (see Example~\ref{ex:undirected-conductance}), we have $\bigl\langle \bmx, \caL_F(\bmx)\bigr\rangle = \sum\limits_{\set{u,v}\in E}{\Bigl(\frac{\bmx(u)}{\sqrt{\bmd_F(u)}}-\frac{\bmx(v)}{\sqrt{\bmd_F(v)}}\Bigr)}^2$.
  For a submodular transformation $F\colon\set{0,1}^V \to \bbR^E$ associated with a directed graph $G=(V,E)$ (see Example~\ref{ex:directed-graph}), we have $\bigl\langle \bmx, \caL_F(\bmx)\bigr\rangle = \sum\limits_{(u,v)\in E}\max\Bigset{\frac{\bmx(u)}{\sqrt{\bmd_F(u)}}-\frac{\bmx(v)}{\sqrt{\bmd_F(v)}},0}^2$.
  For a submodular transformation $F\colon\set{0,1}^V \to \bbR$ associated with a hypergraph $G=(V,E)$ (see Example~\ref{ex:hypergraph}), we have $\bigl\langle \bmx, \caL_F(\bmx)\bigr\rangle = \sum\limits_{e\in E}\max\limits_{u,v \in e}{\Bigl(\frac{\bmx(u)}{\sqrt{\bmd_F(u)}}-\frac{\bmx(v)}{\sqrt{\bmd_F(v)}}\Bigr)}^2$.
\end{example}

As opposed to the matrix case, it is NP-hard to compute $\lambda_F$ under the SSEH\@.
Hence, we consider approximating $\lambda_F$.
First, we provide the following approximation algorithm for  symmetric submodular transformations.
Here, we say that a submodular transformation $F\colon\set{0,1}^V \to \bbR^E$ is \emph{symmetric} if $F(S) = F(V\setminus S)$ for every $S \subseteq V$.
\begin{theorem}\label{the:symmetric-eigenvalue}
  There is an algorithm that, given $\epsilon > 0$ and (a value oracle of) a non-negative symmetric submodular transformation $F\colon\set{0,1}^V \to \bbR^E$ with $F(\emptyset) = \bmzero$, computes a non-zero vector $\bmx \in \bbR^V$ such that $\langle \bmx, D_F^{1/2}\bmone\rangle = 0$ and
  \[
    \lambda_F \leq \caR_F(\bmx) \leq O\Bigl(\frac{\log n}{\epsilon^2}\lambda_F +\epsilon B^2\Bigr),
  \]
  with a probability of at least $9/10$ in ${\poly(nm)}^{\poly(1/\epsilon)}$ time, where $n=|V|$, $m=|E|$, and $B$ is the maximum $\ell_2$-norm of a point in the base polytopes of $F_e$'s.

  If the number of extreme points of the base polytope of each $F_e$ is bounded by $N$, the upper bound and time complexity can be improved to $O(\log N \cdot \lambda_F)$ and $\poly(nmN)$, respectively.
\end{theorem}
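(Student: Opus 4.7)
The plan is to adapt the SDP-plus-Gaussian-rounding paradigm developed for hypergraph eigenvalues~\cite{Louis:2014tg,Chan:2018eu} to symmetric submodular transformations. Symmetry is essential: when $F_e$ is symmetric with $F_e(\emptyset)=F_e(V)=0$, the base polytope $B(F_e)$ is centrally symmetric, so $f_e$ is a seminorm and $f_e^2$ is convex, which admits a clean vector-valued relaxation. Writing $\bmy = D_F^{-1/2}\bmx$, one has
\[
  \lambda_F = \min\Bigset{\textstyle\sum_e f_e(\bmy)^2 : \sum_v \bmd_F(v)\bmy(v)^2 = 1,\ \sum_v \bmd_F(v)\bmy(v) = 0},
\]
and replacing each scalar $\bmy(v)$ by a vector $\bmu_v \in \bbR^V$, together with $\langle \bmw,\bmy\rangle^2$ by $\|\sum_v \bmw(v)\bmu_v\|_2^2$, produces an SDP whose value lower-bounds $\lambda_F$ (since any scalar solution lifts to an SDP solution of equal value). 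I plan to solve this SDP approximately in polynomial time via the ellipsoid method, using submodular function minimization as a separation oracle for the inner maximum over $B(F_e)$.

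Given an SDP solution $\set{\bmu_v}$, I round by drawing a standard Gaussian $\bmg$ and setting $\bmy(v) = \langle \bmg,\bmu_v\rangle$, then $\bmx = D_F^{1/2}\bmy$. The vector equality $\sum_v \bmd_F(v)\bmu_v = \bmzero$ forces $\langle \bmx, D_F^{1/2}\bmone\rangle = 0$ deterministically, and $\E\|\bmx\|_2^2 = 1$ by the normalization. For each $e$, the Gaussian process $Z_{\bmw} := \langle \bmg,\sum_v \bmw(v)\bmu_v\rangle$ indexed by $\bmw \in B(F_e)$ has variance at most $\sigma_e^2 := \max_{\bmw \in B(F_e)}\|\sum_v \bmw(v)\bmu_v\|_2^2$, which is exactly the SDP contribution of $e$; central symmetry of $B(F_e)$ then gives $f_e(\bmy)^2 = \sup_\bmw Z_\bmw^2$.

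When each $B(F_e)$ has at most $N$ extreme points, the standard maximal-Gaussian inequality yields $\E[f_e(\bmy)^2] \leq O(\log N)\,\sigma_e^2$. Summing over $e$ bounds $\E[\sum_e f_e(\bmy)^2]$ by $O(\log N)$ times the SDP value, hence by $O(\log N)\,\lambda_F$; Markov's inequality combined with independent repetitions boosts the success probability to $9/10$. For the general case, I plan to replace $B(F_e)$ by a net $\caN_e$ with $\log|\caN_e| = O(\log n / \epsilon^2)$, constructed via a Johnson--Lindenstrauss projection of the SDP configuration followed by a regular grid of scale $\epsilon$ in the low-dimensional image; contributions outside the net are controlled by the $\ell_2$-radius $B$ of the base polytopes and produce the additive $O(\epsilon B^2)$ term.

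The main obstacle is this last step: constructing an $\epsilon$-net that is simultaneously (i)~small enough to keep $\log|\caN_e| = O(\log n / \epsilon^2)$, (ii)~fine enough that the maximizer of $\sup_\bmw Z_\bmw^2$ in the true polytope is approximated within $O(\epsilon B)$ by a net point along the Gaussian direction $\bmg$, and (iii)~constructible in $\poly(nm)^{\poly(1/\epsilon)}$ time despite the possibly $n!$-many extreme points of $B(F_e)$. The submodular separation oracle lets us project $B(F_e)$ into the $O(\log n / \epsilon^2)$-dimensional subspace produced by Johnson--Lindenstrauss and grid there without enumerating extreme points, but verifying that the discretized process faithfully tracks $Z_\bmw$ for every $e$ simultaneously, and accounting for the probability of the $\bmg$-dependent bad events, will require the principal technical care.
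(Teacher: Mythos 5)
Your high-level architecture matches the paper's: exploit central symmetry of $B(F_e)$ to write ${f_e(\bmx)}^2 = \max_{\bmw\in B(F_e)}\langle\bmw,\bmx\rangle^2$, relax to a vector program, round with a single Gaussian vector, and pay a logarithmic factor (in the number of constraints per $e$) via the maximal inequality for squared Gaussians. Your treatment of the bounded-$N$ case is essentially the paper's proof of that part of the statement.

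However, the general case has two genuine gaps, concentrated exactly where you flag ``the main obstacle.'' First, the SDP as you pose it has uncountably many constraints $\|\sum_v\bmw(v)\bmu_v\|_2^2 \leq \|\bmeta_e\|_2^2$ for $\bmw\in B(F_e)$, and the separation problem is the maximization of a convex quadratic in $\bmw$ over a base polytope. Submodular function minimization (and the greedy algorithm, which only handles \emph{linear} objectives over $B(F_e)$) does not solve this; the maximum is attained at one of up to $n!$ vertices and the problem is NP-hard in general, so the ellipsoid route does not go through as stated. Second, your net construction is circular: the Johnson--Lindenstrauss projection is applied to ``the SDP configuration,'' which is not available until the SDP is solved, yet the net is needed to make the SDP finite (or else the separation oracle must work, which it does not); and gridding the projected image does not by itself produce points of $B(F_e)$ at which constraints can be imposed. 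The paper resolves precisely this by (i) showing $B(F_e)$ lies in an $\ell_1$-ball of radius $2\|F_e\|_\infty$ (Lemma~\ref{lem:l1-radius-of-base-polytope}), (ii) covering that $\ell_1$-ball with $n^{O(1/\epsilon^2)}$ points via Maurey's empirical method (Lemma~\ref{lem:covering-number-of-l1-ball}) --- done in the original coordinates, before and independently of the SDP --- and (iii) using Wolfe's minimum-norm-point algorithm on shifted base polytopes to retain only cover points lying in $B(F_e)$ (Lemma~\ref{lem:covering-base-polytope}, Theorem~\ref{the:covering-base-polytope}). The resulting explicit $\epsilon\|B(F_e)\|_H$-cover $C_e$ replaces $B(F_e)$ \emph{inside} the SDP, which stays a relaxation since $C_e\subseteq B(F_e)$, and the $\epsilon B^2$ additive loss then enters the rounding analysis through Lemma~\ref{lem:error-by-eps-covering-symmetric}, as you anticipated. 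Without these ingredients your argument cannot be completed. (A minor further point: for the denominator you need an anti-concentration bound such as Proposition~\ref{pro:fact-8.7}, not merely $\E\|\bmx\|_2^2=1$ plus Markov and repetition.)
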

The definition of the base polytope is deferred to Section~\ref{sec:pre}.
We do not need the condition $F(V) = \bmzero$ because it follows from $F(\emptyset)=\bmzero$ and the symmetry of $F$.
The left inequality is trivial because $\lambda_F$ is the minimum of $\caR_F(\bmx)$ subject to $\bmx \neq \bmzero$ and $\langle \bmx,D_F^{1/2}\bmone\rangle = 0$.

In general, if a submodular function is relevant to $r$ variables, then the number of extreme points in its base polytope is $r!$.
However, when the submodular transformation $F$ is constructed from an $r$-uniform hypergraph as in Example~\ref{ex:hypergraph}, the number of extreme points can be bounded by $O(r^2)$, and hence we get an approximation ratio of $O(\log r)$.
This approximation matches the one given in~\cite{Chan:2018eu,Louis:2014tg} and is known to be tight under the SSEH~\cite{Chan:2018eu,Louis:2014tg}.

For general submodular transformations, we give the following algorithm:
\begin{theorem}\label{the:general-eigenvalue}
  There is an algorithm that, given $\epsilon > 0$ and (a value oracle of) a non-negative submodular transformation $F\colon\set{0,1}^V \to \bbR^E$ with $F(\emptyset) = F(V) = \bmzero$, computes a non-zero vector $\bmx \in \bbR^V$ such that $\langle \bmx, D_F^{1/2}\bmone\rangle = 0$ and
  \[
    \lambda_F \leq \caR_F(\bmx) \leq O\Bigl(\frac{\log n \log (n^{1/\epsilon^2}m)}{\epsilon^2}  \lambda_F+\epsilon B^2\Bigr) = O\Bigl(\Bigl(\frac{\log^2 n}{\epsilon^4}+\frac{\log n \log m}{\epsilon^2} \Bigr)   \lambda_F+\epsilon B^2\Bigr),
  \]
  with a probability of at least $9/10$ in ${\poly(nm)}^{\poly(1/\epsilon)}$ time, where $n=|V|$, $m=|E|$, and $B$ is the maximum $\ell_2$-norm of a point in the base polytopes of $F_e$'s.

  If the number of extreme points of the base polytope of each $F_e$ is bounded by $N$, the upper bound and time complexity can be improved to $O\bigl((\log^2 N + \log m \log N) \cdot \lambda_F\bigr)$ and $\poly(nmN)$, respectively.
\end{theorem}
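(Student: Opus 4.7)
The plan is to extend the SDP-based framework behind Theorem~\ref{the:symmetric-eigenvalue} with an additional step that handles the lack of symmetry at the cost of an $O(\log(n^{1/\epsilon^2}m))$ factor. At the top level, I would (i) express $\lambda_F$ as the minimum of a \Lovasz-extension-based objective over vectors orthogonal to $D_F^{1/2}\bmone$, (ii) relax to a vector SDP and solve it using multiplicative weights, and (iii) round the SDP solution through a reduction to a symmetric submodular transformation on an enlarged collection of functions.

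For the SDP step, I would use the identity $\langle \bmx, \caL_F(\bmx)\rangle = \sum_{e \in E} f_e(D_F^{-1/2}\bmx)^2$, where $f_e$ is the \Lovasz extension of $F_e$, to rewrite the Rayleigh quotient as a sum of squared support functions of the base polytopes $B(F_e)$. Lifting each coordinate $\bmx(v)$ to a vector $\bmv_v \in \bbR^d$ produces an SDP whose objective is a sum of squared maxima over $B(F_e)$'s and whose separation oracle reduces to submodular function minimization. Multiplicative weights then yield an approximate solution with additive error $\epsilon B^2$ within the stated time budget, analogously to the symmetric case.

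For the rounding step, I would reduce to the symmetric case by constructing a symmetric submodular transformation $F'$ that mimics $F$ on all cuts relevant to the SDP solution. A natural construction is to replace each asymmetric $F_e$ by a family of symmetric submodular ``threshold surrogates,'' indexed by an $\epsilon$-net of $n^{O(1/\epsilon^2)}$ level values, each capturing the cut structure of $F_e$ at one level. The resulting $F'$ has $n^{O(1/\epsilon^2)} \cdot m$ components, so applying Theorem~\ref{the:symmetric-eigenvalue} to $F'$ yields the extra logarithmic factor $\log(n^{1/\epsilon^2}m)$ beyond the inherent $\log n / \epsilon^2$ factor from the symmetric rounding, matching the claimed product. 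The returned vector, projected onto the subspace orthogonal to $D_F^{1/2}\bmone$, serves as the output $\bmx$.

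The main obstacle is establishing that the Rayleigh quotient transfers between $F'$ and $F$ in both directions: naive symmetrizations such as $F_e \mapsto F_e(\cdot) + F_e(V \setminus \cdot)$ lose only constant factors in the quotient but fail to preserve the orientation information of the eigenvector, so a small $\caR_{F'}(\bmx)$ need not imply a small $\caR_F(\bmx)$. The threshold-based surrogates circumvent this by explicitly encoding each asymmetric ``direction'' of $F_e$ into a separate symmetric component, at the price of the discretization slack $\epsilon B^2$ and the blown-up count $n^{O(1/\epsilon^2)} m$. Carrying out this transfer — while simultaneously verifying that volumes, degrees, and the trivial-eigenvector orthogonality are preserved up to the stated error — constitutes the bulk of the technical work.
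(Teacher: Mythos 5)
Your SDP setup (rewrite the numerator via \Lovasz extensions, replace each $B(F_e)$ by an $\epsilon$-cover of size $n^{O(1/\epsilon^2)}$, solve a polynomial-size vector program) matches the paper's. The rounding step, however, contains the gap: you reduce the general case to the symmetric case via unspecified ``symmetric threshold surrogates,'' and this reduction is exactly the hard part of the theorem, which you defer rather than prove. The paper does \emph{not} reduce to the symmetric case. Instead it keeps the asymmetric objective ${f_e(\bmx)}^2 = \max_{\bmw\in B(F_e)}\max\{\langle\bmw,\bmx\rangle,0\}^2$ (Proposition~\ref{pro:submodular-squared}), adds auxiliary SDP vectors $\bmv_{|\langle\bmw,\bmx\rangle|}$ with constraints tying them to $\|X\bmw\|_2^2$ and to a fixed unit vector $\bmv_1$, and rounds with a \emph{biased} Gaussian $\bmz_\pm(v)=\langle\bmx_v,\bmv_1\rangle\pm\delta\langle P_{\bmv_1^\bot}\bmx_v,\bmg\rangle$ with $\delta=\Theta(1/\sqrt{\log(n^{1/\epsilon^2}m)})$. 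The point of the bias is that covers $\bmw$ with $\langle X\bmw,\bmv_1\rangle\le -1/2$ contribute nothing after rounding (Lemma~\ref{lem:numerator-general-small-bias}), while for the remaining $\bmw$ the SDP objective controls $\|X\bmw\|_2^2$ up to a constant (Lemma~\ref{lem:numerator-general-large-bias-objective}); the second log factor comes from the denominator being only $\Omega(\delta^2)$, not from a larger component count.

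Two concrete problems with your route. First, you never construct the symmetric surrogates or prove the two-directional transfer of the Rayleigh quotient; you yourself note that naive symmetrizations fail to preserve orientation, and a symmetric $G$ must satisfy $G(S)=G(V\setminus S)$, so it is unclear that any family of symmetric submodular functions can encode the sign information that the biased rounding is designed to exploit. Without this, the proposal reduces the theorem to an unproven claim that is at least as hard as the theorem itself. Second, even granting the reduction, your accounting of the approximation factor does not work out: Theorem~\ref{the:symmetric-eigenvalue} gives ratio $O(\log n/\epsilon^2)$ \emph{independent of the number of components}, so applying it to a transformation with $n^{O(1/\epsilon^2)}m$ components would yield $O(\log n/\epsilon^2)$, not the claimed extra $\log(n^{1/\epsilon^2}m)$ factor — which suggests either the reduction must lose that factor elsewhere (where?) or the bookkeeping is inconsistent. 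You would need to supply the surrogate construction, verify its submodularity and symmetry, and prove both directions of the Rayleigh-quotient comparison before this argument could be accepted.
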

Again, the left inequality is trivial.
Although the approximation ratio for the general case is slightly worse than that for the symmetric case, it remains polylogarithmic in $n$ and $m$.

We can obtain approximation algorithms for $\phi_F$ for any submodular transformation $F\colon\bbR^V \to \bbR^E$ by combining Theorem~\ref{the:intro-cheeger} and either Theorem~\ref{the:symmetric-eigenvalue} or Theorem~\ref{the:general-eigenvalue}.
Below, we see some representative examples for which no theoretically guaranteed approximation algorithms have been known.

If the submodular transformation $F\colon\set{0,1}^V \to \bbR^E$ is constructed from a directed graph as in Example~\ref{ex:directed-graph}, then the number of extreme points of each base polytope is two.
Noticing $\log m = O(\log n)$, we obtain the following:
\begin{corollary}
  Let $F\colon\set{0,1}^V \to \bbR$ be as in Example~\ref{ex:directed-graph}.
  Then, there exists a polynomial-time algorithm that outputs $\tilde{\lambda}_F$ such that $\Omega\bigl(\tilde{\lambda}_F/\log n\bigr) \leq \phi_F \leq O\bigl(\sqrt{\tilde{\lambda}_F}\bigr)$.
\end{corollary}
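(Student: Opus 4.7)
The plan is to derive the corollary by composing Theorem~\ref{the:general-eigenvalue} with Theorem~\ref{the:intro-cheeger}, after checking that the directed-graph instance lands in the favourable ``bounded number of extreme points'' regime of Theorem~\ref{the:general-eigenvalue}.

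First, I would unpack the submodular transformation of Example~\ref{ex:directed-graph}: each coordinate $F_e$ is the cut function of a single arc $e = (u,v)$, so $F_e$ depends on only two coordinates and its base polytope is a line segment in $\bbR^{\set{u,v}}$ with exactly $N = 2$ extreme points. Plugging $N = 2$ into the improved (bounded-$N$) bound of Theorem~\ref{the:general-eigenvalue} — in which both the $\poly(1/\epsilon)$ overhead in the running time and the additive $\epsilon B^2$ error are absent — the algorithm runs in $\poly(nm)$ time and returns a non-zero $\bmx$ orthogonal to $D_F^{1/2}\bmone$ with
\[
  \lambda_F \;\leq\; \caR_F(\bmx) \;\leq\; O\bigl((\log^2 N + \log m \cdot \log N)\,\lambda_F\bigr) \;=\; O(\log n \cdot \lambda_F),
\]
where the last step uses $m \leq n^2$ for directed graphs on $n$ vertices. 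I would then set $\tilde\lambda_F := \caR_F(\bmx)$; this scalar is computable from $\bmx$ in polynomial time via the closed form for $\langle \bmx,\caL_F(\bmx)\rangle$ recalled in the excerpt for the directed-graph case.

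Next, I would verify that Theorem~\ref{the:intro-cheeger} applies to this $F$: each $F_e$ satisfies $F_e(\emptyset) = F_e(V) = 0$ and $F_e(S) \in \set{0,1} \subseteq [0,1]$. Hence $\lambda_F/2 \leq \phi_F \leq 2\sqrt{\lambda_F}$. Chaining this with the two-sided bound on $\tilde\lambda_F$ gives
\[
  \phi_F \;\geq\; \lambda_F/2 \;=\; \Omega(\tilde\lambda_F/\log n), \qquad \phi_F \;\leq\; 2\sqrt{\lambda_F} \;\leq\; 2\sqrt{\tilde\lambda_F} \;=\; O\bigl(\sqrt{\tilde\lambda_F}\bigr),
\]
which is exactly the inequality claimed.

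There is no real obstacle here: the corollary is essentially a repackaging of two previously stated theorems of the paper. The only substantive check is that the per-arc cut function has a two-point base polytope, which is what lets us avoid both the subpolynomial-in-$n$ blowup from the $1/\epsilon^2$ factors in Theorem~\ref{the:general-eigenvalue} and the additive $\epsilon B^2$ slack, yielding a clean polynomial-time algorithm with a purely multiplicative $O(\log n)$ guarantee.
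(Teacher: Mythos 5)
Your proof is correct and follows the paper's own route: the paper likewise observes that each per-arc base polytope has exactly two extreme points, invokes the bounded-$N$ case of Theorem~\ref{the:general-eigenvalue} with $N=2$ and $\log m = O(\log n)$, and chains the resulting $O(\log n)$-approximation of $\lambda_F$ with Theorem~\ref{the:intro-cheeger}. No gaps.
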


If the submodular transformation $F\colon\set{0,1}^V \to \bbR^E$ is constructed as in Example~\ref{ex:map}, where the underlying hypergraph is $r$-uniform, then the number of extreme points of each base polytope is $O(r!)$ and $|E| \leq \binom{n}{r}$.
Thus, we obtain the following:
\begin{corollary}
  Let $F\colon\set{0,1}^V \to \bbR$ be as in Example~\ref{ex:map}, where the underlying hypergraph is $r$-uniform for some constant $r$.
  Then, there exists a polynomial-time algorithm that outputs $\tilde{\lambda}_F$ such that $\Omega\bigl(\tilde{\lambda}_F/(r^2\log r \log n)\bigr) \leq \phi_F \leq O\bigl(\sqrt{\tilde{\lambda}_F}\bigr)$.
\end{corollary}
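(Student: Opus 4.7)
The plan is to instantiate the general eigenvalue approximation algorithm (Theorem~\ref{the:general-eigenvalue}) on the transformation from Example~\ref{ex:map}, bound its approximation ratio via structural properties of an $r$-uniform hypergraph, and then combine the resulting estimate of $\lambda_F$ with the Cheeger inequality (Theorem~\ref{the:intro-cheeger}) to obtain the two desired bounds on $\phi_F$.

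First I would observe that each $F_e$ depends only on the $r$ coordinates indexed by $e$, so its base polytope is (a lift of) a submodular base polytope in $\bbR^r$; by Edmonds' characterization every extreme point comes from a permutation of the ground set, giving $N := r!$ as an upper bound on the number of extreme points of each $B(F_e)$. Second, since the underlying hypergraph is $r$-uniform on $n$ vertices, $m = |E| \leq \binom{n}{r}$ and hence $\log m = O(r \log n)$. Because $r$ is a constant, $N$ is a constant too, so we are allowed to invoke the ``bounded $N$'' variant of Theorem~\ref{the:general-eigenvalue}, which runs in $\poly(nmN) = \poly(n)$ time and produces a vector $\bmx$ orthogonal to $D_F^{1/2}\bmone$ satisfying
\[
  \lambda_F \leq \caR_F(\bmx) \leq O\bigl((\log^2 N + \log m \cdot \log N)\lambda_F\bigr).
\]
Plugging in $\log N = O(r\log r)$ and $\log m = O(r\log n)$, the right-hand side becomes $O(r^2\log^2 r + r^2 \log r \log n)\,\lambda_F = O(r^2 \log r \log n)\,\lambda_F$, where I absorb the $\log^2 r$ term since $r \leq n$.

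Setting $\tilde{\lambda}_F := \caR_F(\bmx)$, the two-sided bound on $\caR_F(\bmx)$ rearranges to $\Omega(\tilde{\lambda}_F / (r^2 \log r \log n)) \leq \lambda_F \leq \tilde{\lambda}_F$. Combining the lower estimate with the left half of Theorem~\ref{the:intro-cheeger} gives $\phi_F \geq \lambda_F/2 \geq \Omega(\tilde{\lambda}_F/(r^2 \log r \log n))$, and combining the upper estimate with the right half gives $\phi_F \leq 2\sqrt{\lambda_F} \leq 2\sqrt{\tilde{\lambda}_F} = O(\sqrt{\tilde{\lambda}_F})$. Together these yield the stated corollary.

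The content of the argument is essentially bookkeeping once the right variant of Theorem~\ref{the:general-eigenvalue} is used; the only mild subtlety to watch for is the additive $\epsilon B^2$ error term that appears in the general statement. The reason I can avoid choosing $\epsilon$ small (which would blow up the runtime to $\poly(nm)^{\poly(1/\epsilon)}$ and lose polynomial time) is precisely that the $r$-uniform assumption makes $N$ a constant, putting us in the sharper regime of Theorem~\ref{the:general-eigenvalue} whose approximation ratio is purely multiplicative. This is the one delicate point in an otherwise direct reduction.
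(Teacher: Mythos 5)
Your proposal is correct and follows exactly the route the paper intends: bound the number of extreme points of each $B(F_e)$ by $r!$, note $\log m = O(r\log n)$, invoke the bounded-$N$ variant of Theorem~\ref{the:general-eigenvalue} (which indeed sidesteps the additive $\epsilon B^2$ error and the $\poly(1/\epsilon)$ runtime), and combine with Theorem~\ref{the:intro-cheeger}. The paper gives only the one-sentence justification preceding the corollary, and your writeup is a faithful, correctly bookkept expansion of it.
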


For the mutual information explained in Example~\ref{ex:mutual-information}, we have $B^2 \leq 1/H(V)$, and hence the approximated eigenvalue $\tilde{\lambda}_F$ satisfies $\lambda_F \leq \tilde{\lambda}_F \leq O(\log n /\epsilon^2 \cdot \lambda_F  + \epsilon/H(V))$.
Then, we obtain the following:
\begin{corollary}
  Let $F\colon\set{0,1}^V \to \bbR$  be as in Example~\ref{ex:mutual-information}.
  Then, for any $\epsilon > 0$, there exists a polynomial-time algorithm that outputs $\tilde{\lambda}_F$ such that $\Omega\bigl(\epsilon^2(\tilde{\lambda}_F-\epsilon/H(V))/\log n\bigr) \leq \phi_F \leq O\bigl(\sqrt{\tilde{\lambda}_F}\bigr)$.
\end{corollary}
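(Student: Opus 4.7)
The plan is to derive this corollary by combining Theorem~\ref{the:intro-cheeger} with Theorem~\ref{the:symmetric-eigenvalue} applied to the single-valued function $F = \caI(\,\cdot\,;V\setminus\,\cdot\,)/H(V)$ of Example~\ref{ex:mutual-information}. First I would verify the hypotheses of both theorems: mutual information is non-negative, symmetric under swapping its two arguments, and submodular, so $F$ is a non-negative symmetric submodular transformation with $F(\emptyset)=F(V)=0$; moreover $\caI(S;V\setminus S)\le H(V)$ forces $F(S)\in[0,1]$, so Theorem~\ref{the:intro-cheeger} immediately yields $\lambda_F/2 \le \phi_F \le 2\sqrt{\lambda_F}$.

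Next I would invoke Theorem~\ref{the:symmetric-eigenvalue}. Since the codomain of $F$ is one-dimensional ($m=1$), the running time simplifies to $\poly(n)^{\poly(1/\epsilon)}$, and the theorem outputs $\bmx$ with $\langle\bmx, D_F^{1/2}\bmone\rangle=0$ whose Rayleigh quotient $\tilde\lambda_F := \caR_F(\bmx)$ satisfies $\lambda_F \le \tilde\lambda_F \le O((\log n/\epsilon^2)\lambda_F + \epsilon B^2)$, where $B$ is the maximum $\ell_2$-norm of a point in the base polytope of $F$. Using the bound $B^2 \le 1/H(V)$ stated in the text just before the corollary, this becomes $\tilde\lambda_F \le O((\log n/\epsilon^2)\lambda_F + \epsilon/H(V))$. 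Solving for $\lambda_F$ yields $\lambda_F \ge \Omega(\epsilon^2(\tilde\lambda_F - \epsilon/H(V))/\log n)$, and chaining with the two Cheeger bounds gives
\[
\Omega\Bigl(\frac{\epsilon^2(\tilde\lambda_F-\epsilon/H(V))}{\log n}\Bigr) \;\le\; \frac{\lambda_F}{2} \;\le\; \phi_F \;\le\; 2\sqrt{\lambda_F} \;\le\; 2\sqrt{\tilde\lambda_F},
\]
which is precisely the asserted pair of inequalities.

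The only step that is not pure bookkeeping is justifying $B^2 \le 1/H(V)$. An extreme point of the base polytope is indexed by a permutation $(v_1,\dots,v_n)$, and its $v_i$-th coordinate equals $F(S_i)-F(S_{i-1}) = [H(v_i\mid S_{i-1}) - H(v_i \mid V\setminus S_i)]/H(V)$ after expanding via the identity $\caI(S;V\setminus S) = H(S) + H(V\setminus S) - H(V)$, i.e.\ a normalized difference of conditional entropies of $v_i$ given disjoint subsets of $V\setminus\{v_i\}$. The hard part will be to square and sum these differences across the chain and to show the total is at most $1/H(V)$, which I would attempt via a telescoping identity combined with Shannon-type inequalities; this is the one place in the argument where the specific algebra of entropy, rather than generic submodularity, must be exploited.
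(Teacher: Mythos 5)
Your proposal follows the paper's route exactly: the corollary is obtained by chaining Theorem~\ref{the:intro-cheeger} with Theorem~\ref{the:symmetric-eigenvalue} (applied with $m=1$ to the symmetric, non-negative, $[0,1]$-valued submodular function $\caI(\cdot\,;V\setminus\cdot)/H(V)$) and the bound $B^2 \leq 1/H(V)$, which the paper itself merely asserts in the text preceding the corollary rather than proving. The one step you leave open is easier than you anticipate: by Lemma~\ref{lem:extreme-point} each coordinate of an extreme point $\bmw$ of $B(F)$ equals $\bigl(H(v_i \mid S_{i-1}) - H(v_i \mid V\setminus S_i)\bigr)/H(V)$, a difference of two quantities each lying in $[0,H(v_i)]$, so $\|\bmw\|_\infty \leq \max_v H(v)/H(V) = O(1/H(V))$ for variables of bounded entropy; combining this with $\|\bmw\|_1 \leq 2\|F\|_\infty \leq 2$ from Lemma~\ref{lem:l1-radius-of-base-polytope} gives $\|\bmw\|_2^2 \leq \|\bmw\|_\infty \|\bmw\|_1 = O(1/H(V))$, with no telescoping or Shannon-type chaining required (though note the bound does implicitly require each individual variable to have $O(1)$ entropy, an assumption the paper leaves unstated).
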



\begin{remark}
  After a preprint of this work was posted on arXiv, Li and Milenkovic~\cite{Li:2018we} independently proposed and considered Laplacians for symmetric submodular transformations.
  They derived Cheeger's inequality (with a slight difference in the defininition of conductance) and showed that the SDP-based algorithm we use in Theorem~\ref{the:symmetric-eigenvalue} gives $O(\sqrt{r})$-approximation (with no additive error) to $\caR_F(\bmx)$.
\end{remark}

\subsection{Proof sketch}
The proof of our Cheeger inequality for submodular transformations (Theorem~\ref{the:intro-cheeger}) is similar to those of the existing Cheeger inequalities~\cite{Alon:1986gz,Alon:1985jg,Chan:2018eu,Louis:2014tg,Yoshida:2016ig}, although we have to use some specific properties of submodular functions.

In order to prove Theorem~\ref{the:symmetric-eigenvalue} and~\ref{the:general-eigenvalue}, that is, to approximate the smallest non-trivial eigenvalue of the normalized Laplacian of a submodular transformation, we use semidefinite programming (SDP).
To this end, we first rephrase its Rayleigh quotient using \Lovasz extensions.
More specifically, for a submodular transformation $F\colon\set{0,1}^V \to \bbR^E$, the numerator of $\caR_F(\bmx)$ can be written as
\begin{align}
  \bigl\langle \bmx,\caL_F(\bmx) \bigr\rangle = \sum_{e \in E}{f_e(\bmx)}^2= \sum_{e \in E}{\bigl(\max_{\bmw \in B(F_e)}\langle \bmw,\bmx\rangle\bigr)}^2, \label{eq:intro-numerator}
\end{align}
where $f_e\colon\bbR^V \to \bbR$ is the \Lovasz extension of $F_e$.
Now the goal is to minimize this numerator~\eqref{eq:intro-numerator} subject to $\|\bmx\|_2^2 = 1$ and $\langle \bmx, D_F^{1/2}\bmone\rangle=0$.

In the symmetric case, we can show that it is possible to further rephrase the numerator of $\caR_F(\bmx)$ as
\[
  \bigl\langle \bmx,\caL_F(\bmx) \bigr\rangle = \sum_{e \in E}{f_e(\bmx)}^2 = \sum_{e \in E}\max_{\bmw \in B(F_e)}\langle \bmw,\bmx\rangle^2.
\]
A problem here is that $B(F_e)$ is a polytope and we cannot express the maximum over $B(F_e)$ in an SDP\@.
Although it is not difficult to show that we only have to take the maximum over extreme points of $B(F_e)$, the number of extreme points can be $n!$ in general, which is prohibitively large.
(We can bypass this issue when the number of extreme points in each $B(F_e)$ is small.)
To address this issue, we replace $B(F_e)$ with an $\epsilon B$-cover $C_e \subseteq B(F_e)$ (see Theorem~\ref{the:symmetric-eigenvalue} for the definition of $B$), which is a set of points such that for any $\bmw \in B(F_e)$, there exists a point $\bmp \in C_e$ with $\|\bmp - \bmw\|_2 \leq \epsilon B$.
Using the properties of submodular functions, we can show that there is an $\epsilon B$-cover of size roughly $O(n^{1/\epsilon^2})$ (instead of being exponential in $n$), and we can efficiently compute it by exploiting Wolfe's algorithm~\cite{Wolfe:1976dg}, which is useful for judging whether a given point is close to a base polytope.
Then, we can solve the resulting SDP in polynomial time in $n$ and $m$.
The additive error of $\epsilon B$ in Theorem~\ref{the:symmetric-eigenvalue} (and Theorem~\ref{the:general-eigenvalue} as well) occurs when replacing $B(F_e)$ by its $\epsilon B$-cover $C_e$.
Also, we show that the bound of $O(n^{1/\epsilon^2})$ is almost that in Appendix~\ref{apx:covering}.

For each variable $\bmx(v)$ in the Rayleigh quotient, we introduce an SDP variable $\bmx_v \in \bbR^N$ for a large $N \geq n$.
Then after solving the SDP, we round the obtained solution $\set{\bmx_v}_{v \in V}$ using the Gaussian rounding, that is, $\bmx_v \mapsto \langle \bmx_v, \bmg\rangle =: \bmz(v)$, where $\bmg \in \bbR^N$ is sampled from a standard normal distribution $\caN(\bmzero,I_N)$.
Then, we can show that the value of $\sum_{e \in E}{f_e(\bmz)}^2 = \sum_{e \in E}\max_{\bmw \in B(F_e)}\langle \bmw,\bmz\rangle^2$ is roughly equal to $\sum_{e \in E}\max_{\bmw \in C_e}\langle \bmw,\bmz\rangle^2$.
Note that, as each $\bmz(v)\;(v\in V)$ is normally distributed, $\langle \bmw,\bmz\rangle$ for each $\bmw \in C_e$ acts as a normal random variable.
Then, the value  $\sum_{e \in E}\max_{\bmw \in C_e}\langle \bmw,\bmz\rangle^2$ is larger than the SDP value by a factor of $O(\max_{e \in E}\log |C_e|) = O((\log n)/\epsilon^2)$, caused when taking the maximum of $|C_e|$ many squared normal variables for each $e \in E$.
We can also show that the denominator is at least half and the constraint $\langle \bmz,D_F^{1/2}\rangle=0$ is satisfied with high probability, and hence we establish Theorem~\ref{the:symmetric-eigenvalue}.

The general case is more involved as we should stick to the numerator of the form~\eqref{eq:intro-numerator}.
To see the difficulty, suppose that the numerator of the Rayleigh quotient is zero in the SDP relaxation, that is, we obtained an SDP solution $\set{\bmx_v}_{v \in V}$ satisfying $\max_{\bmw \in C_e}\sum_{v \in V}\bmw(v) \langle \bmx_v,\bmv_1\rangle \leq 0$ for every $e \in E$, where $\bmv_1 \in \bbR^N$ is a unit vector representing the value of one.
Here, this value is supposed to represent $f_e(\bmx) = \max_{\bmw \in C_e}\langle \bmw,\bmx\rangle$.
Hence for the vector $\bmz \in \bbR^V$ obtained by rounding $\set{\bmx_v}_{v \in V}$, we expect that $f_e(\bmz) \leq 0$.
However, if we adopt the Gaussian rounding as with the symmetric case, then $\langle \bmw,\bmz\rangle$ for each $\bmw \in C_e$ acts as a normal random variable.
This means that, with a high probability, we have $f(\bmz) = \max_{\bmw \in C_e}\langle \bmw,\bmz\rangle > 0$, and hence the approximation ratio can be arbitrarily large.

The above-mentioned problem is avoided by decomposing $\bmx_v$ as $\langle \bmx_v,\bmv_1\rangle \bmv_1 + P_{\bmv_1^\bot}\bmx$, where $P_{\bmv_1^\bot} \in \bbR^{N \times N}$ is the projection matrix to the subspace orthogonal to $\bmv_1$.
Then, we construct two vectors $\bmz_+ \in \bbR^V$ and $\bmz_- \in \bbR^V$ such that $\bmz_+(v) = \langle\bmx_v,\bmv_1\rangle  + \delta \langle P_{\bmv_1^\bot} \bmx_v, \bmg\rangle$ and $\bmz_-(v) = \langle\bmx_v,\bmv_1\rangle - \delta \langle P_{\bmv_1^\bot} \bmx_v, \bmg\rangle$ for each $v \in V$, where $\delta = 1/\log(mn^{1/\epsilon^2})$ and $\bmg \in \bbR^N$ is sampled from the standard normal distribution $\caN(\bmzero,I_N)$.
This rounding procedure places more importance on the direction $\bmv_1$ than on other directions.
Then, with an additional constraint in the SDP, we can show that the
Rayleigh quotient of at least one of them achieves $O(\log n\log(mn^{1/\epsilon^2})/\epsilon^2)$-approximation.

We have mentioned that the smallest non-trivial eigenvalue $\lambda_F \geq 0$ of the normalized Laplacian $\caL_F\colon\bbR^V \to \set{0,1}^{\bbR^V}$ of a submodular transformation $F\colon\set{0,1}^V \to \bbR^E$ is obtained as the minimum of the Rayleigh quotient $\caR_F(\bmx)$ subject to $\bmx \neq \bmzero$ and $\langle \bmx, D_F^{1/2}\bmone \rangle= \bmzero$.
As opposed to symmetric matrices, the relation between the eigenvalues of $\caL_F$ and the Rayleigh quotient $\caR_F$ is not immediate because $\caL_F$ is not a linear transformation.
Indeed, it is not clear whether $\caL_F$ has a non-trivial eigenvalue at all.
To show this, we consider the following diffusion process associated with $\caL_F$: $\frac{\dx}{\dt} \in -\caL_F(\bmx)$, that is, at each moment we move the current vector $\bmx \in \bbR^V$ to a direction chosen from $-\caL_F(\bmx)$.
The idea of using such a diffusion process was already mentioned in~\cite{Chan:2018eu,Louis:2014tg,Yoshida:2016ig}.
We note that we can show that there is a (unique) solution to this difussion process using the theory of diffusion inclusion~\cite{Aubin:2012wi} or the theory of monotone operators and evolution equations~\cite{miyadera1992nonlinear}.
See~\cite{Ikeda:2018tu} for more details.




\subsection{Discussions}
For undirected graphs, several extensions of the Cheeger inequality have been proposed.
For a graph $G=(V,E)$, the \emph{order-$k$ conductance} of $k$ disjoint vertex sets $S_1,\ldots,S_k\subseteq V$ is defined as their maximum conductance, and the \emph{order-$k$ conductance} of a graph is the minimum order-$k$ conductance of $k$ disjoint vertex sets taken from the graph.
Then, the higher order Cheeger inequality~\cite{Lee:2014hi,Louis:2012iva} bounds the order-$k$ conductance of a graph from below and above by the $k$-th smallest eigenvalue of its normalized Laplacian.
The standard conductance is also analyzed using the $k$-th smallest eigenvalue~\cite{Kwok:2017ga,Kwok:2013jd}.
In~\cite{Trevisan:2009jy}, it is argued that the largest eigenvalue of a normalized Laplacian can be used to bound from below and above the bipartiteness ratio, which measures the extent to which the graph is approximated by a bipartite graph.
Its higher order version is also studied~\cite{Liu:2015kc}.
It would be interesting to generalize these extended Cheeger inequalities for submodular transformations.

We believe that the spectral theory on submodular transformations will have many applications in theory and practice beyond Cheeger inequalities studied here.
For example, in a follow-up work, Fujii~\emph{et~al.}~\cite{Fujii:2018vi} studied solving Laplacian systems of the form $L_F(\bmx) = \bmb$, where $F\colon\bbR^V \to \bbR^E$ is a submodular transformation and $\bmb\in \bbR^V$ is a vector, and showed applications in semi-supervised learning and network science.

We believe that the notion of a submodular transformation will be useful not only for generalizing spectral graph theory but also for analyzing various problems that involve piecewise linear functions.
To see this, we introduce the notion of a \emph{\Lovasz transformation}, which is a function of the form $f\colon\bbR^V \to \bbR^E$ such that $f_e\colon\bmx \mapsto f(\bmx)(e)$ is the \Lovasz extension of some submodular function for each $e \in E$.
\Lovasz transformations are piecewise linear in general, and their compositions can express various functions including deep neural networks (see~Section~\ref{sec:dnn} for more details).
We believe that this connection sheds new light on piecewise lienar functions and the submodularity behind \Lovasz transformations will be useful to analyze piecewise linear functions.

\subsection{Organization}
In Section~\ref{sec:pre}, we review basic properties of submodular functions.
In Section~\ref{sec:Laplacian}, we formally define submodular transformation and its Laplacian, and observe their basic properties.
We prove the Cheeger inequality for submodular transformations in Section~\ref{sec:cheeger}.
We consider the covering number of the base polytope of a submodular function in Section~\ref{sec:covering}.
Then, we provide polynomial-time approximation algorithms for the smallest non-trivial eigenvalue of a normalized submodular Laplacian for the symmetric and general cases in Sections~\ref{sec:symmetric} and~\ref{sec:general}, respectively.
In Section~\ref{sec:eigen}, we show that the (normalized) Laplacian of a submodular transformation has a non-trivial eigenvalue and it can be obtained by minimizing the Rayleigh quotient.


\section{Preliminaries}\label{sec:pre}
For an integer $n \in \bbN$, we define $[n]$ as the set $\set{1,2,\ldots,n}$.
For a subset $S \subseteq V$, we define $\bmone_S \in \bbR^n$ as the indicator vector of $S$, that is, $\bmone_S(v) = 1$ if $v \in S$ and $\bmone_S(v) = 0$ otherwise.
When $S = V$, we simply write $\bmone$.
For a vector $\bmx \in \bbR^V$ and a subset $S \subseteq V$, we define $\bmx|_S \in \bbR^V$ as the vector such that $\bmx|_S(v) = \bmx(v)$ for every $v \in S$ and $\bmx|_S(v) = 0$ for every $v \in V \setminus S$.
The \emph{support} of a vector $\bmx \in \bbR^n$, denoted by $\supp(\bmx)$, is defined as the set $\set{v \in V \mid \bmx(v) \neq 0}$.
For a polyhedron $P \subseteq \bbR^V$, $\bmp \in \bbR^V$, and $r>0$, we define $\bmp + P = \set{\bmp + \bmx \mid \bmp \in P}$ and $rP = \set{r\bmx \mid \bmx \in P}$.
For a polytope $P$, we define $\|P\|_H = \max_{\bmp \in P}\|\bmp\|_2$ as the maximum $\ell_2$-norm of a point in $P$.

For a set function $F\colon\set{0,1}^V\to \bbR$, we define $\|F\|_\infty = \max_{S \subseteq V}F(S)$.
For a set function $F\colon\set{0,1}^V \to \bbR$, a set $S \subseteq V$, and an element $v \in V \setminus S$, we define $f(v \mid S) $ as the marginal gain $f(S \cup \set{v}) - f(S)$.


A function $F\colon\set{0,1}^V \to \bbR$ is referred to as \emph{submodular} if
\[
  f(S) + f(T) \geq f(S \cup T) + f(S \cap T)
\]
for every $S,T \subseteq V$.
We say that a function $F\colon\set{0,1}^V \to \bbR$ is \emph{symmetric} if $F(S) = F(V\setminus S)$ for every $S \subseteq V$.
A submodular function $F\colon\set{0,1}^V\to \bbR$ is referred to as \emph{normalized} if $F(\emptyset) = 0$.
In this work, we only consider normalized submodular functions.

We consider a variable $v \in V$ \emph{relevant} in $F\colon\set{0,1}^V \to \bbR$ if adding (or removing) $v$ from the input set may change the value of $F$, that is, there exists some $S \subseteq V \setminus \set{v}$ such that $F(S) \neq F(S \cup \set{v})$.
We consider $v$ \emph{irrelevant} otherwise.
The \emph{support} of $F$, denoted by $\supp(F)$, is the set of relevant variables of $F$.


Let $F\colon\set{0,1}^V \to \bbR$ be a submodular function.
The \emph{submodular polyhedron} $P(F)$ and the \emph{base polytope} $B(F)$ of $F$ are defined as
\[
  P(F) = \Bigl\{\bmx \in \bbR^V \mid  \sum_{v \in S}\bmx(v) \leq F(S)\; \forall S \subseteq V\Bigr\}
  \quad \text{and} \quad
  B(F) = \Bigl\{\bmx \in P(F) \mid \sum_{v \in V}\bmx(v) =F(V)\Bigr\}.
\]
As the name suggests, it is known that the base polytope is bounded (Theorem~3.12 of~\cite{Fujishige:2005vx}).

The \emph{\Lovasz extension} $f\colon\bbR^V \to \bbR$ of a submodular function $F\colon\set{0,1}^V \to \bbR$ is defined as
\[
  f(\bmx) = \max_{\bmw \in B(F)} \langle \bmw, \bmx\rangle.
\]
We note that $f(\bmone_S) = F(S)$ for every $S \subseteq V$ and hence we can uniquely recover a submodular function from its \Lovasz extension.

We define $\partial f(\bmx) = \argmax_{\bmw \in B(F)}\langle \bmw,\bmx\rangle$\footnote{We adopted the notation $\partial f(\bmx)$ because each vector in $\partial f(\bmx)$ is a subgradient of $f$ at $\bmx$~\cite{Fujishige:2005vx}.  However, we do not use this property in the work presented in this paper.} as the set of vectors $\bmw \in B(F)$ that attains $f(\bmx)$.
The following is well known:
\begin{lemma}[Theorem~3.22 of~\cite{Fujishige:2005vx}]\label{lem:extreme-point}
  Let $f\colon\bbR^V \to \bbR$ be the \Lovasz extension of a submodular function.
  Then, every extreme point $\bmw$ of $\partial f(\bmx)$ is obtained as follows:
  Let $v_1,\ldots,v_n$ be an ordering of $V$ with $|V| = n$ such that $\bmx(v_1) \geq \cdots \geq \bmx(v_n)$.
  Then, $\bmw(v_i) = f(v_i\mid \set{v_1,\ldots,v_{i-1}} )$ for every $i \in [n]$.

  In particular, every extreme point of $\partial f(\bmzero) = B(F)$ can be obtained by following this approach by setting $\bmx = \bmzero$.
\end{lemma}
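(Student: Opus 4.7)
This is the classical greedy characterization of extreme points of base polytopes (Edmonds). The plan is to first show that any vector $\bmw$ constructed from an ordering $v_1,\ldots,v_n$ compatible with $\bmx$ (in the sense that $\bmx(v_1)\geq\cdots\geq\bmx(v_n)$) lies in $B(F)$ and attains $\langle\bmw,\bmx\rangle = f(\bmx)$, hence is in $\partial f(\bmx)$; and then show that these vectors are exactly the extreme points of $\partial f(\bmx)$.

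Write $S_i = \{v_1,\ldots,v_i\}$ (with $S_0=\emptyset$) and define $\bmw(v_i) = F(S_i)-F(S_{i-1})$. First I would verify the submodular polyhedron constraints: for an arbitrary $T\subseteq V$, list its elements as $v_{i_1},\ldots,v_{i_k}$ with $i_1<\cdots<i_k$ and use submodularity element by element to bound $F(S_{i_j})-F(S_{i_j-1}) \leq F(\{v_{i_1},\ldots,v_{i_j}\}) - F(\{v_{i_1},\ldots,v_{i_{j-1}}\})$; summing telescopes to $\sum_{v\in T}\bmw(v)\leq F(T)$. Equality at $T=V$ is immediate by telescoping, so $\bmw\in B(F)$.

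Next I would compute $\langle\bmw,\bmx\rangle$ via Abel summation. For any $\bmy\in B(F)$, setting $Y_i=\sum_{j\leq i}\bmy(v_j)$,
\[
  \langle\bmy,\bmx\rangle = \sum_{i=1}^{n-1} Y_i\bigl(\bmx(v_i)-\bmx(v_{i+1})\bigr) + Y_n\,\bmx(v_n).
\]
Since $\bmx(v_i)-\bmx(v_{i+1})\geq 0$ by the choice of ordering, and $Y_i\leq F(S_i)$ with $Y_n=F(V)$, the right-hand side is maximized coordinatewise when $Y_i=F(S_i)$ for all $i$, i.e., when $\bmy=\bmw$. This shows $\langle\bmw,\bmx\rangle=f(\bmx)$ and that every $\bmy\in\partial f(\bmx)$ satisfies $\sum_{j\leq i}\bmy(v_j) = F(S_i)$ for every $i$ such that $\bmx(v_i)>\bmx(v_{i+1})$.

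For the converse direction, I would argue that $\partial f(\bmx)$ is the face of $B(F)$ on which the linear functional $\langle\cdot,\bmx\rangle$ is maximized, so its extreme points are a subset of the extreme points of $B(F)$. Take any extreme point $\bmw^\star$ of $\partial f(\bmx)$; partition $V$ into level sets of $\bmx$, and within each level set choose an ordering making $\bmw^\star$ an extreme point of the corresponding face. Concatenating the per-level orderings (in decreasing order of $\bmx$-value) yields a global ordering compatible with $\bmx$, and the greedy construction on that ordering recovers $\bmw^\star$. The main technical point is this final step: one must show that when $\bmx$ has ties, any extreme point of the face $\partial f(\bmx)$ corresponds to \emph{some} tie-breaking order, which reduces to the fact that the face of $B(F)$ obtained by fixing $\sum_{v\in S_i}\bmw(v)=F(S_i)$ for all $i$ is a singleton — a consequence of the dimension count from the tight chain $S_0\subsetneq S_1\subsetneq\cdots\subsetneq S_n$. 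Specializing $\bmx=\bmzero$ makes the ordering arbitrary and yields the final sentence about extreme points of $B(F)=\partial f(\bmzero)$.
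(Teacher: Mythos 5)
The paper does not prove this lemma at all; it is imported verbatim as Theorem~3.22 of Fujishige's book, so there is no in-paper argument to compare against. Your plan is the standard Edmonds greedy argument, and the forward half is correct and essentially complete: the telescoping/diminishing-returns verification that the greedy vector lies in $B(F)$, and the Abel-summation computation showing it maximizes $\langle\cdot,\bmx\rangle$ over $B(F)$ (using $\bmx(v_i)-\bmx(v_{i+1})\geq 0$ and $Y_i\leq F(S_i)$, $Y_n=F(V)$), are exactly right. That part also correctly extracts the useful byproduct that every maximizer is tight on the prefixes where $\bmx$ strictly decreases.

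The gap is in the converse, which is the direction the lemma actually asserts. Your claimed reduction --- that the tie case ``reduces to the fact that the face of $B(F)$ obtained by fixing $\bmw(S_i)=F(S_i)$ for all $i$ is a singleton'' --- proves the wrong implication: the singleton/dimension count shows that each compatible ordering \emph{produces} a vertex of $\partial f(\bmx)$, not that every vertex of $\partial f(\bmx)$ is produced by some compatible ordering. To get the latter you need one of two additional ingredients you have not supplied. Either (i) the lattice structure of tight sets: at any $\bmw^\star\in B(F)$ the tight sets form a distributive lattice (closed under union and intersection by submodularity), a point is a vertex iff this lattice contains a maximal chain of length $n$, and since the level-set prefixes $T_1\subsetneq\cdots\subsetneq T_k$ are tight for every point of the face, such a maximal chain can be chosen to refine them, yielding the compatible ordering. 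Or (ii) a perturbation argument: a vertex $\bmw^\star$ of $\partial f(\bmx)$ is the unique maximizer of $\langle\cdot,\bmx+\epsilon\bmc\rangle$ over $B(F)$ for a generic $\bmc$ and small $\epsilon>0$; since $\bmx+\epsilon\bmc$ has no ties, the tie-free case you already proved identifies $\bmw^\star$ as the greedy vector for an ordering refining that of $\bmx$. Your product-of-minors picture for the face is the right intuition, but as written it silently invokes the base case (``every vertex of a base polytope is a greedy vector for some ordering'') that the singleton observation does not deliver. With either (i) or (ii) added, the proof is complete.
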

The algorithm for computing $\bmw \in \partial f(\bmx)$ based on the ordering of values in $\bmx$ is known as Edmonds' algorithm in the literature.
By Lemma~\ref{lem:extreme-point}, as long as the ordering of values $\bmx(v)\;(v\in V)$ does not change, we can use the same $\bmw \in \bbR^V$  for computing $f(\bmx)$.



\section{Submodular Transformations and their Laplacians}\label{sec:Laplacian}
In this section, we introduce the notion of a submodular transformation and its Laplacian and normalized Laplacian.

For a function $F\colon \set{0,1}^V \to \bbR^E$ and $e \in E$, let $F_e\colon \bbR^V \to \bbR$ be the $e$-th component of $F$, that is, $F_e\colon \bmx \mapsto F(\bmx)(e)$.
Then, we define a submodular transformation as follows:
\begin{definition}[Submodular transformation]\label{def:submodular-transformation}
  We say that $F\colon \set{0,1}^V \to \bbR^E$ is a submodular transformation if the function $F_e : \bbR^V \to \bbR$ is a submodular function for every $e \in E$.
\end{definition}
For a submodular transformation $F\colon \set{0,1}^V \to \bbR^E$, we always use the symbols $n$ and $m$ to denote $|V|$ and $|E|$.
We say that a submodular transformation $F\colon \set{0,1}^V \to \bbR^E$ is \emph{symmetric} if $F(S) = F(V \setminus S)$ for every $S \subseteq V$.
The \emph{\Lovasz extension} $f:\bbR^V \to \bbR^E$ of a submodular transformation $F\colon \set{0,1}^V \to \bbR^E$ is such that $f_e:\bmx \mapsto f(\bmx)(e)$ is the \Lovasz extension of $F_e$ for each $e \in E$.
The \Lovasz extensions of submodular transformations are collectively referred to as \emph{\Lovasz transformations}.
For a submodular transformation $F\colon \set{0,1}^V \to \bbR$, we will use symbols $f$ and $f_e\;(e\in E)$ to denote those functions.


In Section~\ref{subsec:definition}, we define the Laplacian of a submodular transformation, which we collectively refer to as a submodular Laplacian, and study its basic spectral properties.
In Section~\ref{subsec:normalized}, we discuss the normalized version of a submodular Laplacian.

\subsection{Submodular Laplacians}\label{subsec:definition}

We define the Laplacian associated with a submodular transformation as follows:
\begin{definition}[Submodular Laplacian]\label{def:submodular-Laplacian}
  Let $F\colon\set{0,1}^V \to \bbR^E$ be a submodular transformation.
  Then, the Laplacian $L_F\colon\bbR^V \to \set{0,1}^{\bbR^V}$ of $F$ is defined as
  \[
    L_F(\bmx) = \Bigset{\sum_{e \in E} \bmw_e \langle \bmw_e, \bmx\rangle \mid \bmw_e \in \partial f_e(\bmx)\; (e \in E)} = \Bigset{W W^\top \bmx \mid W \in \prod_{e \in E}\partial f_e(\bmx)},
  \]
  where $f_e$ is the \Lovasz extension of $F_e$ for each $e \in E$.
\end{definition}
We can verify that, for every $\bmz \in L_F(\bmx)$, we have $\langle \bmx, \bmz\rangle = \sum_{e \in E}{f_e(\bmx)}^2$, and hence we write $\bigl\langle \bmx, L_F(\bmx)\bigr\rangle$ to denote $\sum_{e \in E}{f_e(\bmx)}^2$ by abusing the notation.
Let $f:\bbR^V \to \bbR^E$ be the \Lovasz extension of $F$.
Then, we have $f(\bmx) = W^T \bmx$ for any $W \in \prod_{e \in E}\partial f_e(\bmx)$.
Hence, we can symbolically understand $L_F$ as $f^\top f$ because $\langle \bmx, L_f(\bmx)\rangle = \sum_{e\in E}{f_e(\bmx)}^2 = \|f(\bmx)\|_2^2$, and this is the intuition behind the definition of $L_F$.

\begin{example}\label{ex:undirected-Laplacian-detailed}
  For an undirected graph $G=(V,E)$, we define a submodular transformation $F\colon\set{0,1}^V \to \bbR^E$ as in Example~\ref{ex:undirected-conductance}.
  Then for an edge $e = \set{u,v} \in E$, we have $\bmw_e = (-1,1)$ if $\bmx(u) < \bmx(v)$, $\bmw_e=(1,-1)$ if $\bmx(u) > \bmx(v)$, and $\bmw_e$ is of the form $(a,-a)$ for $a \in [-1,1]$ if $\bmx(u)=\bmx(v)$.
  Then, we can verify that $L_F(\bmx) = (D_G - A_G)\bmx = L_G\bmx$, where $L_G \in \bbR^{V \times V}$ is the usual Laplacian of $G$.
\end{example}


A pair $(\lambda,\bmx) \in \bbR \times \bbR^V$ is called an \emph{eigenpair} of a submodular Laplacian $L_F\colon\bbR^V \to \set{0,1}^{\bbR^V}$ if $L_F(\bmx) \ni \lambda \bmx$.
Such $\lambda$ and $\bmx$ are called \emph{eigenvalue} and \emph{eigenvector} of $L_F$, respectively.
When a submodular transformation $F\colon\set{0,1}^V \to \bbR^E$ satisfies $F(V) = \bmzero$, its Laplacian satisfies the following elegant spectral properties:
\begin{lemma}\label{lem:trivial-eigenpair}
  Let $F\colon\set{0,1}^V \to \bbR^E$ be a submodular transformation with $F(V)=\bmzero$.
  Then, $(0,\bmone)$ is an eigenpair of $L_F$.
\end{lemma}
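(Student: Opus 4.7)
The plan is to unpack the definition of an eigenpair and observe that the condition $F(V)=\bmzero$ forces the inner product $\langle \bmw_e,\bmone\rangle$ to vanish for every $\bmw_e$ used in forming $L_F(\bmone)$. Concretely, by Definition~\ref{def:submodular-Laplacian}, showing that $(0,\bmone)$ is an eigenpair amounts to exhibiting a choice $\bmw_e \in \partial f_e(\bmone)$ for each $e \in E$ such that
\[
  \sum_{e \in E} \bmw_e\, \langle \bmw_e,\bmone\rangle = 0 \cdot \bmone = \bmzero.
\]

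First, I would note that $\partial f_e(\bmone) \subseteq B(F_e)$ is nonempty since the base polytope $B(F_e)$ is a bounded polytope (by the remark in Section~\ref{sec:pre}, cf. Theorem~3.12 of~\cite{Fujishige:2005vx}), so the maximum defining $f_e(\bmone) = \max_{\bmw \in B(F_e)}\langle \bmw,\bmone\rangle$ is attained. Next, I would invoke the defining equality of the base polytope: every $\bmw \in B(F_e)$ satisfies $\sum_{v \in V}\bmw(v) = F_e(V)$, i.e., $\langle \bmw,\bmone\rangle = F_e(V)$. The assumption $F(V) = \bmzero$ gives $F_e(V) = 0$ for all $e \in E$, so $\langle \bmw_e,\bmone\rangle = 0$ for every $\bmw_e \in \partial f_e(\bmone)$.

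Consequently, choosing any $\bmw_e \in \partial f_e(\bmone)$ for each $e \in E$ yields
\[
  \sum_{e \in E}\bmw_e\,\langle \bmw_e,\bmone\rangle \;=\; \sum_{e \in E}\bmw_e \cdot 0 \;=\; \bmzero,
\]
which shows $\bmzero \in L_F(\bmone)$, i.e., $L_F(\bmone) \ni 0 \cdot \bmone$. This establishes that $(0,\bmone)$ is an eigenpair.

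There is no real obstacle here; the proof is essentially a one-line observation once one recognizes that the constraint $\sum_{v}\bmw(v) = F_e(V)$ defining $B(F_e)$ matches the normalization hypothesis $F(V)=\bmzero$. The only thing to be careful about is that we are picking a single representative from each $\partial f_e(\bmone)$ (which could be a face of $B(F_e)$, not a single point) — but any choice works since $\langle \bmw_e,\bmone\rangle$ is identically zero on $B(F_e)$.
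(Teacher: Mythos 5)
Your proof is correct and matches the paper's argument: the paper likewise observes that $L_F(\bmone)=\{\sum_e \bmw_e f_e(\bmone)\}=\{\bmzero\}$, where the scalar $f_e(\bmone)=\langle\bmw_e,\bmone\rangle=F_e(V)=0$ is exactly the quantity you show vanishes via the base-polytope equality. Your version is just slightly more explicit about why that scalar is zero.
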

\begin{proof}
  We have $L_F(\bmone)  = \bigset{\sum_{e \in E}\bmw_e f_e(\bmone) \mid \bmw_e \in \partial f_e(\bmone)\;(e \in E)} = \set{\bmzero} \ni 0 \cdot \bmone$.
\end{proof}

\begin{lemma}\label{lem:positive-semidefiniteness}
  Let $F\colon\set{0,1}^V \to \bbR^E$ be a submodular transformation.
  Then, $L_F$ is positive-semidefinite, that is, all the eigenvalues of $L_F$ are non-negative.
\end{lemma}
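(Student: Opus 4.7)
The plan is to exploit the identity $\langle \bmx, L_F(\bmx)\rangle = \sum_{e \in E} f_e(\bmx)^2$ that was already pointed out right after Definition~\ref{def:submodular-Laplacian}. This identity holds because every $\bmw_e \in \partial f_e(\bmx) = \argmax_{\bmw \in B(F_e)}\langle \bmw,\bmx\rangle$ satisfies $\langle \bmw_e,\bmx\rangle = f_e(\bmx)$ by definition, so for any choice of $\bmw_e$'s and the resulting vector $\bmz = \sum_{e \in E}\bmw_e f_e(\bmx) \in L_F(\bmx)$, one has $\langle \bmx,\bmz\rangle = \sum_{e \in E}\langle\bmw_e,\bmx\rangle f_e(\bmx) = \sum_{e \in E} f_e(\bmx)^2$.

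Given this identity, positive-semidefiniteness is immediate: suppose $(\lambda,\bmx)$ is an eigenpair with $\bmx \neq \bmzero$, so $\lambda \bmx \in L_F(\bmx)$. Taking inner products with $\bmx$ gives
\[
  \lambda \|\bmx\|_2^2 = \langle \bmx, \lambda \bmx\rangle = \bigl\langle \bmx, L_F(\bmx)\bigr\rangle = \sum_{e \in E} f_e(\bmx)^2 \geq 0,
\]
and dividing by $\|\bmx\|_2^2 > 0$ yields $\lambda \geq 0$.

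There is essentially no obstacle here; the only point that deserves mention is the well-definedness of $\langle \bmx, L_F(\bmx)\rangle$ as a scalar even though $L_F(\bmx)$ is a set, which is exactly what the discussion after Definition~\ref{def:submodular-Laplacian} justifies. So the write-up can be a two-line proof: invoke the identity, then read off non-negativity.
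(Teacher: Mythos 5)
Your proof is correct and takes essentially the same route as the paper: both pair the eigenpair relation $\lambda\|\bmx\|_2^2 = \langle\bmx, L_F(\bmx)\rangle$ with the identity $\langle\bmx, L_F(\bmx)\rangle = \sum_{e\in E} f_e(\bmx)^2 \geq 0$. Your additional remark justifying why the inner product is well defined as a scalar is a nice touch but not a departure from the paper's argument.
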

\begin{proof}
  Let $(\lambda,\bmx)$ be an eigenpair of $L_F$.
  Then, we have $\bigl\langle \bmx, L_F(\bmx)\bigr\rangle = \lambda \|\bmx\|_2^2$.
  On the other hand, we have $\bigl\langle \bmx, L_F(\bmx)\bigr\rangle = \sum_{e \in E}{f_e(\bmx)}^2 \geq 0$.
  Hence, $\lambda$ should be non-negative.
\end{proof}
From Lemmas~\ref{lem:trivial-eigenpair} and~\ref{lem:positive-semidefiniteness}, the value $0$ is the smallest eigenvalue of $L_F$ with the corresponding eigenvector $\bmone$.
Hence, we call $\bmone$ the \emph{trivial eigenvector} of $L_F$ and call $(0,\bmone)$ the \emph{trivial eigenpair} of $L_F$.

The \emph{Rayleigh quotient} $R_F\colon\bbR^V \to \bbR$ of the Laplacian of a submodular transformation $F\colon\set{0,1}^V \to \bbR^E$ is defined as
\[
  R_F(\bmx) = \frac{\bigl\langle \bmx, L_F(\bmx)\rangle}{\langle \bmx,\bmx\bigr\rangle}
  =\frac{\sum_{e \in E}{f_e(\bmx)}^2}{\|\bmx\|_2^2} = \frac{\|f(\bmx)\|_2^2}{\|\bmx\|_2^2}.
\]

When $L_F$ is a matrix, the minimum of $R_F(\bmx)$ subject to $\bmx \neq \bmzero$ and $\bmx \bot \bmone$ provides the smallest non-trivial eigenvalue and the minimizer is the corresponding eigenvector of $L_F$.
In Section~\ref{sec:eigen}, we show the following relation for general submodular transformations:
\begin{theorem}\label{the:eigenvalues}
  For a submodular transformation $F\colon\set{0,1}^V \to \bbR^E$ with $F(V) = \bmzero$, the Laplacian $L_F$ has a non-trivial eigenpair, that is, there exist $\gamma \in \bbR_+$ and a non-zero vector $\bmz \in \bbR^V$ such that $\bmz \bot \bmone$ and $L_F(\bmz) \ni \gamma \bmz$.
  Furthermore, each such $\gamma$ and $\bmz$ satisfies $\gamma = R_F(\bmz)$.
\end{theorem}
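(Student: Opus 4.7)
The plan is to derive the non-trivial eigenpair from a nonsmooth Lagrange-multiplier condition applied to the variational problem
\[
  \gamma \;:=\; \min\bigset{R_F(\bmx) \mid \bmx \neq \bmzero,\; \langle \bmx, \bmone\rangle = 0}.
\]
I would first verify that this minimum is attained: $R_F$ is $0$-homogeneous, so one may restrict to the compact set $S := \set{\bmx \in \bbR^V : \|\bmx\|_2 = 1,\; \langle \bmx,\bmone\rangle = 0}$, on which $R_F(\bmx) = \psi(\bmx) := \sum_{e \in E} f_e(\bmx)^2$ is continuous (each $f_e$ is a maximum of finitely many linear functionals). Compactness and continuity yield a minimizer $\bmz$, and $\gamma = \psi(\bmz) \geq 0$.

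Next I would apply a Clarke-style KKT condition. Because $\bmz \neq \bmzero$ and $\bmz \bot \bmone$, the gradients of the constraints $\bmx \mapsto \|\bmx\|_2^2 - 1$ and $\bmx \mapsto \langle \bmx,\bmone\rangle$ are linearly independent at $\bmz$, so $S$ is locally a smooth manifold with normal cone $\set{a\bmz + b\bmone : a,b \in \bbR}$ at $\bmz$. The optimality condition then furnishes $\bmv \in \partial^C \psi(\bmz)$ and scalars $a,b$ with $\bmv = -a\bmz - b\bmone$. The crucial computation identifies $\partial^C \psi(\bmz)$ as a subset of $2 L_F(\bmz)$: the sum rule for Clarke subdifferentials and the $C^1$ chain rule applied to $y \mapsto y^2$ yield
\[
  \partial^C \psi(\bmz) \;\subseteq\; \sum_{e \in E} \partial^C(f_e^2)(\bmz) \;=\; \sum_{e \in E} 2 f_e(\bmz)\,\partial f_e(\bmz) \;=\; 2 L_F(\bmz),
\]
using that the Clarke subdifferential of a convex function coincides with its convex subdifferential. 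Writing $\bmv = 2\bmy$ with $\bmy \in L_F(\bmz)$ gives $\bmy = -\tfrac{a}{2}\bmz - \tfrac{b}{2}\bmone$.

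Then I would eliminate both multipliers using $F(V) = \bmzero$. Since every $\bmw_e \in B(F_e)$ satisfies $\langle\bmw_e,\bmone\rangle = F_e(V) = 0$, the vector $\bmy = \sum_e f_e(\bmz)\bmw_e$ is itself orthogonal to $\bmone$. Pairing $\bmy = -\tfrac{a}{2}\bmz - \tfrac{b}{2}\bmone$ with $\bmone$ yields $b = 0$; pairing with $\bmz$ and using $\langle \bmz,\bmy\rangle = \sum_e f_e(\bmz)^2 = \gamma$ yields $a = -2\gamma$. Hence $\bmy = \gamma\bmz \in L_F(\bmz)$, establishing a non-trivial eigenpair $(\gamma,\bmz)$.

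The second assertion is a short direct calculation: for any non-trivial eigenpair $(\gamma',\bmz')$, the inclusion $L_F(\bmz') \ni \gamma'\bmz'$ gives $\gamma'\|\bmz'\|_2^2 = \langle \bmz',\gamma'\bmz'\rangle = \bigl\langle \bmz', L_F(\bmz')\bigr\rangle = \sum_e f_e(\bmz')^2$, which rearranges to $\gamma' = R_F(\bmz')$. The main obstacle is handling the nonsmooth calculus cleanly; since $y \mapsto y^2$ is $C^1$ and each $f_e$ is convex (hence locally Lipschitz and Clarke-regular), the chain and sum rules apply as stated, and no exotic subdifferential theory is needed. An alternative route, in line with the paper's introduction, is to run the differential inclusion $\frac{\dx}{\dt} \in -L_F(\bmx)$ from $\bmz$ and use the equality case of Cauchy--Schwarz in the derivative of $R_F$ along the flow to extract $\bmy \in L_F(\bmz)$ parallel to $\bmz$; this sidesteps Clarke calculus but requires a well-posedness theorem for the differential inclusion.
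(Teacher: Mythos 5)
Your proof is correct, but it takes a genuinely different route from the paper's. The paper proves Theorem~\ref{the:eigenvalues} dynamically (Section~\ref{sec:eigen}): it runs the diffusion process $\frac{\dx}{\dt} \in -L_F(\bmx)$ from an initial vector orthogonal to $\bmone$, shows $\frac{\mathrm{d}R_F(\bmx)}{\dt} = 2\bigl({R_F(\overline{\bmx})}^2 - \|L_t\overline{\bmx}\|^2\bigr) \leq 0$ via Cauchy--Schwarz, and argues that the normalized trajectory converges to an eigenvector as $t \to \infty$ (Lemmas~\ref{lem:dx^2/dt} and~\ref{lem:dR/dt}, Corollary~\ref{cor:non-increasing}, Theorem~\ref{the:converge-to-eigenvalue}); well-posedness of the differential inclusion is outsourced to the literature on monotone operators. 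Your argument is the static counterpart: minimize $R_F$ on the compact set $\set{\|\bmx\|_2 = 1,\ \bmx \bot \bmone}$ and extract the eigenvector from the nonsmooth first-order condition. The two load-bearing steps check out: the Clarke sum and chain rules do give $\partial^C\bigl(\sum_e f_e^2\bigr)(\bmz) \subseteq \sum_e 2 f_e(\bmz)\,\partial f_e(\bmz) = 2L_F(\bmz)$ (only the inclusion is needed, and it holds regardless of the sign of $f_e(\bmz)$ since $y \mapsto y^2$ is $C^1$ and the paper's $\partial f_e$ is exactly the convex subdifferential of the support function of $B(F_e)$), and killing the multiplier on $\bmone$ via $\langle \bmw_e, \bmone\rangle = F_e(V) = 0$ is precisely the role $F(V) = \bmzero$ plays in the paper's own argument (cf.\ Lemma~\ref{lem:trivial-eigenpair} and Proposition~\ref{pro:shift-invariant}). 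What each approach buys: yours is self-contained modulo standard locally Lipschitz calculus, sidesteps the existence theory for the flow, and directly certifies that the minimizer of the Rayleigh quotient over $\bmone^\bot$ is itself an eigenvector with eigenvalue $\min R_F$ --- a fact the paper invokes elsewhere but obtains only indirectly by starting the flow at the minimizer; the paper's route avoids subdifferential calculus entirely and additionally yields monotonicity of $R_F$ along the heat flow, at the cost of a convergence argument at $t \to \infty$ that is more delicate than your compactness argument. Your closing computation $\gamma' = R_F(\bmz')$ for an arbitrary non-trivial eigenpair coincides with the paper's.
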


\subsection{Normalized submodular Laplacians}\label{subsec:normalized}
Let $F\colon\set{0,1}^V \to \bbR^E$ be a submodular transformation.
We define the \emph{degree vector} $\bmd_F \in \bbR^V$ of $F$ as $\bmd_F(v) = \#\set{e \in E \mid v \in \supp(F_e)}$.
We say that $\bmd_F(v)$ is the \emph{degree} of $v \in V$ with respect to $F$.
Let $D_F \in \bbR^{V \times V}$ be the diagonal matrix with ${(D_F)}_{vv} = \bmd_F(v)$.
Then, we define the \emph{normalized Laplacian} $\caL_F\colon\bbR^V \to \set{0,1}^{\bbR^V}$ of $f$ as $\caL_F(\bmx) = D_F^{-1/2}L_F(D_F^{-1/2}\bmx)$, or more formally, $\caL_F(\bmx) = \set{D_F^{-1/2} \bmz \mid \bmz \in L_F(D_F^{-1/2}\bmx)}$.
When we consider normalized Laplacians, we always assume that every element of $\bmd_F$ is positive as otherwise we cannot define $D_F^{-1/2}$.

We define an eigenpair/value/vector of the normalized Laplacian of a submodular transformation as with the Laplacian of a submodular transformation.
Then, using the same argument as in Lemmas~\ref{lem:trivial-eigenpair} and~\ref{lem:positive-semidefiniteness}, we can show that, for any  submodular transformation $F\colon\set{0,1}^V\to \bbR^E$ with $F(V)=\bmzero$, its normalized Laplacian $\caL_F$ has an eigenpair $(0,D_F^{1/2}\bmone)$ and that $\caL_F$ is positive-semidefinite.
We call $D_F^{1/2}\bmone$ the \emph{trivial eigenvector} of $\caL_F$ and call $(0,D_F^{1/2}\bmone)$ the \emph{trivial eigenpair} of $\caL_F$.
We define $\caR_F\colon\bbR^V \to \bbR$ as the Rayleigh quotient of the normalized Laplacian of $f$, that is,
\[
  \caR_F(\bmx) = \frac{\bigl\langle \bmx, \caL_F(\bmx)\bigr\rangle}{\langle \bmx,\bmx\rangle}
  =\frac{\sum_{e \in E}{f_e(D_F^{-1/2}\bmx)}^2}{\|\bmx\|_2^2} = \frac{\|{f(D_F^{-1/2}\bmx)}^2\|_2^2}{\|\bmx\|_2^2}.
\]
We have the following, which is a counterpart of Theorem~\ref{the:eigenvalues} for normalized Laplacians.
\begin{theorem}\label{the:normalized-eigenvalues}
  For a submodular transformation $F\colon\set{0,1}^V \to \bbR^E$, the normalized Laplacian $\caL_F$ has a non-trivial eigenvector, that is, there exist $\gamma \in \bbR_+$ and a non-zero vector $\bmz \in \bbR^V$ such that $\bmx \bot D_F^{1/2}\bmone$ and $\caL_F(\bmz) \ni \gamma \bmz$.
  Furthermore, each such $\gamma$ and $\bmz$ satisfies $\gamma = \caR_F(\bmz)$.
\end{theorem}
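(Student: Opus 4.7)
The plan is to mirror the argument for Theorem~\ref{the:eigenvalues}, replacing the trivial eigenvector $\bmone$ throughout by $D_F^{1/2}\bmone$. The key structural property that makes this adaptation work is that, for every $\bmx\in\bbR^V$ and every $\bmu\in\caL_F(\bmx)$, one has $\langle D_F^{1/2}\bmone,\bmu\rangle=0$. This follows from $\caL_F(\bmx)=D_F^{-1/2}L_F(D_F^{-1/2}\bmx)$ combined with the identity $\langle\bmone,\bmw\rangle=F_e(V)=0$ for every $\bmw\in B(F_e)$, where we use that $F(V)=\bmzero$ is implicit in the statement (otherwise $D_F^{1/2}\bmone$ would not be the trivial eigenvector of $\caL_F$).

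First I would establish existence of a minimizer of $\caR_F$ over the compact set $S=\{\bmx\in\bbR^V:\|\bmx\|_2=1,\ \langle\bmx,D_F^{1/2}\bmone\rangle=0\}$. Each $f_e$ is Lipschitz continuous (being the support function of the bounded polytope $B(F_e)$), so the numerator $\bmx\mapsto\sum_{e\in E}f_e(D_F^{-1/2}\bmx)^2$ is continuous and $\gamma:=\min_{\bmx\in S}\caR_F(\bmx)$ is attained at some $\bmz\in S$. By positive-semidefiniteness of $\caL_F$ established analogously to Lemma~\ref{lem:positive-semidefiniteness}, $\gamma\ge 0$.

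Next I would apply a first-order optimality condition at $\bmz$ using the nonsmooth Lagrange multiplier rule (Clarke subdifferential) for the locally Lipschitz objective, to extract scalars $\alpha,\beta\in\bbR$ and a vector $\bmu\in\caL_F(\bmz)$ satisfying $\bmu=\alpha\bmz+\beta D_F^{1/2}\bmone$. Taking the inner product with $\bmz$ pins down $\alpha=\langle\bmz,\caL_F(\bmz)\rangle=\gamma$ (using $\bmz\in S$ and the well-defined abuse of notation); taking the inner product with $D_F^{1/2}\bmone$ and invoking the structural property above gives $\beta\|D_F^{1/2}\bmone\|_2^2=\langle D_F^{1/2}\bmone,\bmu\rangle=0$, so $\beta=0$. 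This yields $\caL_F(\bmz)\ni\gamma\bmz$, producing the desired non-trivial eigenpair. For the ``furthermore'' clause, if $\caL_F(\bmz')\ni\gamma'\bmz'$ with $\bmz'\ne\bmzero$ and $\bmz'\perp D_F^{1/2}\bmone$, then $\gamma'\|\bmz'\|_2^2=\langle\bmz',\gamma'\bmz'\rangle=\langle\bmz',\caL_F(\bmz')\rangle$, giving $\gamma'=\caR_F(\bmz')$ immediately.

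The main obstacle is the first-order optimality step: the numerator $\sum_{e\in E}f_e(D_F^{-1/2}\bmx)^2$ is neither convex nor smooth (each $f_e$ is only piecewise linear), so one must carefully compute its Clarke subdifferential at $\bmz$ and verify that it is contained in $2\caL_F(\bmz)$ (up to convex combinations) via the chain rule for the outer square and the support-function structure of $f_e$. The delicate case arises when the ordering of coordinates of $D_F^{-1/2}\bmz$ has ties, so that each $\partial f_e(D_F^{-1/2}\bmz)$ is genuinely multi-valued and one must track which extreme points of $B(F_e)$ are active. An alternative route, hinted at in the proof sketch and carried out in~\cite{Ikeda:2018tu}, is to construct the eigenvector as a long-time limit of the diffusion inclusion $\frac{\dx}{\dt}\in-\caL_F(\bmx)$ started from a suitable initial vector in $(D_F^{1/2}\bmone)^\perp$, using the theory of monotone operators and evolution equations to ensure existence and extract an eigenpair from the limiting behaviour.
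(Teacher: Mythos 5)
Your proposal is correct in substance but follows a genuinely different route from the paper. The paper proves this theorem (via Theorem~\ref{the:eigenvalues}, Section~\ref{sec:eigen}) by running the diffusion process $\frac{\dx}{\dt}\in-L_F(\bmx)$ from an initial vector orthogonal to the trivial eigenvector: Lemma~\ref{lem:dR/dt} and Corollary~\ref{cor:non-increasing} show the Rayleigh quotient is non-increasing along the flow, hence converges to some $\gamma$, and Theorem~\ref{the:converge-to-eigenvalue} extracts the eigenpair from the fact that $\|L_t\overline{\bmx}\|^2\to\gamma^2$ forces $L_t\overline{\bmx}-\gamma\overline{\bmx}\to\bmzero$. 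Your primary route is instead variational: minimize $\caR_F$ on the compact set $S$ and apply a nonsmooth multiplier rule at the minimizer. Both establish existence; the trade-off is that the diffusion route outsources the existence of solutions to the differential inclusion to external references (\cite{Aubin:2012wi,Ikeda:2018tu}) but needs no subdifferential calculus, whereas your route is self-contained on the dynamics side but hinges on the Clarke chain rule and a constraint-qualification check (LICQ does hold at $\bmz$, since $\bmz\perp D_F^{1/2}\bmone$ makes the two constraint gradients independent). Two remarks on the step you flag as the main obstacle: first, the inclusion $\partial\bigl(f_e^2\bigr)(\bmy)\subseteq 2f_e(\bmy)\,\partial f_e(\bmy)$ is exactly the direction you need, so ties in the ordering (where $\partial f_e$ is multi-valued) cause no harm; second, under the paper's standing assumption that each $F_e$ is non-negative with $F_e(V)=0$ one has $\bmzero\in B(F_e)$, hence $f_e\geq 0$, hence each $f_e^2$ is convex (convex nondecreasing $t\mapsto t^2$ on $[0,\infty)$ composed with a non-negative convex function), and the whole argument reduces to ordinary convex subdifferential calculus. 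Your observation that $F(V)=\bmzero$ must be implicitly assumed, and your verification of the ``furthermore'' clause by pairing the eigenvalue relation with $\bmz'$, both match what the paper does.
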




\section{Cheeger Inequalities for Submodular Transformations}\label{sec:cheeger}


In this section, we prove our Cheeger inequality for submodular transformations, that is, Theorem~\ref{the:intro-cheeger}.
We prove the left and right inequalities of Theorem~\ref{the:intro-cheeger} in Sections~\ref{subsec:cheeger-left} and~\ref{subsec:cheeger-right}, respectively.

The following fact is useful in this section.
\begin{proposition}\label{pro:shift-invariant}
  Let $F:\set{0,1}^V \to \bbR^E$ be a submodular transformation with $F(V) = \bmzero$ and let $f:\bbR^V \to \bbR^E$ be its \Lovasz extension.
  Then, we have $f(\bmx + c\bmone) = f(\bmx)$ for any $c \in \bbR^V$.
\end{proposition}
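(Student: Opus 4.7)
The plan is to reduce the claim to each coordinate $e \in E$ separately, since $f(\bmx + c\bmone) = f(\bmx)$ holds if and only if $f_e(\bmx + c\bmone) = f_e(\bmx)$ for every $e \in E$. So I would fix an arbitrary $e \in E$ and show that the Lovász extension $f_e$ of the submodular function $F_e$ is invariant under shifts by multiples of $\bmone$, using the assumption $F(V) = \bmzero$.

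The key observation is that the assumption $F(V) = \bmzero$ means $F_e(V) = 0$ for every $e \in E$. Recall from the definition of the base polytope reviewed in Section~\ref{sec:pre} that every $\bmw \in B(F_e)$ satisfies the equality $\sum_{v \in V} \bmw(v) = F_e(V)$, which under our assumption becomes $\langle \bmw, \bmone\rangle = 0$. In other words, the base polytope $B(F_e)$ lies entirely in the hyperplane orthogonal to $\bmone$.

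Given this, the shift invariance falls out from the variational definition $f_e(\bmx) = \max_{\bmw \in B(F_e)} \langle \bmw, \bmx\rangle$. For any $c \in \bbR$ and any $\bmw \in B(F_e)$, linearity of the inner product and the orthogonality above give $\langle \bmw, \bmx + c\bmone\rangle = \langle \bmw, \bmx\rangle + c\langle \bmw, \bmone\rangle = \langle \bmw, \bmx\rangle$. Since this identity holds pointwise in $\bmw$ over the feasible region $B(F_e)$, taking the maximum of both sides over $\bmw \in B(F_e)$ yields $f_e(\bmx + c\bmone) = f_e(\bmx)$, as desired. Applying this to every coordinate $e$ concludes the proof.

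There is essentially no obstacle here: the entire argument is a direct unpacking of the definition of the Lovász extension together with the defining equality of the base polytope. The only subtle point worth flagging is that the statement as written has $c \in \bbR^V$ but should presumably read $c \in \bbR$, since otherwise $c \bmone$ is not well-defined as scalar multiplication; the proof above covers the intended scalar case.
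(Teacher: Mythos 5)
Your proof is correct and matches the paper's argument exactly: both fix $e \in E$, use that every $\bmw \in B(F_e)$ satisfies $\langle \bmw, \bmone\rangle = F_e(V) = 0$, and conclude by taking the maximum over $B(F_e)$. Your remark that $c \in \bbR^V$ in the statement should read $c \in \bbR$ is also a valid catch of a typo.
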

\begin{proof}
  Fix $e \in E$.
  Note that any $\bmw \in B(F_e)$ satisfies $\bmw(V)=0$ because $F_e(V) = 0$.
  Then, we have $f_e(\bmx+ c\bmone) = \max_{\bmw \in B(F_e)}\langle \bmw, \bmx + c \bmone \rangle =  \max_{\bmw \in B(F_e)}\langle \bmw, \bmx \rangle = f_e(\bmx)$.
\end{proof}

\subsection{Lower bound on conductance}\label{subsec:cheeger-left}

\begin{proof}[Proof of the left inequality of Theorem~\ref{the:intro-cheeger}]
  Let $\emptyset \subsetneq S \subsetneq V$ be a subset that achieves $\phi_F = \phi_F(S)$ with $\vol_F(S) \leq \vol_F(V \setminus S)$.
  Let $\bmx \in \bbR^V$ be the vector obtained from $D_F^{1/2}\bmone_S$  by projecting it to the subspace orthogonal to $D_F^{1/2}\bmone$.
  Then, we can write $\bmx = D_F^{1/2}\bmone_S + c D_F^{1/2}\bmone/\|D_F^{1/2}\bmone\|_2$, where
  \[
    c^2 = \frac{\bigl\langle D_F^{1/2}\bmone_S,D_F^{1/2}\bmone\bigr\rangle^2}{\|D_F^{1/2}\bmone\|_2^2}
    = \frac{{\bigl(\sum_{v \in S}\bmd_F(v)\bigr)}^2}{ \sum_{v \in V}\bmd_F(v)}
    = \frac{{\vol_F(S)}^2}{\vol_F(V)}
    \leq \frac{1}{2}\vol_F(S).
  \]
  Then by the Pythagorean theorem, we have
  \[
    \|\bmx\|_2^2 = \|D_F^{1/2}\bmone_S\|_2^2 - c^2 \geq \vol_F(S) - \frac{1}{2}\vol_F(S) = \frac{1}{2}\vol_F(S).
  \]
  Further, we have
  \[
    \bmx^\top \caL(\bmx)
    = \sum_{e \in E}{f_e(D_F^{-1/2}\bmx)}^2
    = \sum_{e \in E}{f_e(\bmone_S + c \bmone/\|D_F^{1/2}\bmone\|)}^2
    = \sum_{e \in E}{f_e(\bmone_S)}^2
    = \sum_{e \in E}{F_e(S)}^2,
  \]
  where we used Proposition~\ref{pro:shift-invariant} in the third equality.

  As $F_e(S) \in [0,1]$ holds for every $e \in E$ and $S \subseteq V$,  we have
  \[
    \lambda_F
    \leq
    \caR_F(\bmx)
    =
    \frac{\bmx^\top \caL(\bmx)}{\|\bmx\|_2^2}
    =
    \frac{2\sum_{e \in E}{F_e(S)}^2}{\vol_F(S)}
    \leq
    \frac{2\sum_{e \in E}F_e(S)}{\vol_F(S)}.
  \]
  Similarly, by considering $-\bmx$, we can show that $\lambda_F \leq 2\sum_{e \in E}F_e(V \setminus S)/\vol_F(S)$, and hence we obtain $\lambda_F \leq 2\phi_F$.
\end{proof}

\subsection{Upper bound on conductance}\label{subsec:cheeger-right}

In this section, we first provide an extension of the rounding known as \emph{sweep rounding}, which is used in the proof of the Cheeger inequality for undirected graphs (Section~\ref{subsubsec:rounding}).
Then, we prove the right inequality of Theorem~\ref{the:intro-cheeger} (Section~\ref{subsubsec:proof-cheeger}).

\subsubsection{Rounding}\label{subsubsec:rounding}

We start with the following equivalent definition of \Lovasz extension:
\begin{lemma}[See, e.g., Definition~3.1 of~\cite{Bach:2013wf}]\label{lem:lovasz-extension-integral}
  Let $F:\set{0,1}^V \to \bbR$ be a submodular function and $f:\bbR^V \to \bbR$ be its \Lovasz extension.
  Then, we have
  \[
    f(\bmx) = \int_0^\infty F\bigl(\set{v \in V \mid \bmx(v) \geq r}\bigr)\dr + \int_{-\infty}^0 \Bigl(F\bigl(\set{v \in V \mid \bmx(v) \geq r}\bigr)- F(V)\Bigr) \dr.
  \]
\end{lemma}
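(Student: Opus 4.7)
The plan is to reduce the identity to a summation-by-parts computation after making the extreme point from Lemma~\ref{lem:extreme-point} explicit. Fix $\bmx \in \bbR^V$ and sort the coordinates as $\bmx(v_1) \geq \cdots \geq \bmx(v_n)$, and let $S_0 = \emptyset \subsetneq S_1 \subsetneq \cdots \subsetneq S_n = V$ with $S_i = \set{v_1,\ldots,v_i\}$. By Lemma~\ref{lem:extreme-point}, the vector $\bmw \in \partial f(\bmx)$ defined by $\bmw(v_i) = F(S_i) - F(S_{i-1})$ attains the maximum defining $f(\bmx)$, so
\[
  f(\bmx) \;=\; \sum_{i=1}^{n}\bmx(v_i)\bigl(F(S_i) - F(S_{i-1})\bigr).
\]

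Next I would apply Abel summation and use the normalization $F(\emptyset) = 0$ to rewrite this as
\[
  f(\bmx) \;=\; F(V)\,\bmx(v_n) + \sum_{i=1}^{n-1} F(S_i)\bigl(\bmx(v_i) - \bmx(v_{i+1})\bigr).
\]
This is the algebraic target that the integral on the right-hand side must reproduce.

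The third step is to evaluate that integral by exploiting the fact that the level set $S(r) := \set{v \in V \mid \bmx(v) \geq r\}$ is piecewise constant in $r$: it equals $\emptyset$ for $r > \bmx(v_1)$, equals $S_i$ for $r \in (\bmx(v_{i+1}), \bmx(v_i)]$, and equals $V$ for $r \leq \bmx(v_n)$. Let $k$ be the index with $\bmx(v_k) \geq 0 > \bmx(v_{k+1})$ (with the obvious conventions if $\bmx$ is of one sign). Splitting the two integrals according to these pieces gives closed-form sums whose terms in $F(S_i)\bigl(\bmx(v_i) - \bmx(v_{i+1})\bigr)$ already match the target, apart from the boundary pieces touching $r=0$. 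Precisely, the first integral contributes $F(S_k)\,\bmx(v_k)$ from the piece $r \in (0,\bmx(v_k)]$, while the second integral contributes $\bigl(F(S_k) - F(V)\bigr)\bigl(-\bmx(v_{k+1})\bigr)$ from the piece $r \in (\bmx(v_{k+1}),0]$ plus the full-range terms for $i > k$; adding these, the two $F(S_k)$-terms combine to $F(S_k)(\bmx(v_k) - \bmx(v_{k+1}))$, exactly the missing summand at index $i = k$, and the leftover $F(V)\,\bmx(v_{k+1})$ telescopes with the remaining contributions for $i \geq k+1$ to produce $F(V)\,\bmx(v_n)$.

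The only real obstacle is the bookkeeping at the boundary $r = 0$, where the integrand is redefined by subtracting $F(V)$; the whole point of the integral formula is that this subtraction makes the jump of the integrand at $r = 0$ equal to $F(S_k) - F(S_k) = 0$, so nothing special happens there and the piecewise calculation telescopes cleanly. One edge case worth handling explicitly is when $\bmx$ has repeated entries or hits $0$: in the first case any tie-breaking gives the same value of $f(\bmx)$ by Lemma~\ref{lem:extreme-point} (and the integral is unaffected since it depends only on the level sets), and in the second case the conventions $S(r) = \set{v \mid \bmx(v) \geq r\}$ versus $\{v \mid \bmx(v) > r\}$ differ on a measure-zero set and hence do not change the integral.
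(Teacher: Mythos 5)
Your proof is correct. Note that the paper does not actually prove this lemma---it is quoted as a known fact from Bach's monograph (Definition~3.1 of~\cite{Bach:2013wf})---so you have supplied a self-contained derivation of something the paper treats as a black box. Your route is the standard one: identify the greedy maximizer from Lemma~\ref{lem:extreme-point}, Abel-sum to get $f(\bmx) = \sum_{i=1}^{n-1}F(S_i)(\bmx(v_i)-\bmx(v_{i+1})) + F(V)\bmx(v_n)$ (which is exactly the paper's Lemma~\ref{lem:lovasz-extension-sum}, also only cited there), and then evaluate the integral over the finitely many intervals on which the level set $\set{v \mid \bmx(v)\geq r}$ is constant. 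The bookkeeping at $r=0$ and the telescoping of the $F(V)$ terms check out: the boundary pieces $(0,\bmx(v_k)]$ and $(\bmx(v_{k+1}),0]$ combine into the missing summand $F(S_k)(\bmx(v_k)-\bmx(v_{k+1}))$, and the leftover $F(V)\bmx(v_{k+1})$ telescopes to $F(V)\bmx(v_n)$, as you say. Two small points worth making explicit: the normalization $F(\emptyset)=0$ (a standing assumption in Section~\ref{sec:pre}) is needed not only in the Abel summation but also for the first integral to converge on $(\bmx(v_1),\infty)$; and your handling of ties and of the measure-zero ambiguity between $\geq$ and $>$ in the level sets is the right way to dispose of the degenerate cases. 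In effect your argument also proves Lemma~\ref{lem:lovasz-extension-sum} along the way, so the two cited equivalent definitions of the \Lovasz extension are reconciled by the same computation.
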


For $\tau \in [0,1]$, we define the threshold function $\thr_\tau: [0,1] \to \set{0,1}$ as $\thr_\tau(x) = 1$ if $x \geq \tau$ and $\thr_\tau(x) = 0$ otherwise.
For a vector $\bmx \in {[0,1]}^V$, we define $\thr_\tau(\bmx) \in \set{0,1}^V$ as the vector obtained from $\bmx$ by applying $\thr_\tau(\cdot)$ coordinate-wise.
Then, we can rephrase $f(\bmx)$ using the threshold function as follows:
\begin{lemma}\label{lem:integral-form}
  Let $F:\set{0,1}^V \to \bbR$ be a submodular function.
  Then, we have
  \[
    f(\bmx) = \int_0^1 f(\thr_\tau(\bmx))\dtau
  \]
  for any $\bmx \in {[0,1]}^V$.
\end{lemma}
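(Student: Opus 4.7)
The plan is to derive the identity directly from the integral representation provided by Lemma~\ref{lem:lovasz-extension-integral}, by observing that when $\bmx \in [0,1]^V$ the contributions outside the interval $[0,1]$ cancel, and that on $[0,1]$ the superlevel sets of $\bmx$ are exactly the supports of the threshold vectors.

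First, I would apply Lemma~\ref{lem:lovasz-extension-integral} to $\bmx$ and examine the two integrals. Since $\bmx(v) \in [0,1]$ for every $v \in V$, for any $r > 1$ the superlevel set $\{v \in V \mid \bmx(v) \geq r\}$ is empty, so by the assumed normalization $F(\emptyset)=0$ the integrand of the first integral vanishes on $(1,\infty)$. Similarly, for any $r \leq 0$ the superlevel set equals $V$, so the integrand $F(V)-F(V)$ of the second integral vanishes on $(-\infty,0]$. Therefore
\[
  f(\bmx) = \int_0^1 F\bigl(\set{v \in V \mid \bmx(v) \geq r}\bigr)\,\dr.
\]

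Next, I would unpack the right-hand side of the lemma. By definition of $\thr_\tau$, the vector $\thr_\tau(\bmx) \in \set{0,1}^V$ is exactly the indicator vector $\bmone_{S_\tau}$ of the set $S_\tau := \set{v \in V \mid \bmx(v) \geq \tau}$. Since the \Lovasz extension satisfies $f(\bmone_S) = F(S)$ for every $S \subseteq V$, we get $f(\thr_\tau(\bmx)) = F(S_\tau)$ for every $\tau \in [0,1]$. Integrating over $\tau \in [0,1]$ and comparing with the displayed equation above yields
\[
  \int_0^1 f(\thr_\tau(\bmx))\,\dtau = \int_0^1 F(S_\tau)\,\dtau = f(\bmx),
\]
which is the desired identity.

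There is essentially no obstacle here: the lemma is a direct corollary of Lemma~\ref{lem:lovasz-extension-integral} together with the fact that $f$ agrees with $F$ on $\set{0,1}^V$. The only minor care needed is to handle the tails of the integral in Lemma~\ref{lem:lovasz-extension-integral}, which is trivial under the normalization $F(\emptyset)=0$ and the assumption $\bmx \in [0,1]^V$.
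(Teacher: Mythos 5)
Your proof is correct and follows essentially the same route as the paper: both truncate the integral representation of Lemma~\ref{lem:lovasz-extension-integral} to $[0,1]$ using the emptiness of the superlevel sets above $1$ (together with $F(\emptyset)=0$) and the cancellation $F(V)-F(V)=0$ below $0$, and then identify $F(\set{v \mid \bmx(v)\geq \tau})$ with $f(\thr_\tau(\bmx))$ via $f(\bmone_S)=F(S)$. Your write-up just spells out the tail arguments that the paper leaves implicit.
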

\begin{proof}
  By Lemma~\ref{lem:lovasz-extension-integral}, we have
  \begin{align*}
    & f(\bmx)
    = \int_0^\infty F\bigl(\set{v \in V \mid \bmx(v) \geq \tau }\bigr) \dtau+ \int_{-\infty}^0 \Bigl(F\bigl(\set{v \in V \mid \bmx(v) \geq \tau }\bigr) -F(V)\Bigr) \dtau\\
    &   = \int_0^1 F\bigl(\set{v \in V \mid \bmx(v) \geq \tau }\bigr) \dtau = \int_0^1 f(\thr_\tau(\bmx)) \dtau,
    \end{align*}
    where in the last equality, we used the fact that $f(\bmone_S) = F(S)$ for $S \subseteq V$.
\end{proof}

Next, we provide two rounding methods, one for the case $\bmx \in {[0,1]}^V$ and the other for the case $\bmx \in {[-1,0]}^V$.
\begin{lemma}\label{lem:weak-rounding-positive}
  Let $F:\set{0,1}^V \to \bbR^E$ be a submodular transformation and $f:\bbR^V \to \bbR^E$ be its \Lovasz extension.
  For any $\bmx \in {[0,1]}^V$, there exists a set $\emptyset \subsetneq S \subseteq \supp(\bmx)$ such that
  \[
    \frac{\cut_F(S)}{\vol_F(S)} \leq \frac{\sum\limits_{e \in E} f_e(\bmx)   }{\sum\limits_{v \in V} \bmd_F(v) \bmx(v)}.
  \]
  Moreover, we can compute such a set $S$ in $O(n\log n+nm)$ time.
\end{lemma}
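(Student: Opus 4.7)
The plan is to prove this by a sweep-rounding argument based on Lemma~\ref{lem:integral-form}, analogous to the classical argument underlying the discrete Cheeger inequality. For each $\tau \in (0,1]$ let $S_\tau := \set{v \in V \mid \bmx(v) \geq \tau}$; by construction $S_\tau \subseteq \supp(\bmx)$, and $S_\tau$ is nonempty whenever $\tau \leq \max_v \bmx(v)$. The idea is to express both the numerator and denominator of the claimed right-hand side as integrals over $\tau$ of the cut- and volume-functions of these level sets, and then extract a single good threshold by the mediant inequality.

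For the numerator, Lemma~\ref{lem:integral-form} together with $f_e(\bmone_{S_\tau}) = F_e(S_\tau)$ gives
\[
\sum_{e \in E} f_e(\bmx) \;=\; \sum_{e \in E}\int_0^1 F_e(S_\tau)\,\dtau \;=\; \int_0^1 \cut_F(S_\tau)\,\dtau.
\]
For the denominator, using $\bmx(v) = \int_0^1 \thr_\tau(\bmx(v))\,\dtau$ since $\bmx(v) \in [0,1]$, and exchanging the sum with the integral,
\[
\sum_{v \in V} \bmd_F(v)\,\bmx(v) \;=\; \int_0^1 \sum_{v \in S_\tau} \bmd_F(v)\,\dtau \;=\; \int_0^1 \vol_F(S_\tau)\,\dtau.
\]
Let $c$ denote the right-hand side of the lemma. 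Assuming the denominator is positive (otherwise the statement is vacuous), I will invoke the mediant inequality: since $\int_0^1 \bigl(\cut_F(S_\tau) - c\cdot \vol_F(S_\tau)\bigr)\,\dtau = 0$ and the volumes are nonnegative, there must exist $\tau^\ast \in (0,1]$ with $\vol_F(S_{\tau^\ast}) > 0$ and $\cut_F(S_{\tau^\ast}) \leq c\cdot \vol_F(S_{\tau^\ast})$. Taking $S := S_{\tau^\ast}$ yields the existential claim, and $S \subseteq \supp(\bmx)$ is nonempty by choice of $\tau^\ast$.

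For the algorithmic claim, the family $\set{S_\tau}_{\tau \in (0,1]}$ has at most $n$ distinct nonempty members: sort the vertices as $v_1,\ldots,v_n$ in nonincreasing order of $\bmx$ in $O(n\log n)$ time, and the prefix sets $S_i := \set{v_1,\ldots,v_i}$ for $i = 1,\ldots,|\supp(\bmx)|$ exhaust them. One call per $S_i$ to the value oracle for $F$ returns $(F_e(S_i))_{e \in E}$, and hence $\cut_F(S_i)$ in $O(m)$ time per prefix, for $O(nm)$ total; the volumes $\vol_F(S_i)$ are updated incrementally in $O(1)$ per step. Returning whichever prefix minimizes $\cut_F(S_i)/\vol_F(S_i)$ attains the desired bound by the mediant argument above.

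The only mild obstacle is the handling of boundary cases: if the denominator vanishes then $\bmx$ is supported on vertices irrelevant to every $F_e$ and the inequality is vacuous; if several vertices share the same $\bmx$-value then the level sets still vary only over the prefixes $S_i$ determined by any fixed tie-breaking rule, so the mediant argument is unaffected. The substance of the proof is entirely carried by the integral representation in Lemma~\ref{lem:integral-form}.
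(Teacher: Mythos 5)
Your proof is correct and follows essentially the same route as the paper: both express the numerator and denominator as integrals over the threshold $\tau$ of the cut and volume of the level sets (via Lemma~\ref{lem:integral-form}), extract a good threshold by an averaging/mediant argument, and realize the algorithm by checking the at most $n$ prefix sets in sorted order. Your treatment of the boundary cases (restricting to $\tau \in (0,1]$ so that $S_\tau \subseteq \supp(\bmx)$, and the degenerate denominator) is in fact slightly more careful than the paper's.
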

\begin{proof}
  By Lemma~\ref{lem:integral-form}, we have
  \[
    \frac{\int_0^1 \sum\limits_{e \in E}f_e\bigl(\thr_\tau(\bmx)\bigr) \dtau}{\int_0^1 \sum\limits_{v \in V} \bmd_F(v) \thr_\tau(\bmx(v)) \dtau}
    =
    \frac{\sum\limits_{e \in E}f_e(\bmx) }{\sum\limits_{v \in V} \bmd_F(v) \bmx(v)}.
  \]
  Therefore, there exists $\tau^* \in [0,1]$ such that
  \[
    \frac{ \sum\limits_{e \in E}f_e(\thr_{\tau^*}(\bmx)) }{ \sum\limits_{v \in V} \bmd_F(v) \thr_{\tau^*}(\bmx(v))}
    \leq
    \frac{\sum\limits_{e \in E}f_e(\bmx) }{\sum\limits_{v \in V} \bmd_F(v) \bmx(v)}.
  \]
  Let $S$ be the support of the vector $\thr_{\tau^*}(\bmx)$.
  Note that we can always choose $S$ to be non-empty.
  Since $\thr_{\tau^*}(\bmx)$ is a $\set{0,1}$-vector, we have $\thr_{\tau^*}(\bmx) = \bmone_S$.
  Then, we have
  \begin{align*}
    \frac{ \sum\limits_{e \in E}f_e(\thr_{\tau^*}(\bmx)) }{ \sum\limits_{v \in V} \bmd_F(v) \thr_{\tau^*}(\bmx(v))}
    & =
    \frac{ \sum\limits_{e \in E}F_e(S)}{ \sum\limits_{v \in S} \bmd_F(v)}
    = \frac{\cut_F(S)}{\vol_F(S)}.
  \end{align*}
  Therefore, we have
  \[
    \frac{\cut_F(S)}{\vol_F(S)} \leq \frac{\sum\limits_{e \in E}f_e(\bmx) }{\sum\limits_{v \in V} \bmd_F(v) \bmx(v)}
    \quad
    \text{and}
    \quad
    \emptyset \subsetneq S \subseteq \supp(\bmx).
  \]

  We can find this set $S$ as follows.
  First, let $v_1,\ldots,v_n$ be the ordering of $V$ such that $\bmx(v_1) \geq \cdots \geq \bmx(v_n)$.
  Then, we consider sets of the form $\{v_1,\ldots,v_k\}$ for $k \in [n]$ and then return the set with the smallest conductance.
  The running time of this algorithm is $O(n\log n + nm)$.
\end{proof}

\begin{corollary}\label{cor:weak-rounding-negative}
  Let $F:\set{0,1}^V \to \bbR^E$ be a submodular transformation with $F(V) = \bmzero$ and $f:\bbR^V \to \bbR^E$ be its \Lovasz extension.
  For any $\bmx \in {[-1,0]}^V$, there exists a set $\emptyset \subsetneq S \subseteq \supp(\bmx)$ such that
  \[
    \frac{\cut_F(V\setminus S)}{\vol_F(S)} \leq -\frac{\sum\limits_{e \in E} f_e(\bmx)   }{\sum\limits_{v \in V} \bmd_F(v) \bmx(v)}.
  \]
  Moreover, we can compute such a set $S$ in $O(n\log n+nm)$ time.
\end{corollary}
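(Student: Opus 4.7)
The plan is to reduce Corollary~\ref{cor:weak-rounding-negative} to Lemma~\ref{lem:weak-rounding-positive} by passing to the \emph{complement} submodular transformation $\bar F\colon\set{0,1}^V \to \bbR^E$ defined by $\bar F_e(S) = F_e(V\setminus S)$ for every $e \in E$ and $S \subseteq V$. Since $F_e$ is submodular, so is $\bar F_e$, and the hypothesis $F(V) = \bmzero$ together with the standing normalization $F(\emptyset)=\bmzero$ give $\bar F_e(\emptyset) = \bar F_e(V) = 0$, so $\bar F$ is itself a normalized submodular transformation with $\bar F(V) = \bmzero$. Moreover, a variable is relevant in $F_e$ if and only if it is relevant in $\bar F_e$, so $\bmd_{\bar F}(v) = \bmd_F(v)$ for all $v \in V$.

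The key identity I would establish is that the \Lovasz extensions satisfy $\bar f_e(\bmx) = f_e(-\bmx)$ for every $\bmx \in \bbR^V$ and $e \in E$. Concretely, I would verify $B(\bar F_e) = -B(F_e)$: any $\bmw \in B(\bar F_e)$ satisfies $\bmw(V) = \bar F_e(V) = 0$, so the constraint $\sum_{v\in S}\bmw(v) \leq \bar F_e(S) = F_e(V\setminus S)$ rewrites, using $\sum_{v \in S}\bmw(v) = -\sum_{v \in V\setminus S}\bmw(v)$, as $\sum_{v\in V\setminus S}(-\bmw(v)) \leq F_e(V\setminus S)$, i.e., $-\bmw \in B(F_e)$. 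A change of variables in the max then yields $\bar f_e(\bmx) = \max_{\bmw \in -B(F_e)}\langle \bmw,\bmx\rangle = \max_{\bmw' \in B(F_e)}\langle \bmw',-\bmx\rangle = f_e(-\bmx)$.

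With these ingredients in hand, I would apply Lemma~\ref{lem:weak-rounding-positive} to $\bar F$ and the vector $-\bmx \in {[0,1]}^V$. The lemma produces, in $O(n\log n + nm)$ time, a set $\emptyset \subsetneq S \subseteq \supp(-\bmx) = \supp(\bmx)$ with
\[
  \frac{\cut_{\bar F}(S)}{\vol_{\bar F}(S)} \leq \frac{\sum_{e\in E}\bar f_e(-\bmx)}{\sum_{v \in V}\bmd_{\bar F}(v)(-\bmx(v))}.
\]
Unpacking each term via the identifications above, $\cut_{\bar F}(S) = \sum_{e \in E}F_e(V\setminus S) = \cut_F(V\setminus S)$, $\vol_{\bar F}(S) = \vol_F(S)$, $\bar f_e(-\bmx) = f_e(\bmx)$, and $\sum_{v \in V}\bmd_{\bar F}(v)(-\bmx(v)) = -\sum_{v \in V}\bmd_F(v)\bmx(v)$, which rearranges into exactly the claimed inequality. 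The time bound carries over since evaluating $\bar F$ requires a single call to $F$.

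The only substantive step is the identity $\bar f_e(\bmx) = f_e(-\bmx)$; it crucially relies on $F_e(V) = 0$ to eliminate the ambient constant from the base polytope description, and without this assumption the complement $\bar F$ would no longer be normalized and the two \Lovasz extensions would not be related in such a clean way. Once the identity is in place, the corollary is essentially a one-line translation of Lemma~\ref{lem:weak-rounding-positive}.
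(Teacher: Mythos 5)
Your proof is correct and follows essentially the same route as the paper: both reduce to Lemma~\ref{lem:weak-rounding-positive} by applying it to the complement transformation $S \mapsto F(V\setminus S)$ and the vector $-\bmx$. The only cosmetic difference is how the identity $\bar f_e(\bmx) = f_e(-\bmx)$ is justified — you prove $B(\bar F_e) = -B(F_e)$ directly, while the paper writes $f'(\bmz) = f(\bmone-\bmz) = f(-\bmz)$ and invokes Proposition~\ref{pro:shift-invariant}; both arguments are valid.
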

\begin{proof}
  Define a submodular transformation $F':\set{0,1}^V \to \bbR^E$ as $F'(S) = F(V \setminus S)$, and let $f':\bbR^V \to \bbR^E$ be its \Lovasz extension.
  Then, we have $f'(\bmz) = f(\bmone - \bmz) = f(-\bmz)$ for any $\bmz \in \bbR^V$ by Proposition~\ref{pro:shift-invariant}.

  We apply Lemma~\ref{lem:weak-rounding-positive} on $F'$ and $-\bmx$.
  Then, we obtain a set $\emptyset \subsetneq S \subseteq \supp(\bmx)$ such that
  \[
    \frac{\cut_{f}(V \setminus S)}{\vol_{f}(S)}
    =
    \frac{\sum\limits_{e \in E} F'_e(S)}{\vol_{f}(S)}
    = \frac{\cut_{f'}(S)}{\vol_{f'}(S)}
    \leq \frac{\sum\limits_{e \in E} f'_e(-\bmx)   }{-\sum\limits_{v \in V} \bmd_{F'}(v)\bmx(v)}
    = \frac{\sum\limits_{e \in E} f_e(\bmx)}{-\sum\limits_{v \in V} \bmd_F(v)\bmx(v)}. \qedhere
  \]

\end{proof}

\subsubsection{Proof of Theorem~\ref{the:intro-cheeger}}\label{subsubsec:proof-cheeger}
We start proving Theorem~\ref{the:intro-cheeger}.
To this end, we need several auxiliary lemmas.
For a vector $\bmx \in \bbR^V$, we define $\bmx_+ \in \bbR^V$ and $\bmx_- \in \bbR^V$ as
\[
  \bmx_+(v) = \begin{cases}
  \bmx(v) & \text{if } \bmx(v) \geq 0,\\
  0 & \text{otherwise},
  \end{cases}
  \quad \text{and} \quad
  \bmx_-(v) = \begin{cases}
  \bmx(v) & \text{if } \bmx(v) \leq 0,\\
  0 & \text{otherwise},
  \end{cases}
\]

\begin{lemma}\label{lem:reverse-triangle-inequality}
  Let $f:\bbR^V \to \bbR$ be the \Lovasz extension of a submodular function $F:\set{0,1}^V \to \bbR$.
  If $f$ is non-negative, then we have
  \[
    {f(\bmx_+)}^2 + {f(\bmx_-)}^2 \leq {f(\bmx)}^2
  \]
\end{lemma}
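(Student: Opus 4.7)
The plan is to establish the exact identity $f(\bmx) = f(\bmx_+) + f(\bmx_-)$, from which the lemma follows by a one-line expansion of the square.

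First I would prove this identity using the ordering characterization of extreme points of $\partial f(\bmx)$ given in Lemma~\ref{lem:extreme-point}. Let $v_1,\ldots,v_n$ be an ordering of $V$ with $\bmx(v_1) \geq \cdots \geq \bmx(v_n)$, and let $k \leq \ell$ be chosen so that $\bmx(v_i) > 0$ exactly for $i \leq k$ and $\bmx(v_i) < 0$ exactly for $i > \ell$. By Edmonds' algorithm, the point $\bmw^\ast \in B(F)$ defined by $\bmw^\ast(v_i) = F(v_i \mid \{v_1,\ldots,v_{i-1}\})$ attains $f(\bmx) = \langle \bmw^\ast, \bmx \rangle$. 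The key observation is that the very same ordering $v_1,\ldots,v_n$ is also a valid weakly-decreasing ordering of the coordinates of both $\bmx_+$ and $\bmx_-$ (ties among the zero coordinates may be broken arbitrarily), so Lemma~\ref{lem:extreme-point} lets us use the same $\bmw^\ast$ to realize $f(\bmx_+) = \langle \bmw^\ast, \bmx_+ \rangle$ and $f(\bmx_-) = \langle \bmw^\ast, \bmx_- \rangle$. Since $\bmx = \bmx_+ + \bmx_-$, linearity of the inner product then yields $f(\bmx) = f(\bmx_+) + f(\bmx_-)$.

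Alternatively, the same decomposition drops out of the integral form in Lemma~\ref{lem:lovasz-extension-integral}: for $r>0$ one has $\{v : \bmx_+(v)\geq r\} = \{v : \bmx(v) \geq r\}$ and $\{v : \bmx_-(v) \geq r\} = \emptyset$, while for $r \leq 0$ one has $\{v : \bmx_+(v) \geq r\} = V$ and $\{v : \bmx_-(v) \geq r\} = \{v : \bmx(v) \geq r\}$. Splitting each of the two integrals at $r=0$ and using $F(\emptyset) = 0$ gives $f(\bmx_+) + f(\bmx_-) = f(\bmx)$ directly. This alternative viewpoint also makes it clear that the identity does not require $F(V) = 0$.

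With the identity in hand, the lemma is immediate: the hypothesis that $f$ is non-negative gives $f(\bmx_+), f(\bmx_-) \geq 0$, so
\[
{f(\bmx)}^2 = {f(\bmx_+)}^2 + 2 f(\bmx_+) f(\bmx_-) + {f(\bmx_-)}^2 \geq {f(\bmx_+)}^2 + {f(\bmx_-)}^2.
\]
I do not anticipate a substantive obstacle; the only delicate point is the tie-breaking argument that justifies using a single extreme point $\bmw^\ast$ as a maximizer for all three vectors simultaneously, and this is exactly the flexibility afforded by Lemma~\ref{lem:extreme-point}. The non-negativity hypothesis enters only at the very last step, to discard the cross term $2 f(\bmx_+) f(\bmx_-)$.
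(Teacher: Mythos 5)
Your proof is correct and follows essentially the same route as the paper's: both use Lemma~\ref{lem:extreme-point} to realize $f(\bmx)$, $f(\bmx_+)$, and $f(\bmx_-)$ with a single greedy maximizer $\bmw^*$ built from the common weakly-decreasing ordering, then expand the square of $\langle \bmw^*, \bmx_+ + \bmx_-\rangle$ and discard the cross term by non-negativity. The alternative derivation via the integral form is a nice bonus but not needed.
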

\begin{proof}
  Recall that
  \[
    f(\bmx) = \max_{\bmw \in B(F)}\langle \bmw,\bmx\rangle
  \]
  Let $\bmw^* \in \bbR^V$ be the maximizer of this maximization problem.
  Then, by Lemma~\ref{lem:extreme-point}, we can calculate $\bmw^*$ as follows:
  First, let $v_1,\ldots,v_n$ be an arbitrary ordering of indices in $V$, such that $\bmx(v_1) \geq \bmx(i_2) \geq \cdots \geq \bmx(v_n)$.
  Now, we obtain $\bmw^*(v_k) = F( v_k \mid \set{v_1,\ldots,v_{k-1}})$ for each $k \in [n]$.

  The value of $f(\bmx_+)$ and $f(\bmx_-)$ can also be determined by the following maximization problems:
  \[
    f(\bmx_+) = \max_{\bmw \in B(F)}\langle \bmw,\bmx_+\rangle
    \quad
    \text{and}
    \quad
    f(\bmx_-) = \max_{\bmw \in B(F)}\langle \bmw,\bmx_-\rangle
  \]
  Let $\bmw_+$ and $\bmw_-$ be the maximizers for $f(\bmx_+)$ and $f(\bmx_-)$, respectively.
  Then, as we can use the same ordering $v_1,\ldots,v_n$ to determine $\bmw_+$ and $\bmw_-$, we can assume $\bmw_+ = \bmw_- = \bmw^*$.
  Now, we have
  \begin{align*}
    {f(\bmx)}^2
    & =
    \langle \bmw^*,\bmx\rangle^2
    =
    \langle \bmw^*,\bmx_++\bmx_-\rangle^2\\
    & =
    \langle \bmw^*,\bmx_+\rangle^2
    +
    \langle \bmw^*,\bmx_-\rangle^2
    + 2 \langle \bmw^*,\bmx_+\rangle \langle \bmw^*,\bmx_-\rangle\\
    & =
    {f(\bmx_+)}^2 + {f(\bmx_-)}^2 + 2f(\bmx_+)f(\bmx_-) \\
    & \geq
    {f(\bmx_+)}^2 + {f(\bmx_-)}^2,
  \end{align*}
  where we used the non-negativity in the inequality.
\end{proof}

The last component we use for proving Theorem~\ref{the:intro-cheeger} is the following equivalent definition of the \Lovasz extension:
\begin{lemma}[See, e.g., Definition~3.1 of~\cite{Bach:2013wf}]\label{lem:lovasz-extension-sum}
  Let $F:\set{0,1}^V \to \bbR$ be a submodular function and $f:\bbR^V \to \bbR$ be its \Lovasz extension.
  For $\bmx \in \bbR^V$, let $v_1,\ldots,v_n$ be an ordering of $V$, such that $\bmx(v_1) \geq \bmx(v_2) \geq \cdots \geq \bmx(v_n)$.
  Let $S_k = \set{v_1,\ldots,v_k}\;(k \in \set{0,\ldots,n})$.
  Then, we have
  \[
    f(\bmx) = \sum_{k \in [n-1]} F(S_k)(\bmx(v_k)-\bmx(v_{k+1})) + F(V)\bmx(v_n).
  \]
  In particular when $F(V)=0$, we have
  \[
    f(\bmx) = \sum_{k \in [n-1]} F(S_k)(\bmx(v_k)-\bmx(v_{k+1})).
  \]
\end{lemma}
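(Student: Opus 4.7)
The plan is to combine the explicit greedy description of the subdifferential $\partial f(\bmx)$ provided by Lemma~\ref{lem:extreme-point} with Abel's summation-by-parts identity.

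First, I would invoke Lemma~\ref{lem:extreme-point} with the ordering $v_1,\ldots,v_n$ to produce an extreme point $\bmw^* \in \partial f(\bmx)$ whose coordinates are the marginal gains
\[
  \bmw^*(v_i) = F(v_i \mid S_{i-1}) = F(S_i) - F(S_{i-1}), \qquad i \in [n].
\]
Because $\bmw^*$ attains the maximum in $f(\bmx) = \max_{\bmw \in B(F)}\langle \bmw,\bmx\rangle$, this immediately yields
\[
  f(\bmx) = \langle \bmw^*, \bmx\rangle = \sum_{i=1}^n \bigl(F(S_i) - F(S_{i-1})\bigr)\bmx(v_i).
\]

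Next, I would rearrange this expression by summation by parts: collecting the two consecutive terms in which $F(S_k)$ appears (namely the $i=k$ and $i=k+1$ summands) pairs $F(S_k)$ with the difference $\bmx(v_k)-\bmx(v_{k+1})$. Using the normalization $F(S_0) = F(\emptyset) = 0$ to kill the left boundary term and using $S_n = V$ to identify the right boundary term as $F(V)\bmx(v_n)$, one arrives at
\[
  f(\bmx) = \sum_{k \in [n-1]} F(S_k)\bigl(\bmx(v_k)-\bmx(v_{k+1})\bigr) + F(V)\bmx(v_n),
\]
which is exactly the claimed identity. The particular case $F(V) = 0$ follows by simply dropping the final term.

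No significant obstacle is anticipated: Lemma~\ref{lem:extreme-point} already delivers the explicit maximizer of $\langle \cdot,\bmx\rangle$ over $B(F)$, and the remainder of the argument is a single line of Abel summation; the only points that must be tracked carefully are the normalization $F(\emptyset)=0$ and the boundary value at $i=n$, both of which the statement accounts for.
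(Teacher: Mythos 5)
Your proof is correct. The paper states this lemma as a known equivalent description of the \Lovasz extension (citing Bach) and gives no proof of its own, so there is nothing to compare against; your derivation---taking the greedy maximizer $\bmw^*(v_i)=F(S_i)-F(S_{i-1})$ from Lemma~\ref{lem:extreme-point} and applying Abel summation, using the normalization $F(\emptyset)=0$ (which the paper assumes throughout) to kill the left boundary term and $S_n=V$ for the right one---is the standard argument and matches the way the paper itself uses Lemma~\ref{lem:extreme-point} elsewhere (e.g., in the proof of Lemma~\ref{lem:reverse-triangle-inequality}).
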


\begin{lemma}\label{lem:strong-rounding}
  Let $F:\set{0,1}^V \to \bbR^E$ be a non-negative submodular transformation and $F(V) = \bmzero$, and let $\bmx \in \bbR^V$ be a vector $\langle \bmx, D_F^{1/2}\bmone \rangle = 0$.
  Then, there exists a set $\emptyset \subsetneq S \subsetneq V$ such that
  \[
    \phi_F(S) \leq 2\sqrt{\caR_F(\bmx)}.
  \]
  Moreover, we can find such a set $S$ in $O(n\log n + nm)$ time.
\end{lemma}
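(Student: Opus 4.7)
The plan is to follow the classical template for the upper bound in Cheeger's inequality, adapted to Lovász extensions of submodular functions: shift the vector $\bmy := D_F^{-1/2}\bmx$ into a balanced form, split it into positive and negative parts, sweep-round each part via Lemma~\ref{lem:weak-rounding-positive} and Corollary~\ref{cor:weak-rounding-negative}, and close via a Cauchy--Schwarz inequality adapted to submodular transformations. First, let $c^\ast \in \bbR$ be a volume-weighted median of $\bmy$ and replace $\bmy$ by $\bmy - c^\ast \bmone$. By Proposition~\ref{pro:shift-invariant} every $f_e(\bmy)$ is unchanged, and because $\sum_v \bmd_F(v)\bmy(v) = 0$ before the shift, the denominator of $\caR_F$ becomes $\|\bmx\|_2^2 + (c^\ast)^2 \vol_F(V) \geq \|\bmx\|_2^2$, so $\caR_F$ only decreases and it suffices to argue at the shifted quotient. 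After this reduction, $\vol_F(\supp(\bmy_+)) \leq \vol_F(V)/2$ and $\vol_F(\supp(\bmy_-)) \leq \vol_F(V)/2$, so a one-sided cut-to-volume ratio automatically upper bounds $\phi_F$.

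Next, decompose $\bmy = \bmy_+ + \bmy_-$. Apply Lemma~\ref{lem:weak-rounding-positive} to $\bmy_+^2$ to obtain a set $S_+ \subseteq \supp(\bmy_+)$ with $\cut_F(S_+)/\vol_F(S_+) \leq \sum_e f_e(\bmy_+^2)/\sum_v \bmd_F(v)\bmy_+(v)^2$, and apply Corollary~\ref{cor:weak-rounding-negative} to $-\bmy_-^2$ (equivalently, Lemma~\ref{lem:weak-rounding-positive} to the complementary submodular transformation $\bar F_e(S) := F_e(V\setminus S)$ and the non-negative vector $-\bmy_-$, using $\bar f_e(\bmz) = f_e(-\bmz)$) to obtain an analogous $S_-$. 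Lemma~\ref{lem:reverse-triangle-inequality} gives $f_e(\bmy_+)^2 + f_e(\bmy_-)^2 \leq f_e(\bmy)^2$ for each $e$, while $\sum_v \bmd_F(v)\bmy_+(v)^2 + \sum_v \bmd_F(v)\bmy_-(v)^2 \geq \|\bmx\|_2^2$ by construction. Combining these with the key inequality stated below through the mediant inequality $\min(A_+/B_+,A_-/B_-) \leq (A_++A_-)/(B_++B_-)$ yields $\min\bigl(\phi_F(S_+)^2,\phi_F(S_-)^2\bigr) \leq 4\caR_F(\bmx)$, so the better of the two sets satisfies $\phi_F(S) \leq 2\sqrt{\caR_F(\bmx)}$. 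The sweep sets are computed in $O(n\log n + nm)$ time exactly as inside Lemma~\ref{lem:weak-rounding-positive}.

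The technical heart and main obstacle is the submodular Cauchy--Schwarz inequality
\[
  \sum_{e \in E} f_e(\bmu^2) \;\leq\; 2\sqrt{\sum_{e \in E} f_e(\bmu)^2}\cdot\sqrt{\sum_{v \in V}\bmd_F(v)\bmu(v)^2}
\]
for $\bmu \geq \bmzero$, which is what introduces both the square root and the factor $2$ in the conclusion. I would prove it by using Lemma~\ref{lem:lovasz-extension-sum} along the decreasing $\bmu$-ordering (which agrees with that of $\bmu^2$ on the non-negative orthant) to expand $f_e(\bmu^2) = \sum_k F_e(S_k^{(e)})(\bmu(v_k)-\bmu(v_{k+1}))(\bmu(v_k)+\bmu(v_{k+1}))$, applying Cauchy--Schwarz per edge to split this into a ``difference'' factor and a ``sum'' factor, bounding the latter via $(\bmu(v_k)+\bmu(v_{k+1}))^2 \leq 2(\bmu(v_k)^2+\bmu(v_{k+1})^2)$ so that aggregation telescopes against the degrees into $\sum_v \bmd_F(v)\bmu(v)^2$, and finally applying an outer Cauchy--Schwarz over $e$ against the $f_e(\bmu)$. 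The graph and hypergraph cases reduce to one or two nonzero Edmonds levels per edge and are straightforward; the general submodular case has up to $n$ nonzero levels, and the hurdle is to route the weights $F_e(S_k^{(e)})$ between the two Cauchy--Schwarz steps carefully enough that a universal constant survives without accruing any $n$- or $m$-dependent factor.
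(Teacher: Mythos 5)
Your proposal is correct and follows essentially the same route as the paper's proof: shift $D_F^{-1/2}\bmx$ so both supports have at most half the volume, split into positive and negative parts, use Lemma~\ref{lem:reverse-triangle-inequality} plus the mediant inequality to reduce to one side, prove the submodular Cauchy--Schwarz bound $\sum_e f_e(\bmu^2) \leq 2\sqrt{\sum_e f_e(\bmu)^2}\sqrt{\sum_v \bmd_F(v)\bmu(v)^2}$ via Lemma~\ref{lem:lovasz-extension-sum} along the common Edmonds ordering, and finish with Lemma~\ref{lem:weak-rounding-positive} or Corollary~\ref{cor:weak-rounding-negative} on the squared vector. The only cosmetic difference is how the factor $2$ is extracted (the paper splits $a^2-b^2=(a-b)a+(a-b)b$ and applies Cauchy--Schwarz to each term, while you keep $(a+b)$ and use $(a+b)^2\leq 2(a^2+b^2)$), and the ``hurdle'' you flag is resolved exactly as the paper does, by using non-negativity of $F_e$ and of the successive differences to bound the sum of squares by the square of the sum.
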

\begin{proof}
  Let $\tilde{\bmx} = D_F^{-1/2}\bmx $.
  Note that we have assumed $\bmd_F(v)$ is positive for every $v \in V$.
  Then, we have
  \begin{align*}
    \caR_F(\bmx)
    & =
    \frac{\sum\limits_{e \in E} {f_e(\tilde{\bmx})}^2 }{\sum\limits_{v \in V} \bmd_F(v) {\tilde{\bmx}(v)}^2 },
  \end{align*}
  where $f:\bbR^V \to \bbR^E$ is the \Lovasz extension of $F$.
  Let $\tilde{\bmy} = \tilde{\bmx} + c \bmone$ for some appropriate $c \in \bbR$ such that $\vol_F(\supp(\tilde{\bmy}_+)) \leq \vol_F(V)/2$ and $ \vol_F(\supp(\tilde{\bmy}_-)) \leq \vol_F(V)/2$ hold.
  Let $\bmy = D_F^{1/2}\tilde{\bmy}$.
  Then, as $f_e(\tilde{\bmx}) = f_e(\tilde{\bmy})$ by Proposition~\ref{pro:shift-invariant} and $\|D_F^{1/2}\tilde{\bmy}\|_2 \geq \|D_F^{1/2}\tilde{\bmx}\|_2$ by the Pythagorean theorem, we have
  \begin{align}
    & \caR_F(\bmx) = \frac{\sum\limits_{e \in E} {f_e(\tilde{\bmx})}^2 }{\sum_{v \in V}\bmd_F(v) {\tilde{\bmx}(v)}^2}
    \geq \frac{\sum\limits_{e \in E} {f_e(\tilde{\bmy})}^2 }{ \sum\limits_{v \in V} \bmd_F(v) {\tilde{\bmy}(v)}^2}
    \geq
    \frac{\sum\limits_{e \in E} {f_e(\tilde{\bmy}_+)}^2 + \sum\limits_{e \in E} {f_e(\tilde{\bmy}_-)}^2  }{ \sum\limits_{v \in V} \bmd_F(v) {\tilde{\bmy}_+(v)}^2 + \sum\limits_{v \in V} \bmd_F(v) {\tilde{\bmy}_-(v)}^2} \tag{By Lemma~\ref{lem:reverse-triangle-inequality}} \nonumber \\
    \geq &
    \min\left\{
    \frac{\sum\limits_{e \in E} {f_e(\tilde{\bmy}_+)}^2}{ \sum\limits_{v \in V} \bmd_F(v) {\tilde{\bmy}_+(v)}^2},
    \frac{\sum\limits_{e \in E} {f_e(\tilde{\bmy}_-)}^2}{ \sum\limits_{v \in V} \bmd_F(v) {\tilde{\bmy}_-(v)}^2}
    \right\}. \label{eq:cheeger-two-cases}
  \end{align}

  Suppose the term for $\tilde{\bmy}_+$ achieves the minimum in~\eqref{eq:cheeger-two-cases}.
  Let $\tilde{\bmy}_+^2 \in \bbR^V$ be the vector defined as $\tilde{\bmy}_+^2(v) = {\tilde{\bmy}_+(v)}^2$ for each $v \in V$.
  Let $v_1,\ldots,v_n$ be the ordering of $V$, such that $\tilde{\bmy}_+^2(v_1) \geq \cdots \geq \tilde{\bmy}_+^2(v_n)$.
  For each $e \in E$, we take the subsequence $v_{e,1},\ldots,v_{e,n_e}$ of this ordering consisting of elements relevant to $F_e$, preserving the order.
  Note that $\supp(F_e) = \set{v_{e,1},\ldots,v_{e,n_e}}$ and that the ordering $v_{e,1},\ldots,v_{e,n_e}$ can be used to compute $f_e(\tilde{\bmy}_+^2)$ as well as $f_e(\tilde{\bmy}_+)$.
  As $F_e(V) = F_e(\set{v_{e,1},\ldots,v_{e,n_e}}) = 0$ for every $e \in E$, we have
  \begin{align}
    & \sum_{e \in E} f_e(\tilde{\bmy}_+^2)
    = \sum_{e \in E}\sum_{k \in [n_e-1]} F_e(\set{v_{e,1},\ldots,v_{e,k}})({\tilde{\bmy}(v_{e,k})}^2-\tilde{\bmy}{(v_{e,k+1})}^2) \tag{By Lemma~\ref{lem:lovasz-extension-sum}} \\
    & = \sum_{e \in E}\sum_{k\in [n_e-1]} F_e(\set{v_{e,1},\ldots,v_{e,k}})\bigl(\tilde{\bmy}_+(v_{e,k})-\tilde{\bmy}_+(v_{e,k+1})\bigr)\tilde{\bmy}_+(v_{e,k}) \nonumber \\
    & \quad + \sum_{e \in E}\sum_{k\in [n_e-1]} F_e(\set{v_{e,1},\ldots,v_{e,k}})\bigl(\tilde{\bmy}_+(v_{e,k})-\tilde{\bmy}_+(v_{e,k+1})\bigr)\tilde{\bmy}_+(v_{e,k+1}).
    \label{eq:strong-rounding-1}
  \end{align}
  We now analyze the first term.
  \begin{align}
    & \sum_{e \in E}\sum_{k\in [n_e-1]} F_e(\set{v_{e,1},\ldots,v_{e,k}})\bigl(\tilde{\bmy}_+(v_{e,k})-\tilde{\bmy}_+(v_{e,k+1})\bigr)\tilde{\bmy}_+(v_{e,k}) \nonumber \\
    & \leq
    \sqrt{\sum_{e \in E}\sum_{k\in [n_e-1]} {F_e(\set{v_{e,1},\ldots,v_{e,k}})}^2{(\tilde{\bmy}_+(v_{e,k})-\tilde{\bmy}_+(v_{e,k+1}))}^2}  \sqrt{\sum_{e \in E}\sum_{k\in [n_e-1]}  {\tilde{\bmy}_+(v_{e,k})}^2} \tag{By Cauchy-Schwarz} \\
    & \leq
    \sqrt{\sum_{e \in E}{\Bigl(\sum_{k\in [n_e-1]} F_e(\set{v_{e,1},\ldots,v_{e,k}}) \bigl(\tilde{\bmy}_+(v_{e,k})-\tilde{\bmy}_+(v_{e,k+1})\bigr) \Bigr)}^2}  \sqrt{\sum_{e \in E}\sum_{k\in [n_e-1]}  {\tilde{\bmy}_+(v_{e,k})}^2} \nonumber \\
    & \leq \sqrt{\sum_{e \in E}{f_e(\tilde{\bmy}_+)}^2}  \sqrt{\sum_{v \in V}\bmd_F(v) {\tilde{\bmy}_+(v)}^2}. \label{eq:strong-rounding-2}
  \end{align}
  In the second inequality, we used the fact that $F_e$ is non-negative for every $e \in E$.

  Similarly, we have
  \begin{align}
    \sum_{e \in E}\sum_{k\in [n_e-1]} F_e(\set{v_{e,1},\ldots,v_{e,k}})(\tilde{\bmy}_+(v_{e,k})-\tilde{\bmy}_+(v_{e,k+1}))\tilde{\bmy}_+(v_{e,k+1})
    \leq
    \sqrt{\sum_{e \in E}{f_e(\tilde{\bmy}_+)}^2} \sqrt{\sum_{v \in V}\bmd_F(v) {\tilde{\bmy}_+(v)}^2}. \label{eq:strong-rounding-3}
  \end{align}
  Combining~\eqref{eq:strong-rounding-1},~\eqref{eq:strong-rounding-2},~\eqref{eq:strong-rounding-3}, for $\bmy_+ = D_F^{1/2}\tilde{\bmy}_+$ we have
  \begin{align*}
    \frac{\sum\limits_{e \in E} f_e(\tilde{\bmy}_+^2)}{\sum\limits_{v \in V}\bmd_F(v){\tilde{\bmy}_+(v)}^2 }
    \leq 2\sqrt{\frac{\sum\limits_{e \in E} {f_e(\tilde{\bmy}_+)}^2}{\sum\limits_{v \in V}\bmd_F(v){\tilde{\bmy}_+(v)}^2}} \leq 2\sqrt{\caR_F(\bmy_+)} \leq 2\sqrt{\caR_F(\bmx)}.
  \end{align*}
  Now, we apply Lemma~\ref{lem:weak-rounding-positive} on $\tilde{\bmy}_+^2$.
  Then, we obtain a set $\emptyset \subsetneq S \subseteq \supp(\tilde{\bmy}_+^2)$ with $\vol_F(S) \leq \vol_F(\supp(\tilde{\bmy}_+^2)) = \vol_F(\supp(\tilde{\bmy}_+)) \leq \vol_F(V)/2$.
  Moreover, we have $\cut_F(S)/\vol_F(S) \leq 2\sqrt{\caR_F(\bmx)}$, which means $\phi_F(S) \leq 2\sqrt{\caR_F(\bmx)}$.

  Now, we consider the case that the term for $\tilde{\bmy}_-$ achieves the minimum in~\eqref{eq:cheeger-two-cases}.
  This time, we define $\tilde{\bmy}_-^2 \in \bbR^V$ as the vector such that $\tilde{\bmy}_-^2(v) = {\tilde{\bmy}_-(v)}^2$ for each $v \in V$.
  By an argument similar to the previous case, we can show that
  \begin{align*}
    \frac{\sum\limits_{e \in E} f_e(-\tilde{\bmy}_-^2)}{\sum\limits_{v \in V}\bmd_F(v){\tilde{\bmy}_-(v)}^2 }
    \leq 2\sqrt{\frac{\sum\limits_{e \in E} {f_e(\tilde{\bmy}_-)}^2}{\sum\limits_{v \in V}\bmd_F(v){\tilde{\bmy}_-(v)}^2}} \leq 2\sqrt{\caR_F(\bmy_-)} \leq 2\sqrt{\caR_F(\bmx)}.
  \end{align*}
  Here, we apply Corollary~\ref{cor:weak-rounding-negative} on $-\tilde{\bmy}_-^2$.
  Then, we obtain a set $\emptyset \subsetneq S \subseteq \supp(\tilde{\bmy}_-^2)$ with $\vol_F(S) \leq \vol_F(\supp(\tilde{\bmy}_-^2)) = \vol_F(\supp(\tilde{\bmy}_-)) \leq \vol_F(V)/2$.
  Moreover,  we have $\cut_F(V \setminus S)/\vol_F(S) \leq 2\sqrt{\caR_F(\bmx)}$, which means $\phi_F(S) \leq 2\sqrt{\caR_F(\bmx)}$.

  In both cases, we have $\phi_F(S) \leq 2\sqrt{\caR_F(\bmx)}$.
\end{proof}

\begin{proof}[Proof of the right inequality of Theorem~\ref{the:intro-cheeger}]
  For each $e \in E$, as $F_e(V) = 0$, we have $\bmzero \in B(F_e)$.
  It follows that $f_e$ is non-negative because $f_e(\bmx) = \max_{\bmw \in B(F_e)}\langle \bmw,\bmx\rangle \geq \langle \bmzero, \bmx\rangle = 0$.

  Now, we obtain $\phi_F \leq 2\sqrt{\caR_F(\bmx)}$ by invoking Lemma~\ref{lem:strong-rounding} with the eigenvector $\bmx \in \bbR^V$ corresponding to $\lambda_F$.
  The theorem follows because $\caR_F(\bmx) = \lambda_F$ by Theorem~\ref{the:normalized-eigenvalues}.
\end{proof}

\section{Covering Number of Base Polytopes}\label{sec:covering}

For a set $S \subseteq \bbR^V$ and $\epsilon > 0$, we say that a set of points $C$ in $S$ is an \emph{$\epsilon$-cover} of $S$ if, for any $\bmx \in S$, there exists a point $\bmp \in C$ with $\|\bmx-\bmp\|_2 \leq \epsilon$.
The \emph{$\epsilon$-covering number} of $S$, denoted by $N(\epsilon,S)$, is the smallest size of an $\epsilon$-cover of $S$.
In this section, we show that the $\epsilon \|B(F)\|_H$-covering number of the base polytope $B(F)$ of a submodular function $F\colon\set{0,1}^V \to \bbR$ is small and provides an efficient method to construct such a cover.

The following lemma states that the base polytope of a submodular function is contained in a small $\ell_1$-ball.
\begin{lemma}\label{lem:l1-radius-of-base-polytope}
  Let $F\colon\set{0,1}^V \to \bbR_+$ be a non-negative submodular function.
  Then, we have
  \[
    \max_{\bmw \in B(F)}\|\bmw\|_1 \leq 2\|F\|_\infty.
  \]
\end{lemma}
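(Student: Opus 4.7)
The plan is to reduce to extreme points of $B(F)$, use the explicit description given by Lemma~\ref{lem:extreme-point}, and then split coordinates into positive and negative jumps and bound them separately using submodularity.

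First I would observe that since $\|\cdot\|_1$ is convex, the maximum of $\|\bmw\|_1$ over the polytope $B(F)$ is attained at an extreme point. By Lemma~\ref{lem:extreme-point}, such an extreme point is determined by some ordering $v_1,\dots,v_n$ of $V$ via $\bmw(v_i) = F(S_i) - F(S_{i-1})$, where $S_i = \{v_1,\dots,v_i\}$ and $S_0 = \emptyset$. Write $T = \{v_i : \bmw(v_i) > 0\}$ and
\[
P = \sum_{v_i \in T} \bmw(v_i), \qquad N = -\sum_{v_i \notin T}\bmw(v_i),
\]
so that both are non-negative and $\|\bmw\|_1 = P + N$. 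Since the coordinates telescope, $P - N = \sum_i \bmw(v_i) = F(V) - F(\emptyset) = F(V) \geq 0$, hence $N \leq P$.

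The main step is to show $P \leq \|F\|_\infty$. Set $R_k = T \cap S_k$. For each $v_i \in T$ we have $R_{i-1} \subseteq S_{i-1}$, so submodularity (diminishing marginal returns) gives $F(v_i \mid S_{i-1}) \leq F(v_i \mid R_{i-1})$. Summing over $v_i \in T$ in order, the right-hand side telescopes exactly to $F(T) - F(\emptyset) = F(T)$, because $R_i = R_{i-1} \cup \{v_i\}$ for $v_i \in T$ and $R_i = R_{i-1}$ otherwise. Hence
\[
P = \sum_{v_i \in T} F(v_i \mid S_{i-1}) \leq F(T) \leq \|F\|_\infty.
\]
Combined with $N \leq P$, this yields $\|\bmw\|_1 = P + N \leq 2P \leq 2\|F\|_\infty$, completing the proof.

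The main obstacle (and the only nontrivial step) is bounding the positive part $P$: individually bounding $|\bmw(v_i)| \leq \|F\|_\infty$ would only give the useless bound $n\|F\|_\infty$, and it is not obvious a priori that the positive jumps along a chain cannot accumulate past $\|F\|_\infty$. The trick is to replace the full prefix $S_{i-1}$ by its restriction $R_{i-1}$ to indices in $T$; submodularity makes the marginals larger, but now the sum telescopes over the chain $\emptyset \subseteq R_1 \subseteq \cdots \subseteq R_n = T$, collapsing to a single function value. Once $P$ is controlled, the negative part is essentially free thanks to the identity $P - N = F(V) \geq 0$, which is where non-negativity of $F$ is used.
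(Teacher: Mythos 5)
Your proof is correct and follows essentially the same route as the paper's: reduce to an extreme point via Lemma~\ref{lem:extreme-point}, bound the sum of positive coordinates by $F(T)\leq\|F\|_\infty$ using the diminishing-returns inequality $F(v_i\mid S_{i-1})\leq F(v_i\mid T\cap S_{i-1})$ and telescoping, and control the negative part via $P-N=F(V)\geq 0$. The only difference is presentational (you state the two bounds $P\leq\|F\|_\infty$ and $N\leq P$ separately, while the paper combines them into $F(S^+_{n^+})\geq\frac12\|\bmw^*\|_1$).
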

\begin{proof}
  As $B(F)$ is a convex polytope, the maximum $\ell_1$-norm of a point in $B(F)$ is attained at an extreme point $\bmw^*$ of $B(F)$.
  By Lemma~\ref{lem:extreme-point}, there exists an ordering $v_1,\ldots,v_n$ of $V$, such that $\bmw^*(v_k) =  F(v_k\mid S_{k-1})\;(k\in [n])$, where $S_k = \set{v_1,\ldots,v_k}$.

  We now lower bound $\|F\|_\infty$ by using $\|\bmw^*\|_1$.
  Let $v^+_1,\ldots,v^+_{n^+}$ be the sequence obtained from the ordering $v_1,\ldots,v_n$ by extracting $v_k$'s such that $\bmw^*(v_k) > 0$, preserving the order, and let $S^+_k = \set{v^+_1,\ldots,v^+_k}\;(k \in [n^+])$.
  Then based on the submodularity, for any $k \in [n^+]$, we have $F(v^+_k \mid S^+_{k-1}) \geq F(v^+_k\mid S_{k'-1}) = \bmw^*(v^+_k)$, where $k' \geq k$ is such that $v^+_k = v_{k'}$.
  This means that
  \[
    F(S^+_{n^+}) = \sum_{k \in [n^+]} F(v^+_k \mid S^+_{k-1}) \geq \sum_{k \in [n^+]}\bmw^*(v^+_k) \geq \frac{1}{2}\|\bmw^*\|_1,
  \]
  where we used the fact that $\bmw^*(V)=f(V) \geq 0$ in the last inequality.
  Then, we have $\|\bmw^*\|_1 \leq 2\|F\|_\infty$.
\end{proof}
The above lemma suggests that, when $\|F\|_\infty\leq 1/2$, the base polytope is contained in the $\ell_1$-ball $B_1^V := \set{\bmx \in \bbR^V \mid \|\bmx\|_1 \leq 1}$.
The following covering number of $B_1^V$ is known to be obtained by using Maurey's empirical method (see, e.g.,~\cite{Pisier:1999ux})
\begin{lemma}\label{lem:covering-number-of-l1-ball}
  For every $\epsilon >0$, we have
  \[
    N(\epsilon, B_1^V) \leq U_{\ref{lem:covering-number-of-l1-ball}}(\epsilon, B_1^V):={\Bigl(1+2\epsilon^2 n\Bigr)}^{1/\epsilon^2},
  \]
  where $n = |V|$.
  Moreover, we can compute an $\epsilon$-cover of $B_1^V$ of size $U_{\ref{lem:covering-number-of-l1-ball}}(\epsilon, B_1^V)$ in $O(nU_{\ref{lem:covering-number-of-l1-ball}}(\epsilon, B_1^V))$ time.
\end{lemma}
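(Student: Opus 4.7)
The plan is to establish the bound via \emph{Maurey's empirical method}, a standard probabilistic construction for $\ell_1$-type sets. The starting observation is that every $\bmx \in B_1^V$ is a convex combination of the $2n+1$ extreme points $\set{\bmzero}\cup\set{\pm e_v \mid v\in V}$: setting $\alpha=\|\bmx\|_1$, $q_v=|\bmx(v)|$, and $\sigma_v=\mathrm{sgn}(\bmx(v))$, we have $\bmx=(1-\alpha)\bmzero+\sum_{v\in V}q_v(\sigma_v e_v)$, which is a convex combination since $\alpha\leq 1$. This representation induces a probability distribution $\mu_{\bmx}$ on the finite set $\caY:=\set{\bmzero}\cup\set{\pm e_v\mid v\in V}$ whose mean is $\bmx$.

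Next, I would fix $k:=\lceil 1/\epsilon^2\rceil$ and draw i.i.d.\ samples $\bmv_1,\dots,\bmv_k\sim\mu_{\bmx}$, letting $\hat\bmx=\frac{1}{k}\sum_{j=1}^k\bmv_j$. Since $\E[\bmv_j]=\bmx$ and $\|\bmv_j\|_2\leq 1$,
\[
  \E\|\hat\bmx-\bmx\|_2^2=\frac{1}{k}\bigl(\E\|\bmv_1\|_2^2-\|\bmx\|_2^2\bigr)\leq \frac{1}{k}\leq \epsilon^2,
\]
by the standard variance identity for i.i.d.\ averages. Hence, for every $\bmx$ there exists at least one realization of $(\bmv_1,\ldots,\bmv_k)$ whose empirical mean is within $\epsilon$ in $\ell_2$-norm. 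Therefore, the set $C$ of all possible empirical means $\frac{1}{k}\sum_j \bmv_{i_j}$ as $\bmv_{i_j}$ ranges over $\caY$ is an $\epsilon$-cover of $B_1^V$, and $|C|$ is at most the number of multisets of size $k$ from $\caY$, namely $\binom{2n+k}{k}$.

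To match the stated quantity $U_{\ref{lem:covering-number-of-l1-ball}}(\epsilon,B_1^V)={(1+2\epsilon^2 n)}^{1/\epsilon^2}$, I would apply the inequality $\binom{2n+k}{k}\leq \frac{(2n+k)^k}{k!}$ together with $k!\geq (k/e)^k$ or a slightly sharper bound, and rewrite the resulting expression as $((2n+k)/k\cdot\text{const})^{1/\epsilon^2} = (1+2\epsilon^2 n)^{1/\epsilon^2}$ (up to constants, depending on which form of the bound one uses). The main obstacle I anticipate is tuning the counting and the choice of $k$ to land exactly on the stated constant; the underlying probabilistic argument is already robust, and only the final numerical accounting requires care.

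For the algorithmic claim, I would enumerate all size-$k$ multisets of $\caY$ by the usual lexicographic scheme (size $|\caY|^k$ in the worst case, but more carefully $\binom{2n+k}{k}=U_{\ref{lem:covering-number-of-l1-ball}}(\epsilon, B_1^V)$), compute the corresponding average vector in $O(n)$ time per multiset, and output the resulting list. This gives an $\epsilon$-cover of the claimed size in $O(nU_{\ref{lem:covering-number-of-l1-ball}}(\epsilon,B_1^V))$ time as required.
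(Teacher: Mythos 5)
Your proposal is Maurey's empirical method, which is exactly the argument the paper invokes (it gives no proof of its own, citing Pisier), and the probabilistic core is sound: the representation of $\bmx\in B_1^V$ as a mixture over $\set{\bmzero}\cup\set{\pm e_v}$, the variance bound $\E\|\hat\bmx-\bmx\|_2^2\leq 1/k$, and the conclusion that the set of empirical means is an $\epsilon$-cover contained in $B_1^V$ are all correct. The one place you should be careful is the final count, which does not "land on the stated constant up to constants." With $k=\lceil 1/\epsilon^2\rceil$ one has $\binom{2n+k}{k}\geq {\bigl((2n+k)/k\bigr)}^k={(1+2n/k)}^k$, i.e., the multiset count is already at least $U_{\ref{lem:covering-number-of-l1-ball}}(\epsilon,B_1^V)$; bounding it by $(2n+k)^k/k!$ with $k!\geq {(k/e)}^k$ yields ${\bigl(e(1+2\epsilon^2 n)\bigr)}^{1/\epsilon^2}$, which exceeds the stated bound by $e^{1/\epsilon^2}$ — an exponential-in-$1/\epsilon^2$ factor, not a constant. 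This slack is harmless for every use of the lemma in the paper (it only adds $O(1/\epsilon^2)$ to $\log U$, so the approximation ratios and the ${\poly(nm)}^{\poly(1/\epsilon)}$ running times are unaffected), but as a proof of the lemma as literally stated it falls short; to get the exact expression one needs either a sharper count of the \emph{distinct} empirical means or a citation to the precise form in Pisier. The algorithmic claim via enumeration of multisets is fine modulo the same constant-in-the-exponent issue.
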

This lemma states that the $\epsilon$-covering number of the $B_1^V$ is polynomial in $n$ (as long as $\epsilon$ is constant), which will be crucial when bounding the time complexity and the approximation ratio of our algorithms for approximating eigenvalues in Sections~\ref{sec:symmetric} and~\ref{sec:general}.
In contrast, the $\epsilon$-covering number of the $\ell_2$-ball $B_2^V := \set{\bmx \in \bbR^V \mid \|\bmx\|_2 \leq 1}$ is exponential in $n$ (see, e.g.,~\cite{Pisier:1999ux}).

Lemmas~\ref{lem:l1-radius-of-base-polytope} and~\ref{lem:covering-number-of-l1-ball} implies that we can compute a polynomial-size set $P$ of points in $\bbR^V$ such that any point in the base polytope $B(F)$ of a submodular function $F\colon\set{0,1}^V \to \bbR$ has a close point in $P$.
Obtaining an $\epsilon$-cover of $B(F)$ from $P$ requires us to eliminate the points outside of $B(F)$.
To this end, we use Wolfe's algorithm~\cite{Wolfe:1976dg}, which computes the minimum $\ell_2$-norm point in a polytope.
The following theoretical guarantee is known for Wolfe's algorithm:
\begin{lemma}[\cite{Chakrabarty:2014uy}]\label{lem:Chakrabarty}
  Let $F\colon\set{0,1}^V \to \bbR$ be a submodular function, and let $\bmp \in \bbR^V$ and $r > 0$.
  Wolfe's algorithm computes a point $\bmw^* \in B(F) \cap (\bmp + rB_1^V)$ such that $\|\bmw^*\|_2^2 \leq \min_{\bmw \in B(F) \cap (\bmp + rB_1^V)}\|\bmw\|_2^2 + 2\epsilon^2$ in $O\bigl(n^4\|B(F) \cap (\bmp + rB_1^V)\|_H^2/\epsilon^2\bigr)$ time, where $n = |V|$.
\end{lemma}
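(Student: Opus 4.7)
Since this lemma is attributed to Chakrabarty, Jain, and Kothari, the plan is to invoke their convergence analysis of Wolfe's minimum-norm-point algorithm. Their result bounds both the per-iteration cost and the number of iterations needed to reach an $\epsilon^2$-additive approximation of the minimum squared $\ell_2$-norm over a polytope, in terms of the polytope's $\ell_2$-diameter and the complexity of its linear-minimization oracle.

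The first step is to recast the problem as a minimum-norm-point computation over the polytope $P := B(F) \cap (\bmp + rB_1^V)$. A modular shift, setting $F'(S) := F(S) - \sum_{v \in S}\bmp(v)$, preserves submodularity and yields $B(F') = B(F) - \bmp$, so $P$ is congruent to $B(F') \cap rB_1^V$. Using Edmonds' greedy algorithm on $F'$ together with the explicit vertex description of $rB_1^V$ (as the convex hull of the $2n$ points $\pm r\bme_i$), a linear-minimization oracle over $P$ can be implemented in $O(n^2)$ time per call.

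The second step is to combine this per-iteration cost with the Chakrabarty-Jain-Kothari iteration bound of $O(n^2 \|P\|_H^2 / \epsilon^2)$ major cycles, yielding the claimed total running time of $O(n^4 \|P\|_H^2 / \epsilon^2)$. Each major cycle of Wolfe's algorithm invokes the oracle once and then performs an affine projection onto the current active set, whose cost is dominated by the $O(n^2)$ oracle call.

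The main technical obstacle is verifying that the Chakrabarty-Jain-Kothari iteration bound, originally derived for base polytopes of submodular functions, extends to $P$ with only a polynomial-in-$n$ overhead. I expect this to follow by revisiting their potential-function argument, which depends only on the existence of a linear-minimization oracle and on the $\ell_2$-diameter of the ambient polytope, neither of which is specific to base polytopes per se. The polytope $P$ inherits the needed diameter bound $\|P\|_H$ directly, and the fact that its extreme points arise as intersections of faces of $B(F)$ with faces of $\bmp + rB_1^V$ allows the same monotone-decrease argument for the squared-norm potential to go through.
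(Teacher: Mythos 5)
Your route is the same as the paper's. Lemma~\ref{lem:Chakrabarty} is stated as a citation, and the paper's entire justification consists of the reference to \cite{Chakrabarty:2014uy} plus the remark that the convergence analysis of Wolfe's algorithm depends only on the convexity of the feasible polytope, a bound on its $\ell_2$-diameter, and the ability to solve linear programs over it --- all of which carry over to $B(F)\cap(\bmp+rB_1^V)$. Your recasting of the problem via the modular shift $F'(S)=F(S)-\sum_{v\in S}\bmp(v)$ is exactly the reduction the paper performs in Algorithm~\ref{alg:net-of-base-polytope} (there written $F_\bmp$), and your closing paragraph about the potential argument needing only the oracle and the diameter is the paper's remark restated.

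The one genuine gap is your claimed implementation of the linear-minimization oracle. Minimizing a linear functional over $B(F')\cap rB_1^V$ cannot be done in $O(n^2)$ time by running Edmonds' greedy algorithm on $F'$ and consulting the $2n$ vertices of $rB_1^V$: linear optimization over an intersection of two polytopes does not decompose into separate optimizations over the factors. The optimum is attained at a vertex of the intersection, which is in general a vertex of neither $B(F')$ nor $rB_1^V$ (it arises from a face of one polytope meeting a face of the other), and the greedy vertex of $B(F')$ will typically violate the $\ell_1$-constraint, while truncating or projecting it does not yield the linear optimum over the intersection. The paper deliberately avoids committing to an implementation: it asserts only that ``we can solve a linear programming over the polytope,'' which is true, e.g., via the ellipsoid method with a separation oracle (separation over $B(F')$ is submodular function minimization; separation over $rB_1^V$ is trivial), but at a per-call cost far exceeding $O(n^2)$. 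Consequently your explicit bookkeeping of the total running time as $O(n^2\|P\|_H^2/\epsilon^2)$ major cycles times an $O(n^2)$ oracle call does not stand up as written; to keep it you would need to exhibit a genuinely fast combinatorial LP oracle for the intersection, or else retreat to the paper's black-box treatment of the per-iteration LP.
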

We remark that~\cite{Chakrabarty:2014uy} considers the case that the given polytope is $B(F)$ instead of $B(F) \cap (\bmp + rB_1^V)$.
However, their argument relies only on the fact that the given polytope is convex and we can solve a linear programming over the polytope, which is true for $B(F) \cap (\bmp + rB_1^V)$.

\begin{algorithm}[t!]
  \caption{Construction of an $\epsilon$-cover of the base polytope of a submodular function.}\label{alg:net-of-base-polytope}
  \begin{algorithmic}[1]
  \Require{a submodular function $F\colon\set{0,1}^V \to [0,1]$, $r \geq 0$, and $\epsilon>0$.}
  \State{Construct an $(\epsilon/3)$-cover $P$ of $rB_1^V$, where we identify $V$ with $[n]$.}
  \State{$C \leftarrow \emptyset$.}
  \For{each $\bmp \in P$}
    \State{Define $F_{\bmp}:\set{0,1}^V \to \bbR$ so that $F_\bmp(S) = F(S) - \bmp(S)\;(S \subseteq V)$.}
    \State{Run Wolfe's algorithm on $B(F_\bmp) \cap (-\bmp + rB_1^V)$ and $\epsilon/3$, and let $\bmw_\bmp$ be the returned vector.}
    \If{$\|\bmw_\bmp\|_2 \leq 2\epsilon/3$}
      \State{$C \leftarrow C \cup \set{\bmp + \bmw_\bmp}$.}
    \EndIf{}
  \EndFor{}
  \State{\Return{$C$.}}
  \end{algorithmic}
\end{algorithm}


Now, we show that we can construct a small cover for a base polytope restricted to a small $\ell_1$-ball.
\begin{lemma}\label{lem:covering-base-polytope}
  Let $F\colon\set{0,1}^V \to \bbR$ be a non-negative submodular function.
  For every $\epsilon > 0$, we can construct an $\epsilon$-cover $C$ of $B(F) \cap rB_1^V$ of size $O(U_{\ref{lem:covering-number-of-l1-ball}}(\frac{\epsilon}{3r}, B_1^V))$ in $O(r^2n^4U_{\ref{lem:covering-number-of-l1-ball}}(\frac{\epsilon}{3r}, B_1^V)/\epsilon^2)$ time, where $n = |V|$.
\end{lemma}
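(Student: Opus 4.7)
The plan is to verify that Algorithm~\ref{alg:net-of-base-polytope} produces an $\epsilon$-cover of $B(F) \cap rB_1^V$ within the claimed size and time bounds. The central geometric observation I would make first is the translation identity $B(F_\bmp) = B(F) - \bmp$, which is immediate from the definitions of $F_\bmp$ and of the base polytope. Consequently $B(F_\bmp) \cap (-\bmp + rB_1^V) = \bigl(B(F) \cap rB_1^V\bigr) - \bmp$, and the minimum-$\ell_2$-norm point of this polytope is precisely $\bmw^\star - \bmp$, where $\bmw^\star$ is the nearest point of $B(F) \cap rB_1^V$ to $\bmp$; its norm equals the distance from $\bmp$ to $B(F) \cap rB_1^V$. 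In particular, each vector $\bmp + \bmw_\bmp$ added to $C$ lies in $B(F) \cap rB_1^V$ as required.

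For correctness, fix any $\bmw \in B(F) \cap rB_1^V$. Since $\bmw \in rB_1^V$ and $P$ is an $(\epsilon/3)$-cover of $rB_1^V$, there exists $\bmp \in P$ with $\|\bmw - \bmp\|_2 \leq \epsilon/3$. By the observation above, the true optimum satisfies $\min_{\bmw' \in B(F_\bmp) \cap (-\bmp + rB_1^V)}\|\bmw'\|_2 \leq \epsilon/3$. Lemma~\ref{lem:Chakrabarty} applied with tolerance $\epsilon/3$ then returns $\bmw_\bmp$ with $\|\bmw_\bmp\|_2^2 \leq (\epsilon/3)^2 + 2(\epsilon/3)^2 = 3(\epsilon/3)^2$, so $\|\bmw_\bmp\|_2 \leq \sqrt{3}\,\epsilon/3 \leq 2\epsilon/3$ and $\bmp + \bmw_\bmp$ is added to $C$. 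The triangle inequality gives $\|\bmw - (\bmp + \bmw_\bmp)\|_2 \leq \|\bmw - \bmp\|_2 + \|\bmw_\bmp\|_2 \leq \epsilon/3 + 2\epsilon/3 = \epsilon$, completing the covering property.

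For the size bound, $|C| \leq |P|$, and a straightforward rescaling by $1/r$ turns an $(\epsilon/3)$-cover of $rB_1^V$ into an $\epsilon/(3r)$-cover of $B_1^V$, yielding $|P| \leq U_{\ref{lem:covering-number-of-l1-ball}}(\epsilon/(3r), B_1^V)$ by Lemma~\ref{lem:covering-number-of-l1-ball}. For the running time, the dominant cost per iteration is the Wolfe call; any point $\bmx \in -\bmp + rB_1^V$ satisfies $\|\bmx\|_1 \leq \|\bmx + \bmp\|_1 + \|\bmp\|_1 \leq 2r$ and hence $\|\bmx\|_2 \leq 2r$, so $\|B(F_\bmp) \cap (-\bmp + rB_1^V)\|_H = O(r)$. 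Lemma~\ref{lem:Chakrabarty} then yields $O(n^4 r^2/\epsilon^2)$ time per iteration, and multiplying by $|P|$ matches the claimed bound; the initial construction of $P$ via Lemma~\ref{lem:covering-number-of-l1-ball} is absorbed in this.

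The main subtlety, beyond the bookkeeping, is confirming that Wolfe's algorithm is legitimately applicable to the intersection $B(F_\bmp) \cap (-\bmp + rB_1^V)$ rather than a raw base polytope; as noted after Lemma~\ref{lem:Chakrabarty}, the analysis of~\cite{Chakrabarty:2014uy} relies only on convexity and on the ability to solve linear programs over the feasible set, and LP over this intersection is tractable by combining Edmonds' greedy algorithm for $B(F_\bmp)$ with the translated $\ell_1$-ball constraint. With this in hand the remaining arguments are routine.
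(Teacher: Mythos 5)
Your proposal is correct and follows essentially the same route as the paper's proof: the same algorithm, the same use of Wolfe's algorithm with tolerance $\epsilon/3$ on the translated polytope $B(F_\bmp)\cap(-\bmp+rB_1^V)$, the same $\epsilon/3+2\epsilon/3$ triangle-inequality accounting, and the same $\|\cdot\|_2\le\|\cdot\|_1\le 2r$ bound for the running time of each Wolfe call. The explicit identification of the minimum-norm point with the projection of $\bmp$ onto $B(F)\cap rB_1^V$ is a slightly cleaner way of phrasing the step the paper carries out inline, but it is the same argument.
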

\begin{proof}
  Our algorithm for constructing an $\epsilon$-cover $C$ is summarized in Algorithm~\ref{alg:net-of-base-polytope}.
  It first constructs an $(\epsilon/3)$-cover $P$ of $rB_1^V$.
  Then, for each $\bmp \in P$, we compute a minimum-norm point $\bmw_p$ in $B(F_\bmp) \cap (-\bmp + rB_1^V)$ for $F_\bmp = F-\bmp$ by running Wolfe's algorithm with an error parameter $\epsilon /3$.
  Then, if $\|\bmw_\bmp\|_2$ is sufficiently small, or more specifically, $\|\bmw_\bmp\|_2 \leq 2\epsilon/3$, then we add $\bmp + \bmw_p$ to $C$.
  Note that $\bmp+\bmw_p$ belongs to $B(F) \cap rB_1^V$ as $B(F_\bmp) = \set{\bmw - \bmp \mid \bmw \in B(F)}$.
  Hence, we need to check that any point in $B(F) \cap rB_1^V$ has a close point in the constructed set $C$.

  For every $\bmw \in B(F) \cap rB_1^V$, there exists a point $\bmp \in P$ such that $\|\bmw - \bmp\|_2 \leq \epsilon/3$.
  Then, by Lemma~\ref{lem:Chakrabarty}, we have
  \begin{align*}
    & \|\bmw_\bmp\|_2^2 \leq \argmin_{\bmw' \in B(F_\bmp) \cap (-\bmp+rB_1^V)}\|\bmw'\|_2^2 + 2{\Bigl(\frac{\epsilon}{3}\Bigr)}^2
    =  \argmin_{\bmw' \in B(F) \cap (-\bmp+rB_1^V)}\|\bmw'-\bmp\|_2^2 + \frac{2\epsilon^2}{9} \\
    & \leq \|\bmw-\bmp\|_2^2 + \frac{2\epsilon^2}{9}     \leq \frac{\epsilon^2}{3} \leq {\Bigl(\frac{2\epsilon}{3}\Bigr)}^2.
  \end{align*}
  Hence, $\bmp + \bmw_\bmp \in B(F) \cap rB_1^V$ will be added to $C$.
  Note that
  \[
    \|\bmw- (\bmp + \bmw_\bmp)\|_2
    \leq
    \|\bmw- \bmp\|_2 + \|\bmw_\bmp\|_2
    \leq
    \epsilon,
  \]
  which implies the returned set $C$ is an $\epsilon$-cover of $B(F) \cap rB_1^V$.

  Now, we analyze the time complexity of the algorithm.
  By Lemma~\ref{lem:covering-number-of-l1-ball}, we need $O(n U_{\ref{lem:covering-number-of-l1-ball}}(\epsilon/3, rB_1^V)) = O(n U_{\ref{lem:covering-number-of-l1-ball}}(\frac{\epsilon}{3r}, B_1^V))$ time to compute an $(\epsilon/3)$-cover $P$ of $rB_1^V$.
  For each point $\bmp \in P$, we run Wolfe's algorithm.
  We have
  \begin{align*}
    & \max_{\bmw \in B(F_\bmp) \cap (-\bmp + rB_1^V)}\|\bmw\|_2
    \leq \max_{\bmw \in B(F_\bmp) \cap (-\bmp+rB_1^V)}\|\bmw\|_1
    = \max_{\bmw \in B(F) \cap rB_1^V}\|\bmw - \bmp\|_1 \\
    & \leq \max_{\bmw \in B(F) \cap rB_1^V}\|\bmw\|_1 + \|\bmp\|_1 \leq 2r.
  \end{align*}
  Hence, the running time of Wolfe's algorithm is $O(r^2n^4/\epsilon^2)$ by Lemma~\ref{lem:Chakrabarty}.
  Then, the total running time is $O(r^2n^4 U_{\ref{lem:covering-number-of-l1-ball}}(\frac{\epsilon}{3r},B_1^V) /\epsilon^2)$.
\end{proof}

\begin{theorem}\label{the:covering-base-polytope}
  Let $F\colon\set{0,1}^V \to \bbR$ be a non-negative submodular function.
  For every $\epsilon > 0$, we can construct an $\epsilon\|B(F)\|_H$-cover $C$ of $B(F)$ of size $O( \log n \cdot U_{\ref{lem:covering-number-of-l1-ball}}(\epsilon/6,B_1^V))$ in $O(n^4  \log n \cdot U_{\ref{lem:covering-number-of-l1-ball}}(\epsilon/6,B_1^V)  /\epsilon^2)$ time, where $n = |V|$.
\end{theorem}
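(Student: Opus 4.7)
The plan is a dyadic decomposition of $B(F)$ by $\ell_1$-norm combined with $O(\log n)$ invocations of Lemma~\ref{lem:covering-base-polytope}. Set $R_1 = 2\|F\|_\infty$ so that $B(F) \subseteq R_1 B_1^V$ by Lemma~\ref{lem:l1-radius-of-base-polytope}, and let $R = \|B(F)\|_H$. Since $F(S) = \max_{\bmw \in B(F)}\langle \bmw, \bmone_S\rangle \leq R\sqrt{|S|} \leq R\sqrt{n}$ by Cauchy--Schwarz, we get $R_1 \leq 2\sqrt{n}R$, i.e.\ the $\ell_1$-radius exceeds the $\ell_2$-radius by at most a $\sqrt{n}$-factor; consequently, there are only $O(\log n)$ dyadic scales bridging $R$ and $R_1$.

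Concretely, for $k = 0, 1, \ldots, K$ with $K = \lceil \log_2(R_1/R)\rceil = O(\log n)$, I set $r_k = R_1 \cdot 2^{-k}$ and invoke Lemma~\ref{lem:covering-base-polytope} on $B(F) \cap r_k B_1^V$ with cover accuracy $\epsilon_k = \epsilon r_k/2$. Because the ratio $\epsilon_k/(3r_k) = \epsilon/6$ is fixed across scales, each call returns a cover $C_k$ of size $O(U_{\ref{lem:covering-number-of-l1-ball}}(\epsilon/6, B_1^V))$ in time $O(n^4 U_{\ref{lem:covering-number-of-l1-ball}}(\epsilon/6, B_1^V)/\epsilon^2)$ (the $r_k^2/\epsilon_k^2$ factor in Lemma~\ref{lem:covering-base-polytope}'s running-time cancels). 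Setting $C = \bigcup_k C_k$ and summing over $K+1 = O(\log n)$ scales yields the claimed size and time bounds.

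The remaining task is to verify that $C$ is an $\epsilon R$-cover of $B(F)$. For each $\bmw \in B(F)$, I locate the innermost scale $k^\star$ with $\bmw \in r_{k^\star} B_1^V$; then $C_{k^\star}$ contains a point within $\ell_2$-distance $\epsilon_{k^\star} = \epsilon r_{k^\star}/2$ of $\bmw$. The main obstacle is showing that this distance is at most $\epsilon R$ uniformly in $\bmw$: the naive bound $r_{k^\star} \leq 2\|\bmw\|_1 \leq 2\sqrt{n}R$ is a factor of $\sqrt{n}$ too weak, and the slack must be recovered by combining the $\ell_2$-containment $B(F) \subseteq R B_2^V$ with the $\ell_1$-decomposition---either by sharpening the guarantee of Lemma~\ref{lem:covering-base-polytope} in the presence of an $\ell_2$-bound, or by arguing that any $\bmw$ with $\|\bmw\|_1 \gg R$ is already within $\epsilon R$ (in $\ell_2$) of some point captured by an inner-scale cover. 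This is the delicate step that determines the success of the whole construction; once it is in place, the size and time bounds follow immediately from the per-scale accounting above.
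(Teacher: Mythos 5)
Your overall architecture --- $O(\log n)$ dyadic scales, each handled by one call to Lemma~\ref{lem:covering-base-polytope} at fixed relative accuracy so that the per-scale cost is $O(U_{\ref{lem:covering-number-of-l1-ball}}(\epsilon/6,B_1^V))$ --- is the same as the paper's, and your per-scale accounting of size and running time is fine. But the gap you flag at the end is real and is not a technicality you can defer: as written, your construction does not prove the theorem. The problem is that you index the scales, and assign each $\bmw \in B(F)$ to a scale, by its $\ell_1$-norm. The innermost scale containing $\bmw$ has $r_{k^\star} \approx \|\bmw\|_1$, so the error you can guarantee for $\bmw$ is $\epsilon \|\bmw\|_1/2$, and $\|\bmw\|_1$ genuinely can be $\sqrt{n}\,\|B(F)\|_H$: already for the modular function $F(S)=|S|$ one has $B(F)=\set{\bmone}$ with $\|\bmone\|_1 = n = \sqrt{n}\cdot\|B(F)\|_H$. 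Neither of your two suggested rescues is carried out, and the first one (``sharpening Lemma~\ref{lem:covering-base-polytope} in the presence of an $\ell_2$-bound'') would amount to reproving the theorem.

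The missing idea is to index by the $\ell_2$-norm of the target point rather than its $\ell_1$-norm. The paper sets $K=\max_{v\in V}F(\set{v})$, notes the elementary sandwich $K \leq \|B(F)\|_H \leq nK$, and takes scales $r_i = 2^i K$ for $i=0,\ldots,\lceil \log_2 n\rceil$; at scale $i$ it builds an $(\epsilon/2)$-cover $C_i$ of $B(F)/r_i \cap B_1^V$ via Lemma~\ref{lem:covering-base-polytope} and returns $\bigcup_i r_i C_i$. A point $\bmw$ is then assigned to the scale $i$ with $r_{i-1} < \|\bmw\|_2 \leq r_i$ (or to $i=0$ if $\|\bmw\|_2\le r_0$), and the resulting error is $\epsilon r_i/2 \leq \epsilon r_{i-1} \leq \epsilon\|\bmw\|_2 \leq \epsilon\|B(F)\|_H$ --- exactly the uniform bound your $\ell_1$-assignment cannot deliver. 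Note that the sandwich $\|B(F)\|_H\le nK$ is what caps the number of scales at $O(\log n)$; you obtained the same count from $R_1 \le 2\sqrt{n}R$, but the count was never the issue. To repair your write-up you should replace your assignment rule by the $\ell_2$-bucketing above (and anchor the scales at $K$ rather than at $R_1$), after which the rest of your argument goes through essentially verbatim.
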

\begin{proof}
  Let $K = \max_{v \in V}F(\set{v})$.
  Then, it is easy to check $K \leq \|B(F)\|_H \leq nK$.
  We define $r_i = 2^i K$ for $i \in \set{0,\ldots,L}$, where $L=\lceil \log_2 n\rceil$.
  For each $i \in \set{0,\ldots,L}$, we construct an $\epsilon/2$-cover $C_i$ by invoking Lemma~\ref{lem:covering-base-polytope} on $B(F)/r_i \cap B_1^V$, and then we return the union $C := \bigcup_{i=0}^L r_i C_i$.
  The size of $C$ and the time complexity for constructing $C$ are as claimed.

  Now, we show that $C$ is an $\epsilon \|B(F)\|_H$-cover of $B(F)$.
  Let $\bmw \in B(F)$ be an arbitrary vector in the base polytope.
  If $\|\bmw\|_2 \leq r_0$, then there is a point $\bmp \in C_0$ such that $\|\bmw/r_0-\bmp\|_2 \leq \epsilon/2$, which means that $r_0\bmp \in C_0 \subseteq C$ satisfies $\|\bmw - r_0\bmp\|_2 \leq r_0\epsilon/2 \leq \epsilon K \leq \epsilon \|B(F)\|_H$.
  Otherwise, let $i \in \set{1,\ldots,L}$ be such that $r_{i-1} < \|\bmw\|_2 \leq r_i$.
  Such $i$ always exists because $r_0 < \|\bmw \|_2 \leq nK$.
  Then, there exists a point $\bmp \in C_i$ such that $\|\bmw/r_i - \bmp\|_2 \leq \epsilon/2$, which means that $r_i\bmp \in r_i C_i \subseteq C$ satisfies $\|\bmw - r_i\bmp\|_2 \leq \epsilon r_i/2 \leq \epsilon r_{i-1} \leq \epsilon \|\bmw\|_2 \leq \epsilon \|B(F)\|_H$.
\end{proof}

One may think that the idea of approximating base polytopes by $\ell_1$-balls is too naive because base polytopes may have a rich structure derived from submodularity and we only have to cover extreme points instead of the whole base polytope in our applications in Sections~\ref{sec:symmetric} and~\ref{sec:general}.
However, we show in Appendix~\ref{apx:covering} that we cannot significantly improve the bound for some base polytope even if we only have to cover its extreme points.


\section{Approximating the Smallest Non-trivial Eigenvalue in the Symmetric Case}\label{sec:symmetric}

In this section, we prove Theorem~\ref{the:symmetric-eigenvalue}, that is, we provide a polynomial-time algorithm that approximates the smallest non-trivial eigenvalue of the normalized Laplacian of a symmetric submodular transformation to within a factor of $O(\log n)$ and a small additive error.
We explain our SDP relaxation and rounding method in Section~\ref{subsec:symmetric-relaxation} and then provide an approximation guarantee in Section~\ref{subsec:symmetric-analysis}.

\subsection{SDP relaxation and rounding}\label{subsec:symmetric-relaxation}
Our algorithm is based on SDP relaxation, and our SDP formulation is based on the following simple observation, which exploits the symmetry:
\begin{proposition}\label{pro:symmetric-base-polytope}
  For a symmetric submodular function $F\colon\set{0,1}^V \to \bbR$, we have $B(F) = -B(F)$, that is, $-\bmw \in B(F)$ for every $\bmw \in B(F)$.
\end{proposition}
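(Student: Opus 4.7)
The plan is to verify the two defining conditions of the base polytope directly for $-\bmw$, exploiting that symmetry forces $F(V)=0$ and hence every point of $B(F)$ sums to zero.

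First, I would observe that since $F$ is symmetric and normalized (the paper only deals with normalized submodular functions, so $F(\emptyset)=0$), we have $F(V) = F(\emptyset) = 0$. Consequently, every $\bmw \in B(F)$ satisfies $\bmw(V) := \sum_{v \in V}\bmw(v) = F(V) = 0$, which in turn gives $\bmw(V \setminus S) = -\bmw(S)$ for every $S \subseteq V$.

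Next, given an arbitrary $\bmw \in B(F)$, I would check that $-\bmw$ lies in $B(F)$ by verifying both conditions in the definition. The equality constraint holds trivially: $\sum_{v \in V}(-\bmw)(v) = -\bmw(V) = 0 = F(V)$. For the inequality constraints, fix $S \subseteq V$ and compute
\[
  \sum_{v \in S}(-\bmw)(v) = -\bmw(S) = \bmw(V \setminus S) \leq F(V \setminus S) = F(S),
\]
where the middle equality uses $\bmw(V)=0$, the inequality uses $\bmw \in B(F) \subseteq P(F)$ applied to the set $V \setminus S$, and the last equality uses symmetry of $F$. This shows $-\bmw \in B(F)$, hence $B(F) \subseteq -B(F)$, and the reverse inclusion follows by the same argument applied to $-\bmw$.

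There is essentially no obstacle here; the proof is a two-line calculation once one notices that symmetry pins down $F(V)=0$, which is the only ingredient needed to translate between the ``sum on $S$'' and ``sum on $V \setminus S$'' views of points in $B(F)$. The only care required is to note that we are implicitly assuming normalization $F(\emptyset)=0$, which is a standing convention in Section~\ref{sec:pre}.
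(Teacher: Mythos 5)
Your proof is correct, but it takes a different route from the paper. You verify membership of $-\bmw$ in $B(F)$ directly from the polyhedral definition: normalization plus symmetry forces $F(V)=F(\emptyset)=0$, so every $\bmw\in B(F)$ satisfies $\bmw(V)=0$, and then $-\bmw(S)=\bmw(V\setminus S)\leq F(V\setminus S)=F(S)$ for every $S$, with the modular constraint $-\bmw(V)=0=F(V)$ holding trivially. The paper instead reduces to extreme points by convexity and invokes the greedy characterization (Lemma~\ref{lem:extreme-point}): for an extreme point $\bmw$ generated by an ordering $v_1,\ldots,v_n$, the reversed ordering generates an extreme point $\bmw'$ whose coordinates are, by symmetry, exactly $-\bmw(v_k)$, so $-\bmw\in B(F)$. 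Your argument is more elementary and applies uniformly to all points of $B(F)$ without the convexity reduction; it also makes explicit where symmetry and normalization enter (namely, only through $F(V)=0$ and $F(S)=F(V\setminus S)$). The paper's argument yields the slightly stronger structural fact that the negation of an extreme point is again an extreme point (obtained by reversing the ordering), which is in the spirit of the greedy machinery used throughout Sections~\ref{sec:pre} and~\ref{sec:cheeger}, but it is not needed for the statement as written. Both proofs are complete; your reliance on $F(\emptyset)=0$ is legitimate since the paper declares normalization a standing convention, and in any case $B(F)=-B(F)$ with $B(F)\neq\emptyset$ already forces $F(V)=0$.
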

\begin{proof}
  As $B(F)$ is a convex polytope, it suffices to check whether $-\bmw \in B(F)$ holds for each extreme point $\bmw$ of $B(F)$.

  Let $\bmw \in B(F)$ be an extreme point of $B(F)$.
  By Lemma~\ref{lem:extreme-point}, there exists an ordering $v_1,\ldots,v_n$ of $V$, where $n=n$, such that $\bmw(v_k) = F(v_k \mid \set{v_1,\ldots,v_{k-1}})\;(k \in [n])$.
  Consider the ordering $v'_1,\ldots,v'_n$ of $V$ such that $v'_k = v_{n-k+1}\;(k \in [n])$.
  Again by Lemma~\ref{lem:extreme-point}, the vector $\bmw' \in \bbR^V$ with $\bmw'(v'_k) = F(v'_k \mid  \set{v'_1,\ldots,v'_{k-1}})$ is an extreme point of $B(F)$.
  For every $k \in [n]$, we have
  \begin{align*}
    & \bmw'(v_{n-k+1})=
    \bmw'(v'_k) = F(v'_k \mid \set{v'_1,\ldots,v'_{k-1}}) \\
    & = F(\set{v_n,\ldots,v_{n-k+1}}) - F(\set{v_n,\ldots,v_{n-k+2}}) \\
    & =
    F(\set{v_1,\ldots,v_{n-k}}) - F(\set{v_1,\ldots,v_{n-k+1}}) \tag{By symmetry}\\
    & = -\bmw(v_{n-k+1}).
  \end{align*}
  Hence, we have $-\bmw \in B(F)$.
\end{proof}
Then, we can rephrase ${f(\bmx)}^2$ as follows:
\begin{corollary}\label{cor:symmetric-f-squared}
  For the \Lovasz extension $f\colon\bbR^V \to \bbR$ of a symmetric submodular function $F\colon\set{0,1}^V\to \bbR$, we have
  \[
    {f(\bmx)}^2 = \max_{\bmw \in B(F)}\langle \bmw,\bmx\rangle^2
  \]
  for every $\bmx \in \bbR^V$.
\end{corollary}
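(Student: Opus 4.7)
The plan is a short two-direction argument using Proposition~\ref{pro:symmetric-base-polytope} together with the definition of the \Lovasz extension. Recall $f(\bmx) = \max_{\bmw \in B(F)}\langle \bmw,\bmx\rangle$, and pick an optimizer $\bmw^\star \in \partial f(\bmx)$. The inequality $f(\bmx)^2 \leq \max_{\bmw \in B(F)}\langle \bmw,\bmx\rangle^2$ is immediate, since $\langle \bmw^\star,\bmx\rangle^2$ is one particular value of the right-hand side.

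For the reverse direction, I would exploit that symmetry forces $B(F) = -B(F)$. Concretely, let $\bmw^\dagger \in B(F)$ attain $\max_{\bmw \in B(F)}\langle \bmw,\bmx\rangle^2$ and set $s = \mathrm{sign}\langle \bmw^\dagger,\bmx\rangle$. By Proposition~\ref{pro:symmetric-base-polytope}, $s\bmw^\dagger \in B(F)$ as well, so
\[
\max_{\bmw \in B(F)}\langle \bmw,\bmx\rangle^2 = \langle \bmw^\dagger,\bmx\rangle^2 = \langle s\bmw^\dagger,\bmx\rangle^2 \leq \bigl(\max_{\bmw \in B(F)}\langle \bmw,\bmx\rangle\bigr)^2 = f(\bmx)^2,
\]
where the inequality uses that $\langle s\bmw^\dagger,\bmx\rangle \geq 0$ and that the (nonneg-side) maximum of $\langle \bmw,\bmx\rangle$ over $B(F)$ is $f(\bmx)$. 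Combining the two inequalities yields equality.

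I do not anticipate any real obstacle here. The only subtle point is making sure that $\max \langle \bmw,\bmx\rangle^2$ is attained (it is, because $B(F)$ is compact) and that we may freely replace $\bmw^\dagger$ by $-\bmw^\dagger$ without leaving $B(F)$; both are supplied by boundedness of the base polytope and by Proposition~\ref{pro:symmetric-base-polytope}, respectively. No further properties of $F$ beyond symmetry are needed.
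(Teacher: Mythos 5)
Your proof is correct and is essentially the paper's argument: both use Proposition~\ref{pro:symmetric-base-polytope} ($B(F)=-B(F)$) to flip the sign of a maximizer of $\langle\bmw,\bmx\rangle^2$ so that it also maximizes $\langle\bmw,\bmx\rangle$, giving equality of the two quantities. (The only cosmetic point is the degenerate case $\langle\bmw^\dagger,\bmx\rangle=0$, where both sides are trivially zero.)
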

\begin{proof}
  Let $\bmw^* \in \argmax_{\bmw \in B(F)} \langle \bmw,\bmx\rangle^2$.
  By Proposition~\ref{pro:symmetric-base-polytope}, we can also assume that $\bmw^* \in \argmax_{\bmw \in B(F)} \langle \bmw,\bmx\rangle$; otherwise, we can replace $\bmw^*$ with $-\bmw^* \in B(F)$ to achieve this.
  Then, we have
  \[
    {f(\bmx)}^2
    = {\Bigl(\max_{\bmw \in B(F)}\langle \bmw,\bmx\rangle\Bigr)}^2
    = \langle \bmw^*,\bmx \rangle^2
    =
    \max_{\bmw \in B(F)}\langle \bmw,\bmx\rangle^2.
    \qedhere
  \]
\end{proof}

By Theorem~\ref{the:normalized-eigenvalues} and Corollary~\ref{cor:symmetric-f-squared}, the smallest non-trivial eigenvalue of the normalized Laplacian $\caL_F$ of a symmetric submodular transformation $F\colon\set{0,1}^V \to \bbR^E$ is the minimum of
\[
  \sum_{e \in E}\max_{\bmw \in B(F_e)}\langle \bmw,D_F^{-1/2}\bmx\rangle^2
\]
subject to $\|\bmx\|_2^2 = 1$ and $\langle \bmx,D_F^{1/2}\bmone\rangle= 0$.
By replacing $\bmx$ with $D_F^{1/2}\bmx$, the minimum can be written as follows:
\begin{align}
  \begin{array}{lll}
  \text{minimize} & \displaystyle \sum_{e \in E} \eta_e^2,\\
  \text{subject to} & \displaystyle \langle \bmw, \bmx\rangle^2 \leq \eta_e^2 & \forall e \in E, \forall \bmw \in B(F_e),\\
  & \displaystyle \sum_{v \in V}\bmd_F(v){\bmx(v)}^2 = 1,\\
  & \displaystyle \sum_{v \in V}\bmd_F(v)\bmx(v) = 0.
  \end{array}\label{eq:original-symmetric}
\end{align}
Now, we consider an SDP relaxation of~\eqref{eq:original-symmetric}.
To this end, we introduce vectors $\bmeta_e \in \bbR^N\;(e \in E)$ and $\bmx_v \in \bbR^N\;(v \in V)$ that are supposed to represent $\eta_e\;(e \in E)$ and $x_v \;(v \in V)$, respectively, where $N \geq n$ is a sufficiently large integer.
Then, for a matrix $X = {(\bmx_v)}_{v \in V} \in \bbR^{N \times V}$, our SDP relaxation is the following:
\begin{align}
  \begin{array}{llll}
    \SDP(F) :=  & \text{minimize} & \displaystyle \sum_{e \in E} \|\bmeta_e\|_2^2, \\
    & \text{subject to} & \displaystyle  \|X\bmw \|_2^2 \leq \|\bmeta_e\|_2^2 & \forall e \in E, \bmw \in B(F_e), \\
    & & \displaystyle \sum_{v \in V}\bmd_F(v)\|\bmx_v\|_2^2 = 1,\\
    & & \displaystyle \sum_{v \in V}\bmd_F(v)\bmx_v = 0.
  \end{array}\label{eq:relaxed-symmetric}
\end{align}
The value $\|X\bmw\|_2^2 = \|\sum_{v \in V}\bmw(v)\bmx_v\|_2^2$ is supposed to represent the value $\langle \bmw,\bmx\rangle^2$ in~\eqref{eq:original-symmetric}.

Unfortunately, for each $e \in E$, there are infinitely many choices for $\bmw \in B(F_e)$, and hence we cannot efficiently write down  SDP~\eqref{eq:relaxed-symmetric}.
One observation is that we only have to consider extreme points of $B(F_e)$ because the maximum of $\|X\bmw\|_2^2$ over the base polytope $B(F_e)$ is attained at its extreme point.
However, we are still prevented from efficiently writing down SDP~\eqref{eq:relaxed-symmetric} because the number of extreme points of a base polytope can be $n!$ in general.
(Note that we can bypass this obstacle when the number of extreme points in each $B(F_e)$ is small.)

To address the above-mentioned problem, we consider replacing base polytopes $B(F_e)$ by their $\epsilon \|B(F_e)\|_H$-covers, where $\epsilon > 0$ is an error parameter.
For each $e \in E$, let $C_e$ be the $\epsilon \|B(F_e)\|_H$-cover of $B(F_e)$ given in Theorem~\ref{the:covering-base-polytope}.
We consider the following SDP obtained from SDP~\eqref{eq:relaxed-symmetric} by replacing $B(F_e)$ with $C_e$ for each $e \in E$:
\begin{align}
  \begin{array}{llll}
    \SDP_\epsilon(F) := & \text{minimize} & \displaystyle \sum_{e \in E} \|\bmeta_e\|_2^2, \\
    & \text{subject to} & \displaystyle  \|X\bmw\|_2^2 \leq \|\bmeta_e\|^2 & \forall e \in E, \bmw \in C_e, \\
    & & \displaystyle \sum_{v \in V}\bmd_F(v)\|\bmx_v\|_2^2 = 1,\\
    & & \displaystyle \sum_{v \in V}\bmd_F(v)\bmx_v = 0.
  \end{array}\label{eq:eps-covering-symmetric}
\end{align}
As $C_e \subseteq B(F_e)$, it is clear that $\SDP_\epsilon(F) \leq \SDP(F)$, and hence $\SDP_\epsilon(F)$ is a relaxation of~\eqref{eq:original-symmetric}.
Moreover, as the size of $C_e$ is polynomial (as long as $\epsilon$ is constant), we can solve SDP~\eqref{eq:eps-covering-symmetric} in polynomial time.

After solving SDP~\eqref{eq:eps-covering-symmetric}, we sample $\bmg \in \bbR^N$ from the standard normal distribution $\caN(0,I_N)$ and then we round the SDP solution to a vector $\bmz \in \bbR^V$ with $\bmz(v) = \langle \bmx_v,\bmg\rangle\;(v \in V)$.
Our algorithm is summarized in Algorithm~\ref{alg:symmetric-eigenvalue}.

\begin{algorithm}[t!]
  \caption{Approximation of the smallest non-trivial eigenvalue of the normalized Laplacian of a symmetric submodular transformation.}\label{alg:symmetric-eigenvalue}
  \begin{algorithmic}[1]
  \Require{a symmetric submodular transformation $F\colon\set{0,1}^V \to \bbR^E$ and $\epsilon > 0$.}
  \State{Solve SDP~\eqref{eq:eps-covering-symmetric}.}
  \State{Let $\bmg \in \bbR^N$ be a random vector sampled from the standard normal distribution $\caN(0,I_N)$.}
  \State{Define $\bmz\in \bbR^V$ as $\bmz(v) = \langle \bmx_v, \bmg\rangle$ for each $v \in V$.}
  \State{\Return $D_F^{1/2}\bmz$.}
  \end{algorithmic}
\end{algorithm}

\subsection{Analysis}\label{subsec:symmetric-analysis}
Now, we provide an approximation guarantee of Algorithm~\ref{alg:symmetric-eigenvalue}.
The following lemma is useful to analyze the error caused by replacing $B(F_e)$ with $C_e$.
\begin{lemma}\label{lem:error-by-eps-covering-symmetric}
  Let $F\colon\set{0,1}^V \to \bbR$ be a submodular function and let $C \subseteq B(F)$ be an $\epsilon$-cover of $B(F)$ for $\epsilon > 0$.
  Then, for any vector $\bmx \in \bbR^V$, we have
  \[
    \max_{\bmw \in B(F)}\langle \bmw, \bmx \rangle^2
    \leq \max_{\bmw \in C}\langle \bmw, \bmx \rangle^2 + 2\epsilon \|\bmx|_{\supp(F)}\|_2^2 \cdot \|B(F)\|_H.
  \]
\end{lemma}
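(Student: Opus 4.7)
The plan is to reduce the comparison of the two maxima to comparing a single maximizer of the left-hand side with a nearby point in the cover, and then to use the factorization $a^2 - b^2 = (a-b)(a+b)$ to get a bound that is linear (rather than quadratic) in $\epsilon$. Specifically, pick $\bmw^* \in \argmax_{\bmw \in B(F)}\langle \bmw,\bmx\rangle^2$, and by the covering property pick $\bmp^* \in C$ with $\|\bmw^* - \bmp^*\|_2 \leq \epsilon$. Since $C \subseteq B(F)$, we immediately have $\langle \bmp^*,\bmx\rangle^2 \leq \max_{\bmw \in C}\langle \bmw,\bmx\rangle^2$, so it suffices to bound $\langle \bmw^*,\bmx\rangle^2 - \langle \bmp^*,\bmx\rangle^2$.

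Before doing so, I would record the key structural observation that both $\bmw^*$ and $\bmp^*$ are supported on $\supp(F)$. This follows from Lemma~\ref{lem:extreme-point}: for every extreme point of $B(F)$, irrelevant coordinates receive marginal gain $0$, so every element of $B(F)$ (as a convex combination of extreme points) vanishes outside $\supp(F)$. Consequently, for any $\bmv \in \bbR^V$ that vanishes outside $\supp(F)$, we have $\langle \bmv,\bmx\rangle = \langle \bmv,\bmx|_{\supp(F)}\rangle$, which lets me replace $\bmx$ by $\bmx|_{\supp(F)}$ in the Cauchy–Schwarz step.

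The main computation is then
\[
  \langle \bmw^*,\bmx\rangle^2 - \langle \bmp^*,\bmx\rangle^2 = \langle \bmw^* - \bmp^*,\bmx\rangle \cdot \langle \bmw^* + \bmp^*,\bmx\rangle,
\]
and I would bound the two factors separately. The first satisfies
\[
  |\langle \bmw^* - \bmp^*,\bmx\rangle| = |\langle \bmw^* - \bmp^*,\bmx|_{\supp(F)}\rangle| \leq \|\bmw^* - \bmp^*\|_2 \cdot \|\bmx|_{\supp(F)}\|_2 \leq \epsilon \|\bmx|_{\supp(F)}\|_2,
\]
by the Cauchy–Schwarz inequality and the covering bound. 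The second satisfies
\[
  |\langle \bmw^* + \bmp^*,\bmx\rangle| \leq \|\bmw^* + \bmp^*\|_2 \cdot \|\bmx|_{\supp(F)}\|_2 \leq 2\|B(F)\|_H \cdot \|\bmx|_{\supp(F)}\|_2,
\]
since both $\bmw^*$ and $\bmp^*$ lie in $B(F)$. Multiplying and rearranging yields
\[
  \max_{\bmw \in B(F)}\langle \bmw,\bmx\rangle^2 = \langle \bmw^*,\bmx\rangle^2 \leq \max_{\bmw \in C}\langle \bmw,\bmx\rangle^2 + 2\epsilon \|\bmx|_{\supp(F)}\|_2^2 \cdot \|B(F)\|_H,
\]
which is exactly the claim.

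There is no serious obstacle here; the only subtlety that needs care is using the $(a-b)(a+b)$ factorization rather than the naive $(\bmp^* + \bmd)^2$ expansion, since the latter produces a spurious $\epsilon^2$ error term which, in the regime where $\epsilon$ is comparable to $\|B(F)\|_H$, is not absorbed by the stated $\epsilon \|B(F)\|_H$ bound. The observation that $B(F) \subseteq \bbR^{\supp(F)}$ is what allows the $\bmx$ in the Cauchy–Schwarz bounds to be replaced by $\bmx|_{\supp(F)}$, which is essential for the lemma as stated (and for its later use, where only the squared norm restricted to $\supp(F_e)$ is controlled by the SDP).
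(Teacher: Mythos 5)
Your proposal is correct and follows essentially the same route as the paper's own proof: pick the maximizer $\bmw^*$ over $B(F)$, take a nearby cover point, apply the difference-of-squares factorization, and bound both factors via Cauchy--Schwarz after restricting $\bmx$ to $\supp(F)$ (justified by the fact that every point of $B(F)$ vanishes off the support). Your explicit justification of the support restriction via the extreme-point characterization, and your handling of the inequality $\langle \bmp^*,\bmx\rangle^2 \leq \max_{\bmw\in C}\langle\bmw,\bmx\rangle^2$, are slightly more careful than the paper's write-up but not substantively different.
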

\begin{proof}
  Let $\bmw^*$ be the maximizer of $\max_{\bmw \in B(F)}\langle \bmw, \bmx \rangle^2$.
  Then, there exists $\bmw' \in C$ such that $\|\bmw^*-\bmw'\|_2 \leq \epsilon$.
  By using the fact that $\bmw(v) = 0$ for every $\bmw \in B(F)$ and $v \in V \setminus \supp(F)$, we have
  \begin{align*}
    & \max_{\bmw \in B(F)}\langle \bmw, \bmx \rangle^2 - \max_{\bmw \in C}\langle \bmw, \bmx \rangle^2
    = \langle \bmw^*, \bmx \rangle^2 - \langle \bmw', \bmx \rangle^2 \\
    & = \langle \bmw^* - \bmw',\bmx\rangle \cdot \langle \bmw^* + \bmw',\bmx\rangle
    = \langle \bmw^* - \bmw',\bmx|_{\supp(F)} \rangle \cdot \langle \bmw^* + \bmw',\bmx|_{\supp(F)}\rangle    \\
    & \leq \epsilon \|\bmx|_{\supp(F)}\|_2 \cdot 2 \|B(F)\|_H \cdot \|\bmx|_{\supp(F)}\|_2 \\
    & = 2\epsilon \|\bmx|_{\supp(F)}\|_2^2 \cdot \|B(F)\|_H. \qedhere
  \end{align*}
\end{proof}


\begin{lemma}\label{lem:approximation-guarantee-symmetric}
  Let $\bmz \in \bbR^V$ be the output of Algorithm~\ref{alg:symmetric-eigenvalue} on a symmetric submodular transformation $F\colon\set{0,1}^V \to \bbR^E$ and $\epsilon > 0$.
  Then, we have
  \[
     \caR_F(D_F^{1/2}\bmz) = O\Bigl(\frac{\log n}{\epsilon^2}\lambda_F + \epsilon \max_{e \in E}\|B(F_e)\|_H^2 \Bigr),
  \]
  with a probability of at least $1/24$,
  where $\lambda_F \geq 0$ is the smallest non-trivial eigenvalue of $\caL_F$.
\end{lemma}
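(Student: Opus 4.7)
The plan is to analyze the Gaussian rounding by relating the numerator and denominator of $\caR_F(D_F^{1/2}\bmz)$ to the SDP value $\SDP_\epsilon(F)$, which is at most $\lambda_F$ since SDP~\eqref{eq:eps-covering-symmetric} is a relaxation of the rephrased eigenvalue problem~\eqref{eq:original-symmetric}. Since $F$ is symmetric, Corollary~\ref{cor:symmetric-f-squared} gives $f_e(\bmz)^2 = \max_{\bmw \in B(F_e)}\langle \bmw,\bmz\rangle^2$, and applying Lemma~\ref{lem:error-by-eps-covering-symmetric} with the $\epsilon\|B(F_e)\|_H$-cover $C_e$ yields
\[
  \sum_{e \in E} f_e(\bmz)^2 \leq \sum_{e \in E}\max_{\bmw \in C_e}\langle \bmw,\bmz\rangle^2 + 2\epsilon \max_{e \in E}\|B(F_e)\|_H^2 \cdot \sum_{v \in V}\bmd_F(v)\bmz(v)^2,
\]
using the fact that $\sum_{e \in E}\|\bmz|_{\supp(F_e)}\|_2^2 = \sum_{v \in V}\bmd_F(v)\bmz(v)^2$. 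After dividing by the denominator, this isolates the additive error $2\epsilon\max_e\|B(F_e)\|_H^2$.

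Next, I would bound the numerator via a max-of-Gaussians argument. Fixing $e \in E$ and $\bmw \in C_e$, the variable $\langle \bmw,\bmz\rangle = \langle X\bmw,\bmg\rangle$ is a centered Gaussian of variance $\|X\bmw\|_2^2 \leq \|\bmeta_e\|_2^2$, the last inequality coming from the SDP constraint. The standard fact that the maximum of $k$ centered Gaussians of variance at most $\sigma^2$ has squared expectation $O(\sigma^2 \log k)$ then gives $\E[\max_{\bmw \in C_e}\langle \bmw,\bmz\rangle^2] = O(\log|C_e|)\cdot \|\bmeta_e\|_2^2$. Summing over $e$ and using $\log|C_e| = O(\log n / \epsilon^2)$ from Theorem~\ref{the:covering-base-polytope},
\[
  \E\Bigl[\sum_{e \in E}\max_{\bmw \in C_e}\langle \bmw,\bmz\rangle^2\Bigr] = O(\log n/\epsilon^2)\cdot \SDP_\epsilon(F) \leq O(\log n/\epsilon^2)\cdot \lambda_F.
\]
Markov's inequality with constant $24$ then yields that this random sum is at most $O(\log n / \epsilon^2)\cdot \lambda_F$ with probability at least $23/24$.

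For the denominator, writing $Y = \sum_{v \in V}\bmd_F(v)\bmz(v)^2 = \bmg^\top M \bmg$ with $M = \sum_{v \in V}\bmd_F(v)\bmx_v \bmx_v^\top \succeq 0$, the SDP constraint gives $\E[Y] = \mathrm{tr}(M) = 1$ and the variance identity for Gaussian quadratic forms gives $\E[Y^2] = 1 + 2\,\mathrm{tr}(M^2) \leq 3$, since $\mathrm{tr}(M^2) \leq \|M\|_{\mathrm{op}}\cdot \mathrm{tr}(M) \leq \mathrm{tr}(M)^2 = 1$. The Paley--Zygmund inequality then gives $\Pr[Y \geq 1/2] \geq (1/2)^2 \cdot \E[Y]^2/\E[Y^2] \geq 1/12$.

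Finally, combining the two tail bounds by inclusion--exclusion, both events hold simultaneously with probability at least $1/12 - 1/24 = 1/24$; dividing the numerator upper bound by the denominator lower bound and adding the additive covering error yields the stated $O(\log n/\epsilon^2 \cdot \lambda_F + \epsilon \max_e \|B(F_e)\|_H^2)$ bound. The main conceptual obstacles are (i) applying the covering lemma so that the error term scales as $\|B(F_e)\|_H^2$ rather than $\|B(F_e)\|_H$, which requires accounting for the $\epsilon\|B(F_e)\|_H$ scaling in Theorem~\ref{the:covering-base-polytope}, and (ii) ensuring a constant-probability lower bound on the denominator while simultaneously controlling the numerator, which we address through Paley--Zygmund combined with Markov on the numerator.
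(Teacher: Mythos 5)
Your proposal is correct and follows essentially the same route as the paper: rephrase the numerator via Corollary~\ref{cor:symmetric-f-squared}, control the covering error with Lemma~\ref{lem:error-by-eps-covering-symmetric}, bound the max over $C_e$ by an $O(\log |C_e|)$ max-of-Gaussians argument against the SDP value, apply Markov to the numerator and a constant-probability anti-concentration bound to the denominator, and union-bound. The only cosmetic difference is that you derive the denominator bound from scratch via Paley--Zygmund, whereas the paper simply cites Proposition~\ref{pro:fact-8.7}, which encapsulates the same estimate.
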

\begin{proof}
  For the expected numerator of $\caR_F(D_F^{1/2}\bmz)$, we have
  \begin{align}
    & \E_{\bmz}\Bigl[\sum_{e \in E}{f_e(\bmz)}^2\Bigr]
    = \E_{\bmz}\Bigl[\sum_{e \in E} \max_{\bmw \in B(F_e)}\langle \bmw, \bmz \rangle^2\Bigr] \nonumber \\
    & = \E_{\bmz}\Bigl[\sum_{e \in E}\max_{\bmw \in C_e}\langle \bmw, \bmz \rangle^2 + O\Bigl(\epsilon \sum_{e \in E} \|\bmz|_{\supp(F_e)}\|_2^2 \cdot \|B(F_e)\|_H^2  \Bigr)\Bigr] \tag{By Lemma~\ref{lem:error-by-eps-covering-symmetric}}\\
    & = \E_{\bmz}\Bigl[\sum_{e \in E}\max_{\bmw \in C_e}\langle \bmw, \bmz \rangle^2 + O\Bigl(\epsilon \max_{e \in E}\|B(F_e)\|_H^2 \cdot \sum_{v \in V}\bmd_F(v){\bmz(v)}^2\Bigr)\Bigr].
    \label{eq:numerator-intermediate-symmetric}
  \end{align}
  First, we analyze the first term in the expectation of~\eqref{eq:numerator-intermediate-symmetric}.
  For $\bmw \in C_e$, $\langle \bmw, \bmz \rangle$ is a normal distribution with mean $0$ and variance $\|X\bmw\|_2^2$.
  Hence, Proposition~\ref{pro:fact-8.6}, which bounds the maximum of squared normal random variables, and Theorem~\ref{the:covering-base-polytope} imply that
  \begin{align}
    \E_{\bmz}\Bigl[\sum_{e \in E}\max_{\bmw \in C_e}\langle \bmw, \bmz \rangle^2\Bigr] \leq 4 \log \max_{e \in E}{|C_e|} \cdot \sum_{e \in E}\max_{\bmw \in C_e}\|X\bmw\|_2^2 = \frac{K\log n}{\epsilon^2} \cdot \SDP_\epsilon(F), \label{eq:numerator-symmetric-first}
  \end{align}
  for some constant $K \in \bbR_+$.

  Next, we analyze the second term in the expectation of~\eqref{eq:numerator-intermediate-symmetric}.
  Using the linearity of expectation, we obtain
  \begin{align}
    \E_\bmz\Bigl[ \sum_{v \in V}\bmd_F(v){\bmz(v)}^2 \Bigr] = \sum_{v \in V}\bmd_F(v)\E_\bmz\Bigl[\langle \bmx_v,\bmg\rangle^2 \Bigr]
    = \sum_{v \in V}\bmd_F(v)\|\bmx_v\|_2^2 = 1. \label{eq:numerator-symmetric-second}
  \end{align}
  From~\eqref{eq:numerator-symmetric-first} and~\eqref{eq:numerator-symmetric-second}, by Markov's inequality, we have
  \begin{align}
    \Pr\Bigl[\sum_{e \in E}{f_e(\bmz)}^2 \leq \frac{24K\log n}{\epsilon^2}\cdot \SDP_\epsilon(F) + 24\epsilon \max_{e \in E}\|B(F_e)\|_H^2 \Bigr] \geq 1-\frac{1}{24}.
    \label{eq:numerator-symmetric}
  \end{align}
  Now, we analyze the denominator of $\caR_F(D_F^{1/2}\bmz)$.
  By Proposition~\ref{pro:fact-8.7}, we have
  \begin{align}
    \Pr\Bigl[ \sum_{v \in V}\bmd_F(v){\bmz(v)}^2 \geq \frac{1}{2}\Bigr] \geq \frac{1}{12}. \label{eq:denominator-symmetric}
  \end{align}

  From~\eqref{eq:numerator-symmetric} and~\eqref{eq:denominator-symmetric}, by the union bound, we have
  \[
    \Pr\Bigl[\caR_F(D_F^{1/2}\bmz) \leq \frac{48K\log n}{\epsilon^2}\cdot \SDP_\epsilon(F) + 48\epsilon \max_{e \in E}\|B(F_e)\|_H^2 \Bigr] \geq \frac{1}{24}.
    \qedhere
  \]
\end{proof}

\begin{proof}[Proof of Theorem~\ref{the:symmetric-eigenvalue}]
  Let $\bmz \in \bbR^V$ be the output of Algorithm~\ref{alg:symmetric-eigenvalue} on $F$ and $\epsilon > 0$.
  Because of the constraint $\sum_{v \in V}\bmd_F(v)\bmx_v = \bmzero$, we have $\langle D_F^{1/2}\bmz,D_F^{1/2}\bmone\rangle = \sum_{v \in V}\bmd_F(v)\langle \bmx_v,\bmg\rangle = \langle \bmzero, \bmg\rangle = 0$.
  Hence $\bmz$ is always feasible.
  The approximation guarantee is given by Lemma~\ref{lem:approximation-guarantee-symmetric}.
  The total time complexity is dominated by the time complexity for solving SDP~\eqref{eq:eps-covering-symmetric}, which is ${\poly(nm)}^{\poly(1/\epsilon)}$.

  Note that we can augment the success probability to $9/10$ by running this algorithm a constant number of times and by outputting the vector with the smallest Rayleigh quotient.

  When the number of extreme points of the base polytope of each $F_e$ is bounded by $N$, we can directly solve the optimization problem obtained from~\eqref{eq:relaxed-symmetric} by replacing each $B(F_e)$ with the set of its extreme points in $\poly(nmN)$ time.
  The same analysis goes through, and we get an approximation ratio of $O(\log N)$ because we do not have the second term in~\eqref{eq:numerator-intermediate-symmetric} and the number of points to be considered in~\eqref{eq:numerator-symmetric-first} is $N$ instead of $|C_e|$.
\end{proof}


\section{Approximating the Smallest Non-trivial Eigenvalue in the General Case}\label{sec:general}
In this section, we prove Theorem~\ref{the:general-eigenvalue}, that is, we provide a polynomial-time algorithm that approximates the smallest non-trivial eigenvalue of the normalized Laplacian of a general submodular transformation to within a factor of $O(\log^2n+ \log n\log m)$ and a small additive error.
We explain our SDP relaxation and rounding method in Section~\ref{subsec:general-relaxation} and then provide an approximation guarantee in Section~\ref{subsec:general-analysis}.

For a technical reason, we assume that the input submodular transformation $F\colon\set{0,1}^V \to \bbR^E$ satisfies $F(S) \in {[0,1/100]}^E$ (instead of ${[0,1]}^E$) for $S \subseteq V$.
This can be obtained by dividing the input function by $100 \max_{e\in E}\|F_e\|_\infty$, which preserves the approximation guarantee.

\subsection{SDP relaxation and rounding}\label{subsec:general-relaxation}

Our SDP formulation is based on the following observation:
\begin{proposition}\label{pro:submodular-squared}
  Let $f\colon\bbR^V \to \bbR$ be the \Lovasz extension of a submodular function $F\colon\set{0,1}^V\to \bbR$.
  If $f$ is non-negative, then we have
  \[
    {f(\bmx)}^2 = \frac{1}{2}\max_{\bmw \in B(F)}\Bigl(\langle \bmw,\bmx\rangle^2 +\langle \bmw,\bmx\rangle|\langle \bmw,\bmx\rangle|\Bigr).
  \]
\end{proposition}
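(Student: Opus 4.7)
The plan is to do a direct case analysis on the sign of $\langle\bmw,\bmx\rangle$ and then exploit the non-negativity of $f$. First I would observe that for any real number $t$, the quantity $t^2 + t|t|$ equals $2t^2$ when $t \geq 0$ and equals $0$ when $t \leq 0$. Hence for each fixed $\bmw \in B(F)$, setting $t = \langle \bmw, \bmx\rangle$, we obtain
\[
  \langle \bmw,\bmx\rangle^2 + \langle \bmw,\bmx\rangle|\langle \bmw,\bmx\rangle| = 2\max\bigl\{\langle\bmw,\bmx\rangle, 0\bigr\}^2.
\]

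Next I would pull the $1/2$ factor inside and rewrite the right-hand side of the claim as $\max_{\bmw \in B(F)} \max\{\langle\bmw,\bmx\rangle, 0\}^2$. Because the map $t \mapsto \max\{t,0\}^2$ is monotone non-decreasing in $t$, the maximum over $\bmw \in B(F)$ commutes with this outer transformation, giving
\[
  \max_{\bmw \in B(F)} \max\bigl\{\langle\bmw,\bmx\rangle, 0\bigr\}^2 = \Bigl(\max_{\bmw \in B(F)} \max\bigl\{\langle\bmw,\bmx\rangle, 0\bigr\}\Bigr)^2 = \Bigl(\max\bigl\{f(\bmx),0\bigr\}\Bigr)^2,
\]
where the last equality uses the definition $f(\bmx) = \max_{\bmw \in B(F)}\langle \bmw,\bmx\rangle$.

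Finally, I would invoke the hypothesis that $f$ is non-negative, so $f(\bmx) \geq 0$ and $\max\{f(\bmx),0\} = f(\bmx)$. Chaining the equalities yields the claim. There is essentially no obstacle here; the only subtlety is justifying that the outer max commutes with the squared positive part, which follows because $t \mapsto \max\{t,0\}^2$ is monotone non-decreasing on $\bbR$.
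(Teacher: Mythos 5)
Your proof is correct and is essentially the paper's argument run in reverse: both reduce to the identity $t^2 + t|t| = 2\max\{t,0\}^2$, commute the outer maximum with the monotone map $t\mapsto\max\{t,0\}^2$, and then use the non-negativity of $f$ to drop the positive-part operation. No substantive difference.
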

\begin{proof}
  We have
  \begin{align*}
    & {f(\bmx)}^2 = {\Bigl(\max_{\bmw \in B(F)}\langle \bmw,\bmx\rangle\Bigr)}^2
    = \max_{\bmw \in B(F)}{\max\bigl\{\langle \bmw,\bmx\rangle,0\bigr\}}^2 \\
    & = \max_{\bmw \in B(F)}{\Bigl(\frac{\langle \bmw,\bmx\rangle+|\langle \bmw,\bmx\rangle|}{2}\Bigr)}^2
    = \frac{1}{2}\max_{\bmw \in B(F)}\Bigl(\langle \bmw,\bmx\rangle^2 +\langle \bmw,\bmx\rangle|\langle \bmw,\bmx\rangle|\Bigr).
  \end{align*}
  Here we have used the non-negativity of $f$ in the second equality.
\end{proof}

By Theorem~\ref{the:normalized-eigenvalues} and Proposition~\ref{pro:submodular-squared}, the smallest non-trivial eigenvalue of the normalized Laplacian $\caL_F$ of a submodular transformation $F\colon\set{0,1}^V \to \bbR^E$ is the minimum of
\[
  \frac{1}{2}\sum_{e \in E} \max_{\bmw \in B(F_e)} \Bigl(\langle \bmw,D_F^{-1/2}\bmx\rangle^2 +\langle \bmw,D_F^{-1/2}\bmx\rangle|\langle \bmw,D_F^{-1/2}\bmx\rangle|\Bigr)
\]
subject to $\|\bmx\|_2^2 = 1$ and $\langle \bmx,D_F^{1/2}\bmone\rangle = 0$.
By replacing $\bmx$ with $D_F^{1/2}\bmx$, the minimum can be written as follows:
\begin{align}
  \begin{array}{lll}
  \text{minimize} & \displaystyle \frac{1}{2}\sum_{e \in E} \eta_e^2,\\
  \text{subject to} & \displaystyle \langle \bmw,\bmx\rangle^2 + \langle \bmw,\bmx\rangle|\langle \bmw,\bmx\rangle| \leq \eta_e^2 & \forall e \in E, \forall \bmw \in B(F_e),\\
  & \displaystyle \sum_{v \in V}\bmd_F(v){\bmx(v)}^2 = 1,\\
  & \displaystyle \sum_{v \in V}\bmd_F(v)\bmx(v) = 0.
  \end{array}\label{eq:original-general}
\end{align}

To derive an SDP relaxation, we introduce vectors $\bmeta_e \in \bbR^N\;(e \in E)$ and $\bmx_v \in \bbR^N\; (v \in V)$ that are supposed to represent $\eta_e\;(e \in E)$ and $\bmx(v) \;(v \in V)$, respectively, where $N \geq n$ is a sufficiently large integer.
In addition, for each $e \in E$ and $\bmw \in B(F_e)$, we introduce vectors $\bmv_{|\langle\bmw,\bmx \rangle|} \in \bbR^N \; (e \in E, \bmw \in B(F_e)) $ that are supposed to represent $|\langle\bmw,\bmx \rangle|$.
Then, for a matrix $X = {(\bmx_v)}_{v \in V} \in \bbR^{N \times V}$, our SDP relaxation is the following:
\begin{align}
  \begin{array}{llll}
    \SDP(f) :=  & \text{minimize} & \displaystyle \frac{1}{2}\sum_{e \in E} \|\bmeta_e\|_2^2, \\
    & \text{subject to} & \displaystyle \|X\bmw\|_2^2 + \langle X\bmw, \bmv_{|\langle\bmw,\bmx \rangle|}\rangle \leq \|\bmeta_e\|_2^2 & \forall e \in E, \bmw \in B(F_e), \\
    & & \|\bmv_{|\langle\bmw,\bmx \rangle|}\|_2^2 = \|X\bmw\|_2^2 & \forall e \in E, \bmw \in B(F_e),\\
    & & \langle \bmv_{|\langle\bmw,\bmx \rangle|}, \bmv_1\rangle \geq \|\bmv_{|\langle\bmw,\bmx \rangle|}\|_2^2 & \forall e \in E, \bmw \in B(F_e),\\
    & & \displaystyle \sum_{v \in V}\bmd_F(v)\|\bmx_v\|_2^2 = 1,\\
    & & \displaystyle \sum_{v \in V}\bmd_v(v)\bmx_v = 0.
  \end{array}\label{eq:relaxed-general}
\end{align}
As in the symmetric case, the value $\|X\bmw\|_2^2 = \|\sum_{v \in V}\bmw(v)\bmx_v\|_2^2$ is supposed to represent the value $\langle \bmw,\bmx\rangle^2$ in~\eqref{eq:original-general}.
The vector $\bmv_1 \in \bbR^N$ is a fixed unit vector that represents the value of one.
The constraint $\|\bmv_{|\langle\bmw,\bmx \rangle|}\|_2^2 = \|X\bmw\|_2^2$ is supposed to represent $|\langle\bmw,\bmx \rangle|^2 = \langle\bmw,\bmx \rangle^2$.
The constraint $\langle \bmv_{|\langle\bmw,\bmx \rangle|}, \bmv_1\rangle \geq \|\bmv_{|\langle\bmw,\bmx \rangle|}\|_2^2$ is supposed to represent $|\langle\bmw,\bmx \rangle| \geq |\langle\bmw,\bmx \rangle|^2$, which is valid because $|\langle \bmw,\bmx\rangle| \leq \|\bmw\|_1 \cdot\max |\bmx(v)| \leq 2/100 \leq 1$ by Lemma~\ref{lem:l1-radius-of-base-polytope} and the assumption that $\|F_e\|_\infty \leq 1/100$ for every $e \in E$ discussed in the beginning of Section~\ref{sec:general}.

We cannot efficiently solve the SDP relaxation~\eqref{eq:relaxed-general} because the numbers of the vectors $\bmv_{|\langle \bmw,\bmx\rangle|}$ and constraints are uncountably many.
We avoid this problem, as in the symmetric case, by replacing each $B(F_e)\;(e \in E)$ with its $\epsilon$-cover $C_e$ provided in Theorem~\ref{the:covering-base-polytope}:
\begin{align}
  \begin{array}{llll}
    \SDP_\epsilon(f) :=  & \text{minimize} & \displaystyle \frac{1}{2}\sum_{e \in E} \|\bmeta_e\|_2^2, \\
    & \text{subject to} & \displaystyle \|X\bmw\|_2^2 + \langle X\bmw, \bmv_{|\langle\bmw,\bmx \rangle|}\rangle \leq \|\bmeta_e\|_2^2 & \forall e \in E, \bmw \in C_e, \\
    & & \|\bmv_{|\langle\bmw,\bmx \rangle|}\|_2^2 = \|X\bmw\|_2^2 & \forall e \in E, \bmw \in C_e,\\
    & & \langle \bmv_{|\langle\bmw,\bmx \rangle|}, \bmv_1\rangle \geq \|\bmv_{|\langle\bmw,\bmx \rangle|}\|_2^2 & \forall e \in E, \bmw \in C_e\\
    & & \displaystyle \sum_{v \in V}\bmd_F(v)\|\bmx_v\|_2^2 = 1,\\
    & & \displaystyle \sum_{v \in V}\bmd_F(v)\bmx_v = 0.
  \end{array}\label{eq:eps-covering-general}
\end{align}
As $C_e \subseteq B(F_e)$, it is clear that $\SDP_\epsilon(f) \leq \SDP(f)$, and hence $\SDP_\epsilon(f)$ is a relaxation of~\eqref{eq:original-general}.
Further, as the size of $C_e$ is polynomial (as long as $\epsilon$ is constant), we can solve SDP~\eqref{eq:eps-covering-general} in polynomial time.


After solving SDP~\eqref{eq:eps-covering-general}, we sample $\bmg \in \bbR^V$ from the standard normal distribution $\caN(0,I_N)$ and then define $\bmz_+ \in \bbR^V$ as $\bmz_+(v) = \langle \bmx_v,\bmv_1 \rangle+ \delta \langle P_{\bmv_1^\bot}\bmx_v,\bmg\rangle\;(v \in V)$ and $\bmz_- \in \bbR^V$ as $\bmz_-(v) = \langle \bmx_v,\bmv_1 \rangle- \delta \langle P_{\bmv_1^\bot}\bmx_v,\bmg\rangle\;(v \in V)$.
Here, $\delta = O\bigl(1/\sqrt{\log (n^{1/\epsilon^2}m)}\bigr)$ and $P_{\bmv_1^\bot}$ is the projection matrix to the subspace orthogonal to $\bmv_1$.
Then, we return the one with the smaller Rayleigh quotient.
Intuitively, this rounding procedure places more importance on the direction $\bmv_1$ than on other directions.
Our algorithm is summarized in Algorithm~\ref{alg:general-eigenvalue}.

\begin{algorithm}[t!]
  \caption{Approximation of the smallest non-trivial eigenvalue of the normalized Laplacian of a general submodular transformation.}\label{alg:general-eigenvalue}
  \begin{algorithmic}[1]
  \Require{a submodular transformation $F\colon\set{0,1}^V \to \bbR^E$ and $\epsilon > 0$.}
  \State{Solve the SDP~\eqref{eq:eps-covering-general}.}
  \State{Let $\delta = \Theta(1/\sqrt{\log (n^{1/\epsilon^2}m)})$. }
  \State{Let $\bmg \in \bbR^N$ be a random vector sampled vector the standard normal distribution $\caN(0,I_N)$.}
  \State{Define $\bmz_+\in \bbR^V$ as $\bmz_+(v) = \langle \bmx_v,\bmv_1\rangle + \delta \langle P_{\bmv_1^\bot}\bmx_v, \bmg\rangle$ for each $v \in V$.}
  \State{Define $\bmz_-\in \bbR^V$ as $\bmz_-(v) = \langle \bmx_v,\bmv_1\rangle - \delta \langle P_{\bmv_1^\bot}\bmx_v, \bmg\rangle$ for each $v \in V$.}
  \If{$\caR_F(D_F^{1/2}\bmz_+) \leq \caR_F(D_F^{1/2}\bmz_-)$}
    \State{\Return $D_F^{1/2}\bmz_+$.}
  \Else
    \State{\Return $D_F^{1/2}\bmz_-$.}
  \EndIf
  \end{algorithmic}
\end{algorithm}

\subsection{Analysis}\label{subsec:general-analysis}
Now, we provide an approximation guarantee of Algorithm~\ref{alg:general-eigenvalue}.

\subsubsection{Denominator of Rayleigh quotients}
We analyze the maximum denominator of $\caR_F(D_F^{1/2}\bmz_+)$ and $\caR_F(D_F^{1/2}\bmz_-)$.

\begin{lemma}\label{lem:denominator-general}
  Let $\bmz_+,\bmz_- \in \bbR^V$ be the vectors obtained in Algorithm~\ref{alg:general-eigenvalue} on a submodular transformation $F\colon\set{0,1}^V \to \bbR^E$ (and some $\epsilon > 0$).
  Then, we have
  \[
    \frac{\delta^2}{2} \leq \max\Bigl\{\E_{\bmg}\Bigl[\sum_{v \in V}\bmd_F(v){\bmz_+(v)}^2\Bigr],
    \E_{\bmg}\Bigl[\sum_{v \in V}\bmd_F(v){\bmz_-(v)}^2\Bigr]\Bigr\} \leq 24+10\delta.
  \]
  with a probability of at least $1/50$.
\end{lemma}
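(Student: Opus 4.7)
The plan is to decompose each $\bmz_\pm(v)$ into a deterministic and a Gaussian part, compute the first two moments of $D_\pm := \sum_{v\in V}\bmd_F(v)\bmz_\pm(v)^2$ using the SDP constraints, and then obtain the upper bound from Markov's inequality and the lower bound from a Paley--Zygmund style argument, combining them via a union bound. Write $a_v := \langle \bmx_v,\bmv_1\rangle$ and $b_v := \langle P_{\bmv_1^\bot}\bmx_v,\bmg\rangle$, so that $\bmz_\pm(v) = a_v \pm \delta b_v$ and
\[
  D_\pm = A \pm 2\delta C + \delta^2 B,
\]
where $A := \sum_v\bmd_F(v) a_v^2$, $B := \sum_v\bmd_F(v) b_v^2$, and $C := \sum_v\bmd_F(v) a_v b_v$. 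Since $\bmg\sim\caN(\bmzero,I_N)$, each $b_v$ is a centered Gaussian with $\E[b_v^2] = \|P_{\bmv_1^\bot}\bmx_v\|_2^2 = \|\bmx_v\|_2^2 - a_v^2$; together with the SDP constraint $\sum_v\bmd_F(v)\|\bmx_v\|_2^2 = 1$ this yields $\E[C]=0$ and $\E[B] = 1 - A$. By Cauchy--Schwarz $A \in [0,1]$, and hence $\E[D_\pm] = A + \delta^2(1-A) \in [\delta^2, 1]$ under $\delta \leq 1$.

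For the upper bound, Markov's inequality applied to each non-negative $D_\pm$ and a union bound over $\pm$ produce $\max(D_+,D_-) \leq 24 + 10\delta$ with probability close to one; the slack $10\delta$ accommodates the $O(\delta^2)$ contribution of the cross term $\pm 2\delta C$ to the second moment of $D_\pm$. For the lower bound, the symmetric cancellation
\[
  D_+ + D_- = 2A + 2\delta^2 B
\]
gives $\max(D_+,D_-) \geq A + \delta^2 B =: X$ with $\E[X] \geq \delta^2$. Writing $B = \bmg^\top M \bmg$ for $M := \sum_v\bmd_F(v)(P_{\bmv_1^\bot}\bmx_v)(P_{\bmv_1^\bot}\bmx_v)^\top \succeq 0$, the PSD inequality $\mathrm{Tr}(M^2) \leq \mathrm{Tr}(M)^2$ gives $\E[B^2] = \mathrm{Tr}(M)^2 + 2\mathrm{Tr}(M^2) \leq 3\E[B]^2$, and hence $\E[X^2] \leq 3\E[X]^2$. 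Paley--Zygmund at threshold $1/2$ then yields $\Pr[X \geq \E[X]/2] \geq 1/12$, so that $\Pr[\max(D_+,D_-) \geq \delta^2/2] \geq 1/12$.

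A union bound between the Markov and Paley--Zygmund events delivers a simultaneous success probability of at least a small absolute constant, which after tuning the Markov threshold is at least $1/50$ as claimed. The main obstacle will be managing the constants cleanly: one must simultaneously control the cross term $2\delta C$ (which drives the $10\delta$ correction in the upper bound) and verify that the Paley--Zygmund event for $X$ retains enough probability mass once the Markov tails for $D_+$ and $D_-$ have been excised. A secondary technical point is verifying $\mathrm{Tr}(M^2) \leq \mathrm{Tr}(M)^2$ for this specific PSD matrix, which follows immediately from the general inequality $\sum_i\lambda_i^2 \leq (\max_i\lambda_i)(\sum_i\lambda_i) \leq (\sum_i\lambda_i)^2$ on the non-negative spectrum of $M$.
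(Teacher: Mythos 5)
Your decomposition $D_\pm = A \pm 2\delta C + \delta^2 B$ is exactly the one the paper uses, and your lower-bound argument is sound: $\max\{D_+,D_-\}\geq \tfrac12(D_++D_-)=A+\delta^2B=:X$ with $\E[X]\geq\delta^2$, and the Paley--Zygmund step (via $\E[B^2]=\mathrm{Tr}(M)^2+2\,\mathrm{Tr}(M^2)\leq 3\,\E[B]^2$ for the PSD matrix $M$) correctly yields $\Pr[X\geq \E[X]/2]\geq 1/12$. This is in effect a self-contained reproof of the paper's Proposition~\ref{pro:fact-8.7}, which the paper simply cites and applies to the $B$-part alone before adding the deterministic $A=\alpha^2$; the two routes are equivalent here.

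The genuine gap is in the upper bound and the final union bound. Applying Markov separately to $D_+$ and $D_-$ at threshold $24+10\delta$ (each has expectation at most $1$) gives failure probability at most $2/(24+10\delta) = 1/12 - O(\delta)$. Subtracting this from the Paley--Zygmund success probability of $1/12$ leaves only $O(\delta)=O\bigl(1/\sqrt{\log(n^{1/\epsilon^2}m)}\bigr)$, which vanishes as $n,m$ grow and is not bounded below by $1/50$. ``Tuning the Markov threshold'' upward restores a constant success probability but then proves a weaker upper bound than the stated $24+10\delta$. The fix --- and what the paper actually does --- is to not charge the cross term to Markov: apply Markov only to the sign-independent part (the paper applies it to $\delta^2 B$ at $24$ times its mean, giving failure probability $1/24$ rather than $2\cdot 1/24$), and control the cross term $2\delta|C|$ by a Gaussian tail bound, noting that $C$ is a centered Gaussian whose standard deviation is at most $\alpha\beta\leq 1/2$ by the vector Cauchy--Schwarz inequality (Lemma~\ref{lem:vector-cauchy-schwarz}), so that $2\delta|C|\leq 10\delta$ except with negligible probability. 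This is precisely what the $+10\delta$ slack in the statement is for; the paper implements it with Mill's inequality at failure probability $1/100$, and the union bound then gives $1/12-1/24-1/100>1/50$.
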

\begin{proof}
  For the later convenience, we define $\alpha = \sqrt{\sum_{v \in V}\bmd_F(v)\langle \bmx_v,\bmv_1\rangle^2}$ and $\beta = \sqrt{\sum_{v \in V}\bmd_F(v)\|P_{\bmv_1^\bot}\bmx_v\|_2^2}$.
  We have
  \begin{align}
    & \max\Bigl\{\sum_{v \in V}\bmd_F(v){\bmz_+(v)}^2,\sum_{v \in V}\bmd_F(v){\bmz_-(v)}^2\Bigr\}
    =
    \max_{\sigma \in \set{-1,1}}\sum_{v \in V}\bmd_F(v){\Bigl(\langle \bmx_v, \bmv_1\rangle + \sigma\delta \langle P_{\bmv_1^\bot}\bmx_v, \bmg\rangle\Bigr)}^2 \nonumber \\
    & =
    \sum_{v \in V}\bmd_F(v)\langle \bmx_v, \bmv_1\rangle^2
    + \delta^2 \sum_{v \in V}\bmd_F(v)\langle P_{\bmv_1^\bot}\bmx_v, \bmg\rangle^2
    + \delta \Bigl|\sum_{v \in V}\bmd_F(v) \langle \bmx_v, \bmv_1\rangle \langle P_{\bmv_1^\bot}\bmx_v, \bmg\rangle\Bigr| \nonumber \\
    & =
    \alpha^2
    + \delta^2 \sum_{v \in V}\bmd_F(v)\langle P_{\bmv_1^\bot}\bmx_v, \bmg\rangle^2
    + \delta \Bigl|\sum_{v \in V}\bmd_F(v) \langle \bmx_v, \bmv_1\rangle \langle P_{\bmv_1^\bot}\bmx_v, \bmg\rangle\Bigr|
     \label{eq:denominator-general-decomposition}
  \end{align}

  For the second term of~\eqref{eq:denominator-general-decomposition}, as $\E\limits_\bmg \langle P_{\bmv_1^\bot}\bmx_v, \bmg\rangle^2 = \| P_{\bmv_1^\bot}\bmx_v\|_2^2$, by Proposition~\ref{pro:fact-8.7}, we have
  \begin{align}
    \Pr_{\bmg}\Bigl[\sum_{v \in V}\bmd_F(v)\langle P_{\bmv_1^\bot}\bmx_v, \bmg\rangle^2 \geq \frac{1}{2}\beta^2 \Bigr] \geq \frac{1}{12}. \label{eq:denominator-general-decomposition-1}
  \end{align}
  By Markov's inequality, we have
  \begin{align}
    \Pr_{\bmg}\Bigl[\sum_{v \in V}\bmd_F(v)\langle P_{\bmv_1^\bot}\bmx_v, \bmg\rangle^2 \leq 24\beta^2 \Bigr] \geq 1-\frac{1}{24}.
    \label{eq:denominator-general-decomposition-2}
  \end{align}
  For the third term of~\eqref{eq:denominator-general-decomposition}, by Mill's inequality, we have
  \begin{align*}
    & \Pr\Bigl[\Bigl| \sum_{v \in V}\bmd_F(v) \langle \bmx_v, \bmv_1\rangle \langle P_{\bmv_1^\bot}\bmx_v, \bmg\rangle\Bigr| \geq t\Bigr]
    =
    \Pr\Bigl[\Bigl|  \Bigl\langle \sum_{v \in V}\bmd_F(v) \langle \bmx_v, \bmv_1\rangle P_{\bmv_1^\bot}\bmx_v, \bmg\Bigr\rangle\Bigr|\geq t\Bigr]
    \leq \frac{1}{t}\sqrt{\frac{2}{\pi}} \exp\Bigl(-\frac{t^2}{2\sigma^2}\Bigr),
  \end{align*}
  where $\sigma = \|\sum_{v \in V}\bmd_F(v) \langle \bmx_v, \bmv_1\rangle P_{\bmv_1^\bot}\bmx_v\|_2$.
  Note that $\sigma \leq  \alpha \beta$ by the vector version of the Cauchy-Schwarz inequality (see Lemma~\ref{lem:vector-cauchy-schwarz}).
  Hence, by setting $t = 10 \alpha \beta \log \frac{1}{\alpha \beta}$, where we regard $t=0$ when $\alpha\beta=0$, we have
  \begin{align}
    \Pr\Bigl[\Bigl| \sum_{v \in V}\bmd_F(v) \langle \bmx_v, \bmv_1\rangle \langle P_{\bmv_1^\bot}\bmx_v, \bmg\rangle\Bigr| \geq 10 \alpha \beta\log\frac{1}{\alpha \beta}\Bigr]
    \leq
    \frac{1}{10 \alpha \beta \log\frac{1}{\alpha \beta}} \sqrt{\frac{2}{\pi}}\exp\Bigl(-5 \log \frac{1}{\alpha \beta}\Bigr)\leq \frac{1}{100}.
    \label{eq:denominator-general-decomposition-3}
  \end{align}
  By the union bound on~\eqref{eq:denominator-general-decomposition-1},~\eqref{eq:denominator-general-decomposition-2}, and~\eqref{eq:denominator-general-decomposition-3}, with a probability of at least $1/50$, we have  $\beta^2/2 \leq \sum_{v \in V}\bmd_F(v)\langle P_{\bmv_1^\bot}\bmx_v,\bmg\rangle^2 \leq 24\beta^2$ and $|\sum_{v \in V}\bmd_F(v)\langle \bmx_v,\bmv_1\rangle \langle P_{\bmv_1^\bot}\bmx_v,\bmg\rangle| \leq 10 \alpha \beta \log\frac{1}{\alpha \beta}$.
  In what follows, we assume this happened.

  For the upper bound, from~\eqref{eq:denominator-general-decomposition} and our assumptions, we have
  \begin{align*}
    & \max\Bigl\{\sum_{v \in V}\bmd_F(v){\bmz_+(v)}^2,\sum_{v \in V}\bmd_F(v){\bmz_-(v)}^2\Bigr\}
    \leq \alpha^2 + 24\delta^2 \beta^2 + 10 \delta \alpha \beta \log\frac{1}{\alpha \beta}
    \leq 24 + 10\delta,
  \end{align*}
  where we used the fact that $\alpha^2+\beta^2 = 1$ and the maximum of $\alpha \beta \log(1/\alpha \beta)$ subject to $\alpha^2+\beta^2=1$ is $\log(2)/2 \leq 1$.

  For the lower bound, from~\eqref{eq:denominator-general-decomposition} and our assumptions, we have
  \[
    \max\Bigl\{\sum_{v \in V}\bmd_F(v){\bmz_+(v)}^2,\sum_{v \in V}\bmd_F(v){\bmz_-(v)}^2\Bigr\} \geq \alpha^2 + \frac{\delta^2 \beta^2}{2}\geq \frac{\delta^2}{2},
  \]
  where we used the fact that $\alpha^2+\beta^2 = 1$.
  Note that the third term of~\eqref{eq:denominator-general-decomposition} does not appear because we take the maximum of $\sum_{v \in V}\bmd_F(v){\bmz_+(v)}^2$ and $\sum_{v \in V}\bmd_F(v){\bmz_-(v)}^2$.
\end{proof}

\subsubsection{Numerator of Rayleigh quotients}
Next, we analyze the maximum numerator of $\caR_F(D_F^{1/2}\bmz_+)$ and $\caR_F(D_F^{1/2}\bmz_-)$.

The following lemma is useful to bound the error that occurred by replacing the base polytope with its $\epsilon$-cover.
\begin{lemma}\label{lem:error-by-eps-covering-general}
  Let $F\colon\set{0,1}^V \to \bbR$ be a submodular function and let $C \subseteq B(F)$ be an $\epsilon$-cover of $B(F)$ for $\epsilon > 0$.
  Then, we have
  \[
    \max_{\bmw \in B(F)}\max\bigset{\langle \bmw,\bmx\rangle,0}^2 \leq
    \max_{\bmw \in C}\max\bigset{\langle \bmw,\bmx\rangle,0}^2 + \max\bigset{\epsilon^2, 2\epsilon\max_{\bmw \in B(F)}\|\bmw\|_2} \cdot \|\bmx|_{\supp(F)}\|_2^2
  \]
  for any $\bmx \in \bbR^V$.
\end{lemma}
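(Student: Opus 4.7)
The plan is to follow the same template as in the proof of Lemma~\ref{lem:error-by-eps-covering-symmetric}: pick a maximizer $\bmw^\ast \in B(F)$ for the left-hand side, choose a nearby $\bmw' \in C$ with $\|\bmw^\ast - \bmw'\|_2 \leq \epsilon$, and then bound the drop $\max\{\langle \bmw^\ast,\bmx\rangle,0\}^2 - \max\{\langle \bmw',\bmx\rangle,0\}^2$ in terms of $\epsilon$ and $\bmx$. The only subtlety compared to the symmetric case is that $\max\{\cdot,0\}^2$ is not a smooth function, so the analysis splits into cases depending on the signs of $\langle \bmw^\ast,\bmx\rangle$ and $\langle \bmw',\bmx\rangle$.

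If $\langle \bmw^\ast,\bmx\rangle \leq 0$ the left-hand side is zero and the inequality is trivial, so I would assume $\langle \bmw^\ast,\bmx\rangle > 0$. In the subcase $\langle \bmw',\bmx\rangle \geq 0$, I would factor
\[
    \langle \bmw^\ast,\bmx\rangle^2 - \langle \bmw',\bmx\rangle^2 = \langle \bmw^\ast-\bmw',\bmx\rangle \cdot \langle \bmw^\ast+\bmw',\bmx\rangle,
\]
and, exactly as in the symmetric case, use the fact that every $\bmw \in B(F)$ is supported on $\supp(F)$ (so inner products against $\bmx$ reduce to $\bmx|_{\supp(F)}$), then apply Cauchy--Schwarz twice: the first factor is bounded by $\epsilon\,\|\bmx|_{\supp(F)}\|_2$ and the second by $2\max_{\bmw\in B(F)}\|\bmw\|_2 \cdot \|\bmx|_{\supp(F)}\|_2$.

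The new subcase is $\langle \bmw^\ast,\bmx\rangle > 0 > \langle \bmw',\bmx\rangle$. Here $\max\{\langle\bmw',\bmx\rangle,0\}^2 = 0$, so I need a direct bound on $\langle \bmw^\ast,\bmx\rangle^2$. The key observation is
\[
    0 < \langle \bmw^\ast,\bmx\rangle < \langle \bmw^\ast,\bmx\rangle - \langle \bmw',\bmx\rangle = \langle \bmw^\ast-\bmw',\bmx|_{\supp(F)}\rangle \leq \epsilon \,\|\bmx|_{\supp(F)}\|_2,
\]
so squaring gives $\langle \bmw^\ast,\bmx\rangle^2 \leq \epsilon^2\,\|\bmx|_{\supp(F)}\|_2^2$. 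Taking the worst of the two subcase bounds yields the $\max\{\epsilon^2,\,2\epsilon\max_{\bmw\in B(F)}\|\bmw\|_2\}$ factor in the statement. None of the steps is a real obstacle; the main thing to get right is carefully enumerating the sign cases so that the mixed-sign subcase, which has no analogue in Lemma~\ref{lem:error-by-eps-covering-symmetric}, is handled and produces the $\epsilon^2$ term inside the maximum.
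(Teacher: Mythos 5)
Your proposal is correct and follows essentially the same route as the paper's proof: reduce to $\bmx|_{\supp(F)}$, take a maximizer $\bmw^*$ and a nearby $\bmw' \in C$, and split on the sign of $\langle \bmw',\bmx\rangle$, with the mixed-sign case giving the $\epsilon^2$ term and the same-sign case giving the $2\epsilon\max_{\bmw}\|\bmw\|_2$ term via the difference-of-squares factorization and Cauchy--Schwarz. No gaps.
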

\begin{proof}
  Because $\bmw(v) = 0$ for every $v \in V \setminus \supp(F)$, it suffices to show the inequality for which $\bmx$ is replaced with $\bmx|_{\supp(F)}$.

  Let $\bmw^*$ be the maximizer of $\max_{\bmw \in B(F)}\max\bigset{\langle \bmw,\bmx|_{\supp(F)}\rangle,0}^2$.
  If $\langle \bmw^*,\bmx|_{\supp(F)}\rangle < 0$, then the inequality clearly holds.
  Hence, we assume $\langle \bmw^*,\bmx|_{\supp(F)}\rangle \geq 0$.

  From the definition of $\epsilon$-cover, there exists $\bmw' \in C$ with $\|\bmw^*-\bmw'\|_2 \leq \epsilon$.
  Our goal is showing that $\langle \bmw^*,\bmx|_{\supp(F)}\rangle^2 \leq
   \max\bigset{\langle \bmw',\bmx|_{\supp(F)}\rangle,0}^2 + \max\bigset{\epsilon^2, 2\epsilon \max_{\bmw \in B(F)}\|\bmw\|_2} \cdot \|\bmx|_{\supp(F)}\|_2^2$.

  If $\langle \bmw',\bmx|_{\supp(F)}\rangle < 0 $, then we have
  \begin{align*}
    \langle \bmw^*, \bmx|_{\supp(F)} \rangle = \langle \bmw', \bmx|_{\supp(F)} \rangle + \langle \bmw^*-\bmw', \bmx|_{\supp(F)} \rangle
    < \epsilon \|\bmx|_{\supp(F)}\|_2,
  \end{align*}
  which implies $\langle \bmw^*,\bmx|_{\supp(F)}\rangle^2 \leq \epsilon^2 \|\bmx|_{\supp(F)}\|_2^2$.

  Otherwise, we have
  \begin{align*}
    & \langle \bmw^*,\bmx|_{\supp(F)}\rangle^2 -
    \max\bigset{\langle \bmw',\bmx|_{\supp(F)}\rangle,0}^2
    = \langle \bmw^*, \bmx|_{\supp(F)} \rangle^2 - \langle \bmw', \bmx|_{\supp(F)} \rangle^2 \\
    & = \langle \bmw^* - \bmw',\bmx|_{\supp(F)}\rangle \cdot \langle \bmw^* + \bmw',\bmx|_{\supp(F)}\rangle
    \leq \epsilon \|\bmx|_{\supp(F)}\|_2 \cdot 2\max_{\bmw \in B(F)}\|\bmw\|_2 \cdot \|\bmx|_{\supp(F)}\|_2 \\
    & = 2\epsilon \|\bmx|_{\supp(F)}\|_2^2 \max_{\bmw \in B(F)}\|\bmw\|_2. \qedhere
  \end{align*}
\end{proof}

In what follows, we fix a submodular transformation $F\colon\set{0,1}^V \to \bbR^E$ and $\epsilon > 0$, and let $X = (\bmx_v) \in \bbR^{N \times V}$ be the SDP solution and let $\bmz_+,\bmz_- \in \bbR^V$ be the vectors obtained by rounding $X$.
Now, we divide $\bmw \in C_e\;(e \in E)$ into two classes by the value of $\langle X\bmw,\bmv_1\rangle$.
\begin{align*}
  W_e^+ &= \Bigl\{ \bmw \in C_e \mid \langle X\bmw, \bmv_1\rangle > -\frac{1}{2}\Bigr\},
  \quad W_e^- = C_e \setminus W_e^+,\\
  W^+ & =  \bigcup_{e \in E}W_e^+, \quad
  W^- =  \bigcup_{e \in E}W_e^-.
\end{align*}
We will see that, although $\bmw \in W_e^-\;(e \in E)$ makes no contribution to the SDP value, it also does not contribute to in $f_e(\bmz^+)$ and $f_e(\bmz^-)$, and hence no loss is incurred for such $\bmw$ by rounding.
On the other hand, although $\bmw \in W_e^+\;(e \in E)$ may make a large contribution to the SDP value, we can specify its lower bound by using $\|X\bmw\|_2^2$ (instead of $\max\set{\langle X\bmw,\bmv_1\rangle,0}^2$), and hence we can use an argument similar to the symmetric case.

First, we analyze the contribution of $\bmw \in W^-$.
\begin{lemma}\label{lem:numerator-general-small-bias}
  With a probability of at least $99/100$, we have
  \[
    \max\Bigl\{\langle \bmw, \bmz_+\rangle,\langle \bmw, \bmz_-\rangle\Bigr\}\leq 0
  \]
  for every $\bmw \in W^-$.
\end{lemma}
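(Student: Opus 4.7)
My plan is to first rewrite $\langle \bmw, \bmz_+\rangle$ and $\langle \bmw, \bmz_-\rangle$ in terms of the SDP variables. Writing $X\bmw = \sum_{v \in V}\bmw(v)\bmx_v$, a direct calculation gives
\[
  \langle \bmw, \bmz_\pm\rangle = \langle X\bmw,\bmv_1\rangle \pm \delta \langle P_{\bmv_1^\bot} X\bmw, \bmg\rangle,
\]
so $\max\{\langle \bmw,\bmz_+\rangle,\langle \bmw,\bmz_-\rangle\} = \langle X\bmw,\bmv_1\rangle + \delta\,|\langle P_{\bmv_1^\bot}X\bmw,\bmg\rangle|$. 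For $\bmw \in W^-$ the deterministic part is at most $-1/2$, so it suffices to show that $\delta|\langle P_{\bmv_1^\bot}X\bmw, \bmg\rangle| \le 1/2$ simultaneously for every $\bmw \in W^-$ with probability $\geq 99/100$.

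Next I would control $\|P_{\bmv_1^\bot}X\bmw\|_2$, which is the standard deviation of the Gaussian $\langle P_{\bmv_1^\bot}X\bmw,\bmg\rangle$. Using the SDP constraints $\|\bmv_{|\langle\bmw,\bmx\rangle|}\|_2^2 = \|X\bmw\|_2^2$ and $\langle \bmv_{|\langle\bmw,\bmx\rangle|},\bmv_1\rangle \ge \|\bmv_{|\langle\bmw,\bmx\rangle|}\|_2^2$, together with Cauchy--Schwarz and $\|\bmv_1\|_2=1$, I obtain $\|X\bmw\|_2^2 \le \|X\bmw\|_2$, hence $\|X\bmw\|_2 \le 1$, and therefore $\|P_{\bmv_1^\bot}X\bmw\|_2 \le 1$.

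Then I would apply the standard Gaussian tail bound: for any fixed $\bmw$,
\[
  \Pr_{\bmg}\bigl[\,|\langle P_{\bmv_1^\bot}X\bmw,\bmg\rangle|\ge t\,\bigr] \le 2\exp\!\bigl(-t^2/2\bigr).
\]
Setting $t = 1/(2\delta)$ and union-bounding over $|W^-|\le \sum_{e\in E}|C_e| \le m\cdot \max_e |C_e|$, and using the bound $|C_e| = O(\log n \cdot U_{\ref{lem:covering-number-of-l1-ball}}(\epsilon/6, B_1^V)) = n^{O(1/\epsilon^2)}$ from Theorem~\ref{the:covering-base-polytope}, the total failure probability is of order $m\, n^{O(1/\epsilon^2)}\exp(-1/(8\delta^2))$. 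Choosing $\delta = \Theta(1/\sqrt{\log(mn^{1/\epsilon^2})})$ with a sufficiently small hidden constant makes this $\le 1/100$, which is exactly the choice of $\delta$ in Algorithm~\ref{alg:general-eigenvalue}.

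The main (minor) subtlety I expect is the bookkeeping step showing $\|X\bmw\|_2\le 1$ from the two SDP constraints involving $\bmv_{|\langle\bmw,\bmx\rangle|}$; this is where the constraint $\langle \bmv_{|\langle\bmw,\bmx\rangle|}, \bmv_1\rangle \ge \|\bmv_{|\langle\bmw,\bmx\rangle|}\|_2^2$, which encodes $|\langle\bmw,\bmx\rangle|\ge\langle\bmw,\bmx\rangle^2$, actually pays off. Beyond that, the argument is a single Gaussian concentration plus union bound, calibrated to the cover size from Section~\ref{sec:covering} and the chosen $\delta$.
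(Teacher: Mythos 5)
Your proposal is correct, and it follows the same overall decomposition as the paper: split $\langle \bmw,\bmz_\pm\rangle$ into the deterministic part $\langle X\bmw,\bmv_1\rangle \leq -1/2$ (the defining property of $W^-$) plus the Gaussian fluctuation $\pm\delta\langle P_{\bmv_1^\bot}X\bmw,\bmg\rangle$, and show the fluctuation is uniformly below $1/2$ with probability $99/100$ by calibrating $\delta = \Theta(1/\sqrt{\log(n^{1/\epsilon^2}m)})$ against the cover sizes. You differ from the paper in two sub-steps, both harmless. First, to bound the standard deviation $\|P_{\bmv_1^\bot}X\bmw\|_2$ you invoke the SDP constraints on $\bmv_{|\langle\bmw,\bmx\rangle|}$ to get $\|X\bmw\|_2 \leq 1$; the paper instead uses $\|X\bmw\|_2 \leq \|\bmw\|_1 \max_v\|\bmx_v\|_2 \leq 1/50$, via Lemma~\ref{lem:l1-radius-of-base-polytope} and the rescaling $\|F_e\|_\infty \leq 1/100$. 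Both are valid; the paper's route gives a smaller constant but this is absorbed into the hidden constant in $\delta$ either way, and your route has the mild advantage of not depending on the $1/100$ rescaling. Second, for the uniform control over $W^-$ you use a per-point Gaussian tail bound plus a union bound, whereas the paper bounds $\E[\max_{\bmw}|\cdots|]$ via Proposition~\ref{pro:fact-8.6} and then applies Markov's inequality; these are interchangeable and both produce the same $\sqrt{\log\sum_e|C_e|}$ factor that $\delta$ is chosen to cancel. No gap.
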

\begin{proof}
  We have
  \begin{align}
    & \max_{\bmw \in W^-}\max\Bigl\{\langle \bmw, \bmz_+\rangle,\langle \bmw, \bmz_-\rangle\Bigr\}
    = \max_{\bmw \in W^-}\max_{\sigma \in \set{-1,1}}\sum_{v \in V}\bmw(v)\Bigl(\langle \bmx_v,\bmv_1\rangle + \sigma\delta \langle P_{\bmv_1^\bot}\bmx_v,\bmg \rangle \Bigr) \nonumber \\
    & \leq -\frac{1}{2} + \delta \max_{\bmw \in W^-}\Bigl|\Bigl\langle \sum_{v \in V}\bmw(v) P_{\bmv_1^\bot}\bmx_v,\bmg\Bigr\rangle\Bigr|. \label{eq:W^--1}
  \end{align}
  By Lemma~\ref{pro:fact-8.6} and Markov's inequality, with a probability of $99/100$, we have
  \begin{align}
    & \delta \max_{\bmw \in W^-}\Bigl|\Bigl\langle \sum_{v \in V}\bmw(v) P_{\bmv_1^\bot}\bmx_v,\bmg\Bigr\rangle\Bigr|
    \leq 100\delta \sqrt{\log 2\sum_{e \in E}|C_e|} \max_{\bmw \in W^-} \Bigl\|\sum_{v \in V}\bmw(v) P_{\bmv_1^\bot}\bmx_v\Bigr\|_2 \nonumber \\
    & \leq 100\delta \sqrt{\log 2\sum_{e \in E}|C_e|} \max_{\bmw \in W^-} \sum_{v \in V}\bmw(v) \Bigl\| P_{\bmv_1^\bot}\bmx_v\Bigr\|_2. \label{eq:W^--2}
  \end{align}
  Note that $\|\bmx_v\|_2 \leq 1$ for any $v \in V$ and $\|\bmw\|_1 \leq 2/100=1/50$ for any $\bmw \in C_e \subseteq B(F_e)$ from Lemma~\ref{lem:l1-radius-of-base-polytope}.
  Hence, we have~$\eqref{eq:W^--2} \leq 1/50$ by choosing the hidden constant in $\delta$ to be sufficiently small.
  Then by~\eqref{eq:W^--1}, we have
  \[
    \max_{\bmw \in W^-}\max\Bigl\{\langle \bmw, \bmz_+\rangle,\max_{\bmw \in C_e}\langle \bmw, \bmz_-\rangle\Bigr\} \leq -\frac{1}{2} + \frac{1}{50} \leq 0
  \]
  with a probability of at least $99/100$.
\end{proof}

We next show that we can bound the SDP value from below by using $\|X\bmw\|_2^2$ for $\bmw \in W_e^+$.
\begin{lemma}\label{lem:numerator-general-large-bias-objective}
  For every $e \in E$, we have
  \[
    \max_{\bmw \in W_e^+} \|X\bmw\|_2^2 \leq 2\|\bmeta_e\|_2^2.
  \]
\end{lemma}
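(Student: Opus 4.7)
The plan is to reduce the lemma to the scalar inequality
\[
\langle X\bmw, \bmv_{|\langle \bmw, \bmx\rangle|}\rangle \;\geq\; -\tfrac{1}{2}\|X\bmw\|_2^2 \qquad\text{for every } \bmw \in W_e^+,
\]
since combining this with the first constraint of SDP~\eqref{eq:eps-covering-general} rearranges directly to $\|X\bmw\|_2^2 \leq 2\|\bmeta_e\|_2^2$, which is the claim.

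To prove the scalar inequality I will fix $\bmw \in W_e^+$ and abbreviate $s = \|X\bmw\|_2^2$, $a = \langle X\bmw, \bmv_1\rangle$, $\bmu = \bmv_{|\langle \bmw, \bmx\rangle|}$, and $t = \langle \bmu, \bmv_1\rangle$. The SDP constraints, together with Cauchy-Schwarz applied to $(\bmu, \bmv_1)$ and $(X\bmw, \bmv_1)$, give $\|\bmu\|_2^2 = s$, $s \leq t \leq \sqrt{s}$, $|a| \leq \sqrt{s}$, and $a > -1/2$ (the last one from the definition of $W_e^+$). I will then decompose both $X\bmw = a\bmv_1 + (X\bmw)^\perp$ and $\bmu = t\bmv_1 + \bmu^\perp$ along $\bmv_1$, with $\|(X\bmw)^\perp\|_2^2 = s - a^2$ and $\|\bmu^\perp\|_2^2 = s - t^2$, and apply Cauchy-Schwarz to the perpendicular components to obtain
\[
\langle X\bmw, \bmu\rangle \;=\; at + \langle (X\bmw)^\perp, \bmu^\perp\rangle \;\geq\; at - \sqrt{(s - a^2)(s - t^2)}.
\]
The task thereby reduces to showing that this lower bound is at least $-s/2$.

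The main obstacle will be this last algebraic step. The guiding intuition is that the SDP constraint $t \geq s$ limits how much $\bmu$ can anti-align with $(X\bmw)^\perp$ (since each unit of $\bmv_1$-projection consumes perpendicular budget), while the $W_e^+$ condition $a > -1/2$ keeps the parallel product $at$ from dropping too far below zero. Concretely, I plan to first verify that $at + s/2 \geq 0$ in the regime of interest --- exploiting the normalization $F_e(S) \in [0, 1/100]$ assumed at the beginning of Section~\ref{sec:general} together with Lemma~\ref{lem:l1-radius-of-base-polytope}, which force $\|X\bmw\|_2 \leq \|\bmw\|_1 \cdot \max_v\|\bmx_v\|_2 \leq 1/50$ and hence $s \leq 1/2500$ --- and then square the rearranged inequality $\sqrt{(s-a^2)(s-t^2)} \leq at + s/2$ to reduce it to a polynomial condition in $a, t, s$. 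That polynomial condition is expected to close via the constraint $t \geq s$ combined with $a > -1/2$ after a short case split on the sign of $a$; if any subcase leaves a gap (most plausibly $a \in (-1/2, 0)$ with $t$ at its lower boundary $s$), I will fall back on a global argument that identifies another $\bmw' \in W_e^+$ with $\langle X\bmw', \bmv_1\rangle \geq 0$ and $\|X\bmw'\|_2^2$ comparable to $\|X\bmw\|_2^2$, whose SDP constraint then forces the required lower bound on $\|\bmeta_e\|_2^2$.
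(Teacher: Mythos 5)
Your reduction to the scalar inequality $\langle X\bmw, \bmv_{|\langle\bmw,\bmx\rangle|}\rangle \geq -\tfrac12\|X\bmw\|_2^2$ and the decomposition along $\bmv_1$ are exactly the algebraic form of the paper's own argument (the paper phrases the same decomposition through angles, $\cos\theta \geq \cos\theta'\cos\theta'' - \sin\theta'\sin\theta''$, which after multiplying by $s$ is precisely your $at - \sqrt{(s-a^2)(s-t^2)}$). The problem is the final step, and it is not a removable subcase: the target inequality $at - \sqrt{(s-a^2)(s-t^2)} \geq -s/2$ is false under the constraints you have listed. Take $a = -\sqrt{s}$ and $t = \sqrt{s}$, i.e.\ $X\bmw = -\sqrt{s}\,\bmv_1$ and $\bmv_{|\langle\bmw,\bmx\rangle|} = \sqrt{s}\,\bmv_1$. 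This satisfies $\|\bmv_{|\langle\bmw,\bmx\rangle|}\|_2^2 = s = \|X\bmw\|_2^2$ and $t = \sqrt{s} \geq s$, and $a = -\sqrt{s} \geq -1/50 > -1/2$ (note the defining condition of $W_e^+$ is vacuous, since $|a| \leq \|X\bmw\|_2 \leq \|\bmw\|_1 \leq 1/50$ always), yet $\langle X\bmw, \bmv_{|\langle\bmw,\bmx\rangle|}\rangle = -s$, so the first constraint of SDP~\eqref{eq:eps-covering-general} yields only $\|\bmeta_e\|_2^2 \geq 0$. This is not a pathological corner of the relaxation: it is the intended integral solution when $\langle\bmw,\bmx\rangle < 0$, where $\|X\bmw\|_2^2 + \langle X\bmw,\bmv_{|\langle\bmw,\bmx\rangle|}\rangle$ is designed to vanish (it represents $\max\{\langle\bmw,\bmx\rangle,0\}^2 = 0$) while $\|X\bmw\|_2^2 > 0$. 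Consequently no case split on the sign of $a$ can close the argument, and your fallback --- locating another $\bmw' \in W_e^+$ with nonnegative bias and comparable norm --- is unjustified, since nothing prevents every $\bmw' \in C_e$ from having $\langle X\bmw',\bmv_1\rangle < 0$.

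For what it is worth, the paper's proof stumbles at the corresponding point: the step $\cos\theta'\cos\theta'' \geq -\|X\bmw\|_2\min\{1/(2\|X\bmw\|_2),1\}$ uses the lower bound $\cos\theta' \geq \|X\bmw\|_2$ where an upper bound on $\cos\theta'$ is what is needed when $\cos\theta'' < 0$; in the configuration above $\cos\theta' = 1$, $\cos\theta'' = -1$, and the product is $-1$, far below $-\|X\bmw\|_2$. So you have faithfully reconstructed the paper's argument up to and including its weak point, but as written your proposal does not establish the claimed inequality from the stated SDP constraints and should not be regarded as a complete proof.
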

\begin{proof}
  Take an arbitrary vector $\bmw$ in $W_e^+$, and let $\theta \in [0,\pi]$ be the angle between $\bmv_{|\langle\bmw,\bmx \rangle|}$ and $X\bmw$.
  Then, we have
  \begin{align}
    \|\bmeta_e\|_2^2
    = \|X \bmw\|_2^2 + \langle X \bmw,\bmv_{|\langle\bmw,\bmx \rangle|} \rangle
    = (1+ \cos \theta)\|X \bmw\|_2^2.\label{eq:numerator-general-large-bias}
  \end{align}
  Hence, we want to provide a lower bound for $\cos \theta$.

  Let $\theta'$ be the angle between $\bmv_{|\langle\bmw,\bmx \rangle|}$ and $\bmv_1$, and let $\theta''$ be the angle between $X\bmw$ and $\bmv_1$.
  From the constraints in~\eqref{eq:eps-covering-general}, we have $\langle \bmv_{|\langle\bmw,\bmx \rangle|}, \bmv_1\rangle \geq \|\bmv_{|\langle\bmw,\bmx \rangle|}\|_2^2 = \|X\bmw\|_2^2 $, which implies that $\cos \theta' \geq \|X\bmw\|_2$.
  On the other hand, as $\bmw \in W_e^+$, we have $\cos \theta'' \geq \max\set{-1/(2\|X\bmw\|_2),-1} = -\min\set{1/(2\|X\bmw\|_2),1}$.

  We note that $\|X\bmw\|_2 \leq \sum_{v \in V}\bmw(v) \|\bmx_v\|_2 \leq 2/100 = 1/50$ by Lemma~\ref{lem:l1-radius-of-base-polytope}.
  Then, we have
  \begin{align*}
    & \cos \theta \geq \cos (\theta'+\theta'') = \cos\theta' \cos\theta'' - \sin\theta' \sin \theta'' \\
    & \geq -\|X\bmw\|_2 \cdot \min\Bigset{\frac{1}{2\|X\bmw\|_2},1} - \sqrt{1 - \|X\bmw\|_2^2} \sqrt{1 - \min\Bigset{\frac{1}{4\|X\bmw\|_2^2},1}}\\
    & = -\min\Bigset{\frac{1}{2},\|X\bmw\|_2} - \sqrt{1 + \min\Bigset{\frac{1}{4},\|X\bmw\|_2^2} - \|X\bmw\|_2^2 - \min\Bigset{\frac{1}{4\|X\bmw\|_2^2},1}} \\
    & = -\frac{1}{50} - \sqrt{1 + \frac{1}{2500} - 0 -1} = -\frac{1}{25}.
  \end{align*}
  Then, we have $\eqref{eq:numerator-general-large-bias} \geq 24/25\cdot \|X\bmw\|_2^2$, and the claim holds.
\end{proof}

Now, we show that the maximum numerator of $\caR_F(\bmz_+)$ and $\caR_F(\bmz_-)$ is roughly at most $O(\log n)$ times the SDP value with a certain probability.
We start with the following:
\begin{lemma}\label{lem:numerator-general-large-bias-rounding}
  We have
  \[
    \E\Bigl[\max\Bigl\{\sum_{e \in E}\max_{\bmw \in W_e^+}\langle \bmw, \bmz_+\rangle^2,\sum_{e \in E}\max_{\bmw \in W_e^+}\langle \bmw, \bmz_-\rangle^2 \Bigr\}\Bigr] = O\Bigl(\frac{\log n}{\epsilon^2}\SDP_\epsilon(f)\Bigr).
  \]
\end{lemma}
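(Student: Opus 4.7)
The approach is to decompose each rounded inner product into a deterministic part and a mean-zero Gaussian part, then bound each separately. By linearity of the rounding, for every $\bmw \in W_e^+$ we have
$$\langle \bmw, \bmz_\pm\rangle = \langle X\bmw, \bmv_1\rangle \pm \delta \langle P_{\bmv_1^\bot}X\bmw, \bmg\rangle.$$

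First, I would push the outer maximum over $\{+,-\}$ inside the summation and the inner maximum using the elementary inequalities on non-negative quantities $\max\{\sum_e x_e, \sum_e y_e\} \le \sum_e \max\{x_e, y_e\}$ and $\max\{\max_\bmw a_\bmw, \max_\bmw b_\bmw\} = \max_\bmw \max\{a_\bmw, b_\bmw\}$, reducing the task to bounding $\sum_e \max_{\bmw \in W_e^+} \max\{\langle \bmw, \bmz_+\rangle^2, \langle \bmw, \bmz_-\rangle^2\}$. For fixed $\bmw$, writing $a = \langle X\bmw, \bmv_1\rangle$ and $b = \langle P_{\bmv_1^\bot}X\bmw, \bmg\rangle$, the inner max equals $(|a|+\delta|b|)^2 \le 2a^2 + 2\delta^2 b^2$. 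Applying $\max_\bmw (f(\bmw)+g(\bmw)) \le \max_\bmw f(\bmw) + \max_\bmw g(\bmw)$ cleanly separates the deterministic contribution from the random one.

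The deterministic sum is immediate: $a^2 \le \|X\bmw\|_2^2 \le 2\|\bmeta_e\|_2^2$ by Lemma~\ref{lem:numerator-general-large-bias-objective}, so it contributes $O(\SDP_\epsilon(f))$. For the random sum, each $b$ is a mean-zero Gaussian with variance $\|P_{\bmv_1^\bot}X\bmw\|_2^2 \le \|X\bmw\|_2^2 \le 2\|\bmeta_e\|_2^2$, so Proposition~\ref{pro:fact-8.6} gives $\E[\max_{\bmw \in W_e^+} b^2] = O(\log|C_e|)\|\bmeta_e\|_2^2$. By Theorem~\ref{the:covering-base-polytope}, $\log|C_e| = O(\log n / \epsilon^2)$, so using $\delta \le 1$ the random contribution sums to $O(\delta^2 \log n / \epsilon^2) \SDP_\epsilon(f) \le O(\log n / \epsilon^2) \SDP_\epsilon(f)$, yielding the claim. (With the specific $\delta^2 = \Theta(1/\log(n^{1/\epsilon^2}m))$ the random part is in fact only $O(\SDP_\epsilon(f))$, but the looser bound suffices.)

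The main conceptual obstacle is choosing the correct decomposition. Had we used the symmetric-case rounding $\bmz(v) = \langle \bmx_v, \bmg\rangle$, then $\langle \bmw, \bmz\rangle$ would be a single Gaussian of variance $\|X\bmw\|_2^2$, and Proposition~\ref{pro:fact-8.6} would multiply the \emph{entire} $\|X\bmw\|_2^2$ by $\log|C_e|$. In the asymmetric setting Lemma~\ref{lem:numerator-general-large-bias-objective} controls $\|X\bmw\|_2^2$ only through the interplay with $\bmv_1$, so this would inflate the ratio uncontrollably when the SDP value comes predominantly from the deterministic bias $\langle X\bmw, \bmv_1\rangle$ rather than from isotropic variance. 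Splitting along $\bmv_1$ with a suppressed orthogonal noise of scale $\delta$ is precisely what lets the deterministic bias be captured pointwise while the Gaussian-max penalty hits only the rescaled orthogonal component.
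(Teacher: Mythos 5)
Your proof is correct and follows essentially the same route as the paper: the paper also passes the max over $\{+,-\}$ inside the sum, views $\langle \bmw,\bmz_\pm\rangle$ as the biased Gaussian $\langle X\bmw,\bmv_1\rangle \pm \delta\langle P_{\bmv_1^\bot}X\bmw,\bmg\rangle$, bounds the maximum via a Gaussian-max fact with the factor $\log(2|C_e|)=O(\log n/\epsilon^2)$, and finishes with $\langle X\bmw,\bmv_1\rangle^2+\delta^2\|P_{\bmv_1^\bot}X\bmw\|_2^2\le\|X\bmw\|_2^2\le 2\|\bmeta_e\|_2^2$ from Lemma~\ref{lem:numerator-general-large-bias-objective}. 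The only (harmless) difference is that you split the square into a deterministic part and a mean-zero part before invoking the unbiased bound of Proposition~\ref{pro:fact-8.6}, whereas the paper applies the biased variant (Proposition~\ref{pro:fact-8.6-biased}) to the shifted Gaussians directly; your version even makes explicit that the $\log$ factor only hits the $\delta^2$-scaled noise.
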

\begin{proof}
  We have
  \begin{align*}
    & \E\Bigl[\max\Bigl\{ \sum_{e \in E}\max_{\bmw \in W_e^+}\langle \bmw, \bmz_+\rangle^2,\sum_{e \in E}\max_{\bmw \in W_e^+}\langle \bmw, \bmz_-\rangle^2\Bigr\}\Bigr]
    \leq \E\Bigl[\sum_{e \in E}\max_{\bmw \in W_e^+}\max\Bigl\{\langle \bmw, \bmz_+\rangle^2,\langle \bmw, \bmz_-\rangle^2\Bigr\}\Bigr] \\
    & =
    \E\Bigl[\sum_{e \in E}\max_{\bmw \in W_e^+}\max_{\sigma \in \set{-1,1}}  {\Bigl(\sum_{v \in V}\bmw(v)\Bigl(\langle \bmx_v,\bmv_1\rangle + \sigma \delta \langle P_{\bmv_1^\bot}\bmx_v, \bmg\rangle \Bigr)\Bigr)}^2\Bigr] \\
    & =
    \E\Bigl[\sum_{e \in E}\max_{\bmw \in W_e^+}\max_{\sigma \in \set{-1,1}} {\Bigl(\langle X\bmw,\bmv_1\rangle + \sigma \delta \langle P_{\bmv_1^\bot}X\bmw , \bmg\rangle \Bigr)}^2 \Bigr] \\
    & \leq 4 \max_{e \in E}\log(2|C_e|) \sum_{e \in E} \max_{\bmw \in W_e^+} \Bigl\{\langle X\bmw,\bmv_1\rangle^2 + \delta^2 \langle P_{\bmv_1^\bot}X\bmw , \bmg\rangle^2\Bigr\} \tag{By Proposition~\ref{pro:fact-8.6-biased}}\\
    & \leq 4 \max_{e \in E}\log(2|C_e|) \sum_{e \in E} \max_{\bmw \in W_e^+} \| X\bmw\|_2^2 \\
    & \leq 16 \max_{e \in E}\log (2|C_e|) \sum_{e \in E} \|\bmeta_e\|_2^2 \tag{By Lemma~\ref{lem:numerator-general-large-bias-objective}}\\
    & = O\Bigl(\frac{\log n}{\epsilon^2}\SDP_\epsilon(f)\Bigr) . \qedhere
  \end{align*}

\end{proof}

\begin{lemma}\label{lem:numerator-general}
  Let $\bmz_+,\bmz_- \in \bbR^V$ be the vectors obtained in Algorithm~\ref{alg:general-eigenvalue} on a submodular transformation $F\colon\set{0,1}^V \to \bbR^E$ (and some $\epsilon > 0$).
  Then, we have
  \[
    \max\Bigl\{\sum_{e \in E}{f_e(\bmz_+)}^2,\sum_{e \in E}{f_e(\bmz_-)}^2 \Bigr\} = O\Bigl(\frac{\log n}{\epsilon^2}\SDP_\epsilon(f) + \epsilon \max_{e \in E}\|B(F_e)\|_H^2 \Bigr).
  \]
  with a probability of at least $1/100$.
\end{lemma}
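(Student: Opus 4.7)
The plan is to combine the ingredients already established (Proposition~\ref{pro:submodular-squared}, Lemma~\ref{lem:error-by-eps-covering-general}, Lemmas~\ref{lem:numerator-general-small-bias} and~\ref{lem:numerator-general-large-bias-rounding}, plus the denominator bound of Lemma~\ref{lem:denominator-general}) by a straightforward union-bound argument. First, I would rewrite each ${f_e(\bmz)}^2$ using Proposition~\ref{pro:submodular-squared} as $\max_{\bmw \in B(F_e)}\max\set{\langle \bmw,\bmz\rangle,0}^2$, and then invoke Lemma~\ref{lem:error-by-eps-covering-general} with the $\epsilon\|B(F_e)\|_H$-cover $C_e$ furnished by Theorem~\ref{the:covering-base-polytope}. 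Substituting $\epsilon\|B(F_e)\|_H$ for the cover parameter converts the additive error per edge into $O(\epsilon)\|B(F_e)\|_H^2 \cdot \|\bmz|_{\supp(F_e)}\|_2^2$, so that summing over $e \in E$ and using $\sum_{e \in E}\|\bmz|_{\supp(F_e)}\|_2^2 = \sum_{v \in V}\bmd_F(v){\bmz(v)}^2$ yields a total additive term of at most $O(\epsilon)\max_{e\in E}\|B(F_e)\|_H^2 \cdot \sum_{v \in V}\bmd_F(v){\bmz(v)}^2$.

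Next, I would split each $C_e$ as $W_e^+ \cup W_e^-$ so that
\[
  \max_{\bmw \in C_e}\max\set{\langle \bmw,\bmz\rangle,0}^2 \leq \max_{\bmw \in W_e^+}\langle \bmw,\bmz\rangle^2 + \max_{\bmw \in W_e^-}\max\set{\langle \bmw,\bmz\rangle,0}^2.
\]
Lemma~\ref{lem:numerator-general-small-bias} guarantees, with probability at least $99/100$, that $\langle\bmw,\bmz_+\rangle$ and $\langle\bmw,\bmz_-\rangle$ are both non-positive for every $\bmw \in W^-$, so that the $W_e^-$ contribution vanishes for both rounded vectors simultaneously. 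For the $W_e^+$ portion, Lemma~\ref{lem:numerator-general-large-bias-rounding} bounds the expectation of the maximum of $\sum_e \max_{\bmw \in W_e^+}\langle \bmw,\bmz_+\rangle^2$ and $\sum_e \max_{\bmw \in W_e^+}\langle \bmw,\bmz_-\rangle^2$ by $O\bigl((\log n)/\epsilon^2\bigr) \cdot \SDP_\epsilon(f)$; a single application of Markov's inequality (with a suitable constant, say $100$) converts this into a high-probability statement.

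Finally, Lemma~\ref{lem:denominator-general} provides, with probability at least $1/50$, the uniform upper bound $\sum_{v \in V}\bmd_F(v){\bmz_\sigma(v)}^2 \leq 24+10\delta = O(1)$ for both $\sigma \in \set{+,-}$, which controls the additive error term of the cover approximation by $O(\epsilon)\max_{e \in E}\|B(F_e)\|_H^2$. A union bound over the three events (Lemma~\ref{lem:numerator-general-small-bias}, Markov applied to Lemma~\ref{lem:numerator-general-large-bias-rounding}, and Lemma~\ref{lem:denominator-general}) leaves success probability at least $1-1/100-1/100-49/50$, which after rescaling the Markov constant comfortably exceeds $1/100$, yielding the claim. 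I do not anticipate a serious obstacle here, since all the nontrivial geometric and probabilistic work has been shouldered by the preceding lemmas; the only detail requiring care is making the per-edge additive errors factor cleanly through the sum so that the volume-type quantity $\sum_v \bmd_F(v)\bmz(v)^2$ appears and can be bounded by Lemma~\ref{lem:denominator-general}.
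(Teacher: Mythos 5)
Your proposal follows essentially the same route as the paper's proof: rewrite $f_e^2$ via Proposition~\ref{pro:submodular-squared}, pass to the cover $C_e$ via Lemma~\ref{lem:error-by-eps-covering-general} and absorb the additive error into $\sum_v \bmd_F(v)\bmz(v)^2$ controlled by Lemma~\ref{lem:denominator-general}, split $C_e$ into $W_e^+\cup W_e^-$ handled by Lemmas~\ref{lem:numerator-general-small-bias} and~\ref{lem:numerator-general-large-bias-rounding} (the latter via Markov), and union bound. The probability bookkeeping is as loose as the paper's own, and your remark about adjusting the Markov constant is the right fix, so the argument is correct.
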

\begin{proof}
  We only show the bound for $\bmz_+$ as the analysis for $\bmz_-$ is the same.
  We have
  \begin{align*}
    & \sum_{e \in E}{f_e(\bmz_+)}^2
    = \sum_{e \in E} \max_{\bmw \in B(F_e)}{\max\bigl\{\langle \bmw, \bmz_+ \rangle,0\bigr\}}^2 \\
    & = \sum_{e \in E}\max_{\bmw \in C_e}{\max\bigl\{\langle \bmw, \bmz_+ \rangle,0\bigr\}}^2 + O(\epsilon \sum_{e \in E} \|\bmz_+|_{\supp(f_e)}\|_2^2 \cdot \|B(F_e)\|_H^2) \tag{By Lemma~\ref{lem:error-by-eps-covering-general}}\\
    & = \sum_{e \in E}\max_{\bmw \in C_e}{\max\bigl\{\langle \bmw, \bmz_+ \rangle,0\bigr\}}^2 + O(\epsilon \max_{e \in E}\|B(F_e)\|_H^2 \cdot \sum_{v \in V}\bmd_F(v){\bmz_+(v)}^2).
  \end{align*}
  The second term is bounded by $O(\epsilon)$ with a probability of at least $1/50$ by Lemma~\ref{lem:denominator-general}.

  Now, we analyze the first term.
  Note that
  \begin{align*}
    & \sum_{e \in E}\max_{\bmw \in C_e}{\max\bigl\{\langle \bmw, \bmz_+ \rangle,0\bigr\}}^2\\
    & \leq
    \sum_{e \in E}\max_{\bmw \in W_e^+}{\max\bigl\{\langle \bmw, \bmz_+ \rangle,0\bigr\}}^2
    +
    \sum_{e \in E}\max_{\bmw \in W_e^-}{\max\bigl\{\langle \bmw, \bmz_+ \rangle,0\bigr\}}^2 \\
    & \leq
    \sum_{e \in E}\max_{\bmw \in W_e^+}\max\langle \bmw, \bmz_+ \rangle^2
    +
    \sum_{e \in E}\max_{\bmw \in W_e^-}{\max\bigl\{\langle \bmw, \bmz_+ \rangle,0\bigr\}}^2 \\
    & = O\Bigl(\frac{\log n}{\epsilon^2}\SDP_\epsilon(f)\Bigr) \tag{By Lemmas~\ref{lem:numerator-general-small-bias} and~\ref{lem:numerator-general-large-bias-rounding}}
  \end{align*}
  with a probability of at least $99/100$.

  By the union bound, we have the claim.
\end{proof}

\subsubsection{Consolidation of results}

\begin{proof}[Proof of Theorem~\ref{the:general-eigenvalue}]
  Let $\bmz_+,\bmz_- \in \bbR^V$ be the output of Algorithm~\ref{alg:symmetric-eigenvalue} on $f$ and $\epsilon > 0$.
  As with the proof of Theorem~\ref{the:symmetric-eigenvalue}, we can show that both $\bmz_+$ and $\bmz_-$ are feasible.
  By considering the one with the larger denominator in the Rayleigh quotient, we have the desired approximation guarantee by combining Lemmas~\ref{lem:denominator-general} and~\ref{lem:numerator-general}.
  The total time complexity is dominated by the time complexity for solving SDP~\eqref{eq:eps-covering-general}, which is ${\poly(nm)}^{\poly(1/\epsilon)}$.

  Note that we can augment the success probability to $9/10$ by running this algorithm a constant number of times and by outputting the vector with the minimum Rayleigh quotient.

  When the number of extreme points of the base polytope of each $F_e$ is bounded by $N$, we can directly solve the optimization problem obtained from~\eqref{eq:relaxed-general} by replacing each $B(F_e)$ with the set of its extreme points in $\poly(nmN)$ time.
  We can choose $\delta = O(1/\sqrt{\log(mN)})$ to make Lemma~\ref{lem:numerator-general-small-bias} goes through, and the bound claimed in Lemma~\ref{lem:numerator-general} becomes $O(\log N \cdot \SDP(f))$ because we do not need the second term and the number of points to be considered in the proof becomes $N$ instead of $|C_e|$.
  Hence, the approximation ratio is now $O(\log N \cdot \log (mN))  = O(\log^2 N + \log m \log N)$.
\end{proof}


\section{Non-trivial Eigenvalues of Submodular Laplacians}\label{sec:eigen}

In this section, we prove Theorem~\ref{the:eigenvalues}.
We omit the proof of Theorem~\ref{the:normalized-eigenvalues} as it is obtained by replacing $L_F$ and $R_F$ by $\caL_F$ and $\caL_F$ in the proof of Theorem~\ref{the:eigenvalues}.



We study the non-trivial eigenpairs of a submodular Laplacian by considering a diffusion process defined as follows:
\begin{definition}\label{def:diffusion}
  Let $F:\set{0,1}^V \to \bbR^E$ be a submodular transformation with $F(V) = \bmzero$.
  We stipulate that the time evolution of $\bmx \in \bbR^V$ obeys the following equation:
  \begin{align}
    \frac{\dx}{\dt} \in -L_F(\bmx). \label{eq:heat-equation}
  \end{align}
  The initial condition is given by an arbitrary vector $\bmx_0 \in \bbR^V$.
  Let $\bmx_t$ denote $\bmx$ at time $t \in \bbR_+$.
\end{definition}
The process with the Laplacian of an undirected graph is referred to as the \emph{heat equation} in the literature~\cite{Chung:2007ep,Kloster:2014wq}.
We can show that this process has a (unique) solution using the theory of diffusion inclusion~\cite{Aubin:2012wi} or the theory of monotone operators and evolution equations~\cite{miyadera1992nonlinear}.
See~\cite{Ikeda:2018tu} for more details.

Next, we show that the Laplacian $L_F$ of a submodular transformation $F:\set{0,1}^V \to \bbR^E$ with $F(V) = \bmzero$ has a non-trivial eigenpair and each non-trivial eigenpair $(\gamma,\bmz)$ satisfies $\gamma = R_F(\bmz)$.

Our strategy is to observe the value of the Rayleigh quotient in the diffusion process~\eqref{eq:heat-equation}.
Note that, as we choose one vector from $L_F(\bmx)$ at each time $t$, we can represent $L_F$ at time $t$ as a matrix and denote it by $L_t \in \bbR^{V \times V}$.
In this section, the norm $\|\cdot\|$ always represents the $\ell_2$-norm.
\begin{lemma}\label{lem:dx^2/dt}
  We have
  \[
    \frac{\mathrm{d}\|\bmx\|^2}{\dt}  = -2R_F(\bmx)\|\bmx\|^2.
  \]
\end{lemma}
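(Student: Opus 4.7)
The plan is to compute $\tfrac{\mathrm{d}\|\bmx\|^2}{\dt}$ directly from the chain rule. Writing $\|\bmx\|^2 = \langle \bmx,\bmx\rangle$, I will differentiate to obtain
\[
  \frac{\mathrm{d}\|\bmx\|^2}{\dt} = 2\Bigl\langle \bmx, \frac{\dx}{\dt}\Bigr\rangle.
\]
Along the diffusion process, $\frac{\dx}{\dt}$ is some chosen element of $-L_F(\bmx)$. I would then invoke the observation made just after Definition~\ref{def:submodular-Laplacian}: for every $\bmy \in L_F(\bmx)$ the pairing $\langle \bmx,\bmy\rangle$ takes the common value $\sum_{e\in E}{f_e(\bmx)}^2$, so the right-hand side is unambiguously $-2\langle \bmx, L_F(\bmx)\rangle$ regardless of the particular selection in $L_F(\bmx)$ used to drive the flow.

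Next I would plug in the definition of the Rayleigh quotient,
\[
  R_F(\bmx) = \frac{\langle \bmx,L_F(\bmx)\rangle}{\|\bmx\|^2},
\]
so that $\langle \bmx,L_F(\bmx)\rangle = R_F(\bmx)\|\bmx\|^2$ (provided $\bmx\neq \bmzero$; the case $\bmx = \bmzero$ is trivial since then $L_F(\bmx) = \{\bmzero\}$ and both sides vanish by a continuity/convention argument). Combining these two steps gives
\[
  \frac{\mathrm{d}\|\bmx\|^2}{\dt} = -2\,R_F(\bmx)\|\bmx\|^2,
\]
as claimed.

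The only real subtlety is justifying the chain rule, because $L_F$ is set-valued and $\bmx_t$ is only guaranteed to be an absolutely continuous solution of the differential inclusion (as discussed in the text around Definition~\ref{def:diffusion} via the theory of monotone operators). I would handle this by recalling that absolutely continuous curves are differentiable almost everywhere and that $t\mapsto\|\bmx_t\|^2$ is then absolutely continuous with derivative equal to $2\langle \bmx_t,\dot{\bmx}_t\rangle$ at every differentiability point; since the value of $\langle \bmx_t,\bmv\rangle$ is independent of the choice $\bmv\in L_F(\bmx_t)$, the computation above is valid a.e.\ in $t$, which is what the statement requires. This invariance of $\langle \bmx,\cdot\rangle$ over $L_F(\bmx)$ is the single key fact that lets the proof bypass the non-smoothness of $L_F$.
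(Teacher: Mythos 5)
Your proof is correct and follows essentially the same route as the paper's: chain rule, substitution of $\frac{\dx}{\dt}\in -L_F(\bmx)$, the invariance of $\langle \bmx,\cdot\rangle$ over $L_F(\bmx)$, and the definition of $R_F$. The extra care you take about almost-everywhere differentiability of the absolutely continuous solution is a welcome refinement of a point the paper leaves implicit, but it does not change the argument.
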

\begin{proof}
  \[
    \frac{\mathrm{d}\|\bmx\|^2}{\dt} = 2\Bigl\langle \bmx, \frac{\dx}{\dt}\Bigr\rangle = -2 \langle \bmx, L_t \bmx\rangle = -2R_F(\bmx)\|\bmx\|^2. \qedhere
  \]
\end{proof}

We define $\overline{\bmx} = \bmx/\|\bmx\|$.
Then, we have the following.
\begin{lemma}\label{lem:dR/dt}
  We have
  \[
    \frac{\mathrm{d}\overline{\bmx}}{\dt} = R_F(\overline{\bmx}) \overline{\bmx} -L_t \overline{\bmx}
    \quad \text{and} \quad
    \frac{\mathrm{d}R_F(\bmx)}{\dt} = 2 ({R_F(\overline{\bmx})}^2 -\|L_t \overline{\bmx}\|^2).
  \]
\end{lemma}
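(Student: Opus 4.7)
The plan is to derive both identities by direct differentiation, using the chain rule together with the preceding Lemma~\ref{lem:dx^2/dt}. Throughout, I restrict attention to times $t$ at which no coordinate of $\bmx$ crosses another (so the orderings determining the $\partial f_e(\bmx)$'s are locally constant) and $\bmx \neq \bmzero$; on this full-measure set $L_t$ is a fixed symmetric matrix of the form $\sum_e \bmw_e \bmw_e^\top$ with $\bmw_e \in \partial f_e(\bmx)$, and everything in sight is classically differentiable.

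A preliminary observation is that $R_F$ is invariant under positive scaling: since $\overline{\bmx}$ has the same coordinate ordering as $\bmx$, we have $\partial f_e(\overline{\bmx}) = \partial f_e(\bmx)$ for each $e$, hence $L_F(\overline{\bmx}) = L_F(\bmx)/\|\bmx\|$, so
\[
  R_F(\overline{\bmx}) = \langle \overline{\bmx}, L_F(\overline{\bmx})\rangle = \frac{\langle \bmx, L_F(\bmx)\rangle}{\|\bmx\|^2} = R_F(\bmx).
\]
For the first identity, differentiating $\overline{\bmx} = \bmx/\|\bmx\|$ gives
\[
  \frac{\mathrm{d}\overline{\bmx}}{\dt} = \frac{1}{\|\bmx\|}\frac{\dx}{\dt} - \frac{\bmx}{\|\bmx\|^2}\frac{\mathrm{d}\|\bmx\|}{\dt}.
\]
By Lemma~\ref{lem:dx^2/dt}, $2\|\bmx\|\,\mathrm{d}\|\bmx\|/\dt = \mathrm{d}\|\bmx\|^2/\dt = -2R_F(\bmx)\|\bmx\|^2$, so $\mathrm{d}\|\bmx\|/\dt = -R_F(\bmx)\|\bmx\|$. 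Combined with $\mathrm{d}\bmx/\dt = -L_t\bmx$ and scale-invariance of $R_F$, this yields $\mathrm{d}\overline{\bmx}/\dt = -L_t\overline{\bmx} + R_F(\overline{\bmx})\overline{\bmx}$.

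For the second identity, I would use $R_F(\bmx) = R_F(\overline{\bmx}) = \langle \overline{\bmx}, L_t\overline{\bmx}\rangle$ and differentiate, treating $L_t$ as locally constant. Since $L_t$ is symmetric,
\[
  \frac{\mathrm{d}R_F(\bmx)}{\dt} = 2\Bigl\langle L_t\overline{\bmx},\, \frac{\mathrm{d}\overline{\bmx}}{\dt}\Bigr\rangle = 2\bigl\langle L_t\overline{\bmx},\, R_F(\overline{\bmx})\overline{\bmx} - L_t\overline{\bmx}\bigr\rangle,
\]
and expanding the inner product (using once more that $\langle \overline{\bmx}, L_t\overline{\bmx}\rangle = R_F(\overline{\bmx})$) produces the claimed formula $2(R_F(\overline{\bmx})^2 - \|L_t\overline{\bmx}\|^2)$.

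The only genuine subtlety is justifying the pointwise differentiation in $t$: because $L_t$ is only a measurable selection from the set-valued map $L_F$, the derivatives above exist at almost every $t$ rather than everywhere. This is the same mild technicality already needed to make sense of the inclusion~\eqref{eq:heat-equation}, and it is resolved by the diffusion-inclusion/monotone-operator framework cited in the text, so restricting to regular $t$ suffices for the subsequent analysis of eigenpairs.
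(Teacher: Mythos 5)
Your proof is correct and follows essentially the same route as the paper's: the quotient rule combined with Lemma~\ref{lem:dx^2/dt} for the first identity, and the chain rule applied to the quadratic form $\langle \overline{\bmx}, L_t\overline{\bmx}\rangle$ (with $L_t$ locally constant and symmetric) for the second. You are merely more explicit than the paper about the scale-invariance of $R_F$, the symmetry of $L_t$, and the almost-everywhere nature of the differentiation, all of which the paper leaves implicit.
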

\begin{proof}
  From Lemma~\ref{lem:dx^2/dt},
  we have $\frac{\mathrm{d}\|\bmx\|}{\dt} = -R_F(\bmx)\|\bmx\|$.
  Then, we have
  \begin{align*}
    \frac{\mathrm{d}\overline{\bmx}}{\dt}
    & =
    \frac{ \|\bmx\| \frac{\dx}{\dt} - \bmx \frac{\mathrm{d}\|\bmx\|}{\dt}}{ \|\bmx\|^2}
    =
    \frac{ \bmx R_F(\bmx)\|\bmx\|-L_t \bmx \|\bmx\|}{\|\bmx\|^2} \\
    & =
    \frac{R_F(\bmx)\bmx }{\|\bmx\|} - \frac{L_t \bmx}{\|\bmx\|}
    =
    R_F(\overline{\bmx}) \overline{\bmx} -L_t \overline{\bmx}.
  \end{align*}
  Let $Q_F(\bmx) = \langle \bmx, L_F(\bmx)\rangle$ and $Q_t(\bmx) = \langle \bmx, L_t \bmx\rangle$.
  Note that $Q_F(\bmx)$ does not depend on the choice of $W \in \prod_{e \in E}\partial f_e(\bmx)$ used in Definition~\ref{def:submodular-Laplacian}.
  Hence, we have
  \begin{align*}
    & \frac{\mathrm{d}R_F(\bmx)}{\dt}
    = \frac{\mathrm{d}Q_F(\overline{\bmx})}{\dt}
    = \frac{\mathrm{d}Q_t(\overline{\bmx})}{\dt}
    = \Bigl\langle  \frac{\mathrm{d}Q_t(\overline{\bmx})}{\mathrm{d}\overline{\bmx}}, \frac{\mathrm{d}\overline{\bmx}}{\dt} \Bigr\rangle\\
    & =
    2 \bigl\langle L_t \overline{\bmx},  R_F(\overline{\bmx})\overline{\bmx}-L_t \overline{\bmx}  \bigr \rangle
    =
    2 ({R_F(\overline{\bmx})}^2 - \|L_t \overline{\bmx}\|^2). \qedhere
  \end{align*}
\end{proof}
\begin{corollary}\label{cor:non-increasing}
  $R_F(\bmx)$ is non-increasing in $t$.
\end{corollary}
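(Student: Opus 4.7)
The plan is to derive the claim directly from the formula for $\frac{\mathrm{d}R_F(\bmx)}{\dt}$ established in Lemma~\ref{lem:dR/dt}, combined with the Cauchy--Schwarz inequality. Concretely, Lemma~\ref{lem:dR/dt} gives
\[
  \frac{\mathrm{d}R_F(\bmx)}{\dt} = 2\bigl({R_F(\overline{\bmx})}^2 - \|L_t \overline{\bmx}\|^2\bigr),
\]
so it suffices to show ${R_F(\overline{\bmx})}^2 \leq \|L_t \overline{\bmx}\|^2$.

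Since $\|\overline{\bmx}\| = 1$, the Rayleigh quotient simplifies to $R_F(\overline{\bmx}) = \langle \overline{\bmx}, L_t \overline{\bmx}\rangle$. Applying Cauchy--Schwarz,
\[
  \bigl|\langle \overline{\bmx}, L_t \overline{\bmx}\rangle\bigr| \leq \|\overline{\bmx}\| \cdot \|L_t \overline{\bmx}\| = \|L_t \overline{\bmx}\|,
\]
whence squaring yields ${R_F(\overline{\bmx})}^2 \leq \|L_t \overline{\bmx}\|^2$. Substituting back gives $\frac{\mathrm{d}R_F(\bmx)}{\dt} \leq 0$, i.e., $R_F(\bmx)$ is non-increasing in $t$, as desired.

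There is essentially no obstacle here: the heavy lifting is done by Lemma~\ref{lem:dR/dt}, which already packages the derivative of the Rayleigh quotient along the diffusion flow into the convenient form $2({R_F(\overline{\bmx})}^2 - \|L_t\overline{\bmx}\|^2)$. The only subtlety is noting that $R_F(\bmx) = R_F(\overline{\bmx})$ by scale-invariance of the Rayleigh quotient, and that squaring preserves the Cauchy--Schwarz inequality (which is automatic since $R_F \geq 0$ by positive-semidefiniteness of $L_F$, Lemma~\ref{lem:positive-semidefiniteness}, though we do not even need non-negativity because we square both sides). Thus the corollary is an immediate consequence of Lemma~\ref{lem:dR/dt} and Cauchy--Schwarz, and the proof is a two-line derivation rather than a substantial argument.
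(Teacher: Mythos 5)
Your proof is correct and is essentially the paper's own argument: the paper likewise observes $\|L_t\overline{\bmx}\| \geq \langle \overline{\bmx}, L_t\overline{\bmx}\rangle = R_F(\overline{\bmx})$ because $\overline{\bmx}$ is a unit vector (i.e., Cauchy--Schwarz) and concludes $\frac{\mathrm{d}R_F(\bmx)}{\dt} \leq 0$ from Lemma~\ref{lem:dR/dt}. The only cosmetic difference is that the paper adds a one-line remark that $R_F$ is continuous in $t$ (so that the sign of the derivative on the pieces of the piecewise-linear flow indeed yields monotonicity), which your write-up could mention but which does not affect correctness.
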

\begin{proof}
  Note that
    $\| L_t \overline{\bmx}\|
        \geq \langle \overline{\bmx}, L_t\overline{\bmx}\rangle
        =R_F(\overline{\bmx})$.
  The inequality holds because $\overline{\bmx}$ is a unit vector. From Lemma~\ref{lem:dR/dt}, we have $\frac{\mathrm{d}R_F(\bmx)}{\dt} \leq 0$.
  Since $R_F$ is a continuous function of $t$, we have the desired result.
\end{proof}

\begin{theorem}\label{the:converge-to-eigenvalue}
  Suppose that we initiate a simulation of the diffusion process~\eqref{eq:heat-equation} with a non-zero vector $\bmx_0 \bot \bmone$.
  Then, as $t \to \infty$, $\bmx$ and $R_F(\bmx)$ converge to some $\bmz \in \bbR^V$ and $\gamma \in \bbR_+$, respectively, such that
  \[
    \bmz \bot \bmone, \quad
    L_F(\bmz) \ni \gamma \bmz, \quad \text{and} \quad \gamma = R_F(\bmx).
  \]
\end{theorem}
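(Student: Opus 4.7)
My plan is to treat the Rayleigh quotient as a Lyapunov-like quantity along the diffusion, and to extract a non-trivial eigenvector via a compactness argument on the unit sphere in $\bmone^\perp$.

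First, I would verify that the trajectory stays in $\bmone^\perp$ and remains nonzero for all $t \geq 0$. Any $\bmy \in L_F(\bmx)$ has the form $\sum_{e\in E} \bmw_e f_e(\bmx)$ with $\bmw_e \in \partial f_e(\bmx) \subseteq B(F_e)$; since $F_e(V)=0$, every $\bmw_e$ satisfies $\langle \bmw_e,\bmone\rangle = 0$, so $\langle \bmy,\bmone\rangle = 0$. Hence $\dx/\dt \in -L_F(\bmx) \subseteq \bmone^\perp$, and $\bmx_0 \bot \bmone$ propagates to $\bmx_t \bot \bmone$ for every $t \geq 0$. Combining Lemma~\ref{lem:dx^2/dt} with Corollary~\ref{cor:non-increasing} gives $\|\bmx_t\|^2 \geq \|\bmx_0\|^2 e^{-2R_F(\bmx_0)t} > 0$, so the normalization $\overline{\bmx_t} := \bmx_t/\|\bmx_t\|$ is well-defined on the unit sphere of $\bmone^\perp$ for all time.

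Next, I would analyze the long-time behavior of the Rayleigh quotient. Corollary~\ref{cor:non-increasing} and Lemma~\ref{lem:positive-semidefiniteness} ensure $R_F(\bmx_t)$ is non-increasing and non-negative, so it converges to some $\gamma \geq 0$. Integrating the identity of Lemma~\ref{lem:dR/dt} yields
\[
  \int_0^\infty 2\bigl(\|L_t \overline{\bmx_t}\|^2 - R_F(\bmx_t)^2\bigr)\dt = R_F(\bmx_0) - \gamma < \infty,
\]
where the integrand is non-negative by Cauchy--Schwarz, $\|L_t \overline{\bmx_t}\| \geq \langle \overline{\bmx_t}, L_t \overline{\bmx_t}\rangle = R_F(\bmx_t)$. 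This forces a sequence $t_k \to \infty$ along which the integrand tends to $0$; passing to a further subsequence by compactness of the unit sphere, I may assume $\overline{\bmx_{t_k}} \to \bmz$ for some unit $\bmz \in \bmone^\perp$. The asymptotic equality case of Cauchy--Schwarz then implies $L_{t_k} \overline{\bmx_{t_k}} - R_F(\bmx_{t_k}) \overline{\bmx_{t_k}} \to \bmzero$, so $L_{t_k} \overline{\bmx_{t_k}} \to \gamma\bmz$, and continuity of $R_F$ gives $R_F(\bmz) = \gamma$.

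The final and most delicate step is to conclude $L_F(\bmz) \ni \gamma\bmz$ from $L_{t_k}\overline{\bmx_{t_k}} \in L_F(\overline{\bmx_{t_k}})$ and $L_{t_k}\overline{\bmx_{t_k}} \to \gamma\bmz$. I would invoke the closed-graph property of the set-valued map $L_F$: using compactness of each $B(F_e)$, extract convergent subsequences $\bmw_e^{(k)} \to \bmw_e^\infty$ from $\partial f_e(\overline{\bmx_{t_k}})$, and use Lemma~\ref{lem:extreme-point} to check that any coordinate ordering that generates $\bmw_e^{(k)}$ is compatible, in the limit, with some ordering refining the (possibly partial) one induced by $\bmz$, so that $\bmw_e^\infty \in \partial f_e(\bmz)$. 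Continuity of $f_e$ then gives $\gamma\bmz = \sum_e \bmw_e^\infty f_e(\bmz) \in L_F(\bmz)$, and the identity $\gamma = R_F(\bmz)$ reads off from $\langle \bmz, L_F(\bmz)\rangle = \sum_e f_e(\bmz)^2$. The main obstacle is exactly this upper hemi-continuity step: at points where coordinates of $\bmz$ coincide the subdifferentials $\partial f_e$ properly inflate, and care is required to verify that every limit of extremal selections of $\partial f_e(\overline{\bmx_{t_k}})$ lies in the (larger) polytope $\partial f_e(\bmz)$.
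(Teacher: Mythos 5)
Your proposal follows the same overall strategy as the paper (monotone Rayleigh quotient along the diffusion, Lemma~\ref{lem:dR/dt}, extraction of an eigenvector in the limit), but your execution of the limiting argument differs in two substantive ways, both to your credit. First, where the paper asserts the full limit $\lim_{t\to\infty}\|L_t\overline{\bmx}\|^2=\gamma^2$ directly from Lemma~\ref{lem:dR/dt} (a step that is not immediate, since a convergent monotone function need not have derivative tending to zero), you integrate the identity to get $\int_0^\infty 2\bigl(\|L_t\overline{\bmx}\|^2-R_F(\bmx_t)^2\bigr)\dt<\infty$ and extract a sequence $t_k\to\infty$ along which the non-negative integrand vanishes; combined with $\|L_t\overline{\bmx}-R_F(\overline{\bmx})\overline{\bmx}\|^2=\|L_t\overline{\bmx}\|^2-R_F(\overline{\bmx})^2$ this is airtight. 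Second, you explicitly supply the upper hemi-continuity of $\bmx\mapsto L_F(\bmx)$ needed to pass from $L_{t_k}\overline{\bmx_{t_k}}\to\gamma\bmz$ to $\gamma\bmz\in L_F(\bmz)$, a step the paper elides with ``which is an eigenvector of $L_F$''; your ordering-based verification works, though the shorter route is to note that $\bmw^{(k)}\in\partial f_e(\overline{\bmx_{t_k}})$ means $\langle\bmw^{(k)},\overline{\bmx_{t_k}}\rangle\ge\langle\bmw,\overline{\bmx_{t_k}}\rangle$ for all $\bmw\in B(F_e)$, and this inequality survives the limit by compactness of $B(F_e)$. The one respect in which your argument delivers less than the theorem literally states is that you obtain only subsequential convergence of $\overline{\bmx_t}$ to $\bmz$, not convergence of the whole trajectory; the paper's own justification of full convergence (``as $\overline{\bmx}$ is bounded, $\overline{\bmx}$ converges'') is itself not rigorous, and subsequential convergence already yields the non-trivial eigenpair with $\gamma=R_F(\bmz)$ and $\bmz\bot\bmone$, which is all that Theorem~\ref{the:eigenvalues} requires, so this weakening is harmless for the paper's purposes but worth flagging.
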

\begin{proof}
  Note that $R_F(\bmx)$ is bounded from below by 0 from Lemma~\ref{lem:positive-semidefiniteness}.
  Since $R_F(\bmx)$ is non-increasing from Corollary~\ref{cor:non-increasing}, $R_F(\bmx)$ converges to some non-negative value as $t \to \infty$.

  Let $\gamma \in \bbR_+$ be the limit.
  We have $\lim\limits_{t \to \infty}\|L_t \overline{\bmx}\|^2 = \gamma^2$ by Lemma~\ref{lem:dR/dt}.
  It follows that $\lim\limits_{t \to \infty} \bigl\langle L_t \overline{\bmx} - \gamma\overline{\bmx}, L_t \overline{\bmx}\bigr\rangle = 0$. Since $\lim\limits_{t \to \infty} \| L_t \overline{\bmx}\| = \gamma$ and $\|\overline{\bmx}\| = 1$, we must have $\lim\limits_{t \to \infty}L_t \overline{\bmx} - \gamma\overline{\bmx} = \bmzero$ or $\lim\limits_{t \to \infty}L_t \overline{\bmx} = \bmzero$.
  However, the latter implies that $\gamma = 0$.
  Hence, we have $\lim\limits_{t \to \infty}L_t \overline{\bmx} - \gamma\overline{\bmx} = \bmzero$ in both cases.
  In particular, this means $\lim\limits_{t \to \infty}\frac{\mathrm{d}\overline{\bmx}}{\dt} \to \bmzero$ by Lemma~\ref{lem:dR/dt}.
  As $\overline{\bmx}$ is bounded, $\overline{\bmx}$ converges to a vector $\bmz$, which is an eigenvector of $L_F$ with the eigenvalue $\gamma = R_F(\bmz)$.

  It is clear that $\bmz \bot \bmone$ because we always have $\dx \bot \bmone$ when we start the diffusion process with a vector $\bmx_0 \bot \bmone$.
\end{proof}
Theorem~\ref{the:converge-to-eigenvalue} immediately implies Theorem~\ref{the:eigenvalues}.


\section*{Acknowledgments}
The authors would like to thank Tasuku Soma for many useful discussions.

\bibliographystyle{habbrv}
\bibliography{main}


\appendix

\section{Facts on Normal Distributions}

We review several facts on normal distributions.
\begin{proposition}[Fact~6 of~\cite{Chan:2018eu}]\label{pro:fact-8.6}
  Suppose $X_1,X_2,\ldots,X_n$ are normal random variables that are not necessarily independent such that $\E[X_i] = 0\;(i\in [n])$ and $\E[X_i^2]  = \sigma_i^2\;(i \in [n])$.
  Then, we have $\E[\max\limits_{i\in[n]} X_i^2] \leq 4\sigma^2 \log n$ and $\E[\max\limits_{i\in[n]} X_i^2] \leq 2\sigma \sqrt{\log n}$, where $\sigma := \max\limits_{i \in [n]} \sigma_i$.
\end{proposition}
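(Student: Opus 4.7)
The plan is to use standard Gaussian concentration arguments, noting that neither bound requires independence of the $X_i$'s. The only property of the joint distribution that matters is the common tail bound: since each $X_i \sim N(0,\sigma_i^2)$ with $\sigma_i \leq \sigma$, Mill's inequality gives $\Pr[|X_i| \geq t] \leq 2\exp(-t^2/(2\sigma^2))$ for every $i$. A union bound (which is insensitive to dependence) then yields
\[
  \Pr\Bigl[\max_{i \in [n]} |X_i| \geq t\Bigr] \leq 2n \exp\bigl(-t^2/(2\sigma^2)\bigr).
\]

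For the first inequality $\E[\max_i X_i^2] \leq 4\sigma^2 \log n$, I would pass to the integral formulation
\[
  \E\Bigl[\max_{i \in [n]} X_i^2\Bigr] = \int_0^\infty \Pr\Bigl[\max_{i \in [n]}|X_i| \geq \sqrt{s}\Bigr] \mathrm{d}s,
\]
split it at some threshold $s_0 = c\sigma^2 \log n$, bound the lower part trivially by $s_0$, and control the upper part by the tail estimate to obtain $\int_{s_0}^\infty 2n \exp(-s/(2\sigma^2))\mathrm{d}s = 4n\sigma^2 \exp(-s_0/(2\sigma^2))$. Choosing $c$ appropriately makes the second term a lower-order contribution, and the advertised constant $4$ follows after a routine optimization (for sufficiently large $n$; otherwise the bound is trivial).

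For the second inequality (which as stated appears to have a typographical inconsistency in units—likely the intended left-hand side is $\E[\max_i|X_i|]$ rather than $\E[\max_i X_i^2]$, since $2\sigma\sqrt{\log n}$ is the classical maximum-of-Gaussians bound), I would use the standard Jensen/MGF trick. For any $\lambda > 0$,
\[
  \exp\Bigl(\lambda \E\bigl[\max_{i\in[n]} |X_i|\bigr]\Bigr) \leq \E\Bigl[\exp\bigl(\lambda \max_{i\in[n]} |X_i|\bigr)\Bigr] \leq \sum_{i\in[n]} \E\bigl[\exp(\lambda|X_i|)\bigr] \leq 2n \exp(\lambda^2 \sigma^2/2),
\]
using the Gaussian MGF bound $\E[\exp(\lambda|X_i|)] \leq 2\exp(\lambda^2\sigma_i^2/2)$. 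Taking logarithms and optimizing $\lambda = \sqrt{2\log n}/\sigma$ yields $\E[\max_i|X_i|] \leq 2\sigma\sqrt{\log n}$ after absorbing the $O(1)$ term into the constant.

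There is no real obstacle here—this is a textbook consequence of the Gaussian tail bound together with the union bound and Jensen's inequality. The only point worth emphasizing is that the argument is purely marginal: it never invokes the joint distribution of $(X_1, \ldots, X_n)$, which is precisely why the lemma can be applied in the paper's SDP rounding analysis where the Gaussian vectors $\langle \bmw, \bmz\rangle$ are heavily correlated through the shared randomness $\bmg$.
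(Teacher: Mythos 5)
Your proof is correct, but it takes a genuinely different route from the one the paper relies on. The paper cites this as Fact~6 of~\cite{Chan:2018eu} and reveals the intended argument in its proof of Proposition~\ref{pro:fact-8.6-biased}: a moment (hypercontractivity-style) method based on $\max_i x_i^2 \leq \bigl(\sum_i x_i^{2p}\bigr)^{1/p}$, Jensen's inequality, the explicit Gaussian moment $\E[Z^{2p}] = (2p)!/(p!2^p)$, and the choice $p = \lceil \log n\rceil$; the second bound then follows in one line from the first via $\E[|Y|] \leq \sqrt{\E[Y^2]}$ applied to $Y = \max_i |X_i|$. You instead prove the first bound by tail integration plus a union bound, and the second bound independently by the MGF/Jensen trick; both are standard and fully valid, and you are right that the only property used is the marginal tail bound, so dependence is irrelevant in either approach. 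The moment method has the advantage of adapting painlessly to the biased case (which is exactly why the paper uses it for Proposition~\ref{pro:fact-8.6-biased}, where $\lambda_i^2 = \mu_i^2 + \sigma_i^2$ replaces $\sigma_i^2$); your tail-bound argument would need the two-parameter tail $\Pr[|X_i| \geq \mu_i + t]$ and a slightly messier split to recover that version. You also correctly flag the typographical slip in the statement: the second inequality should read $\E[\max_i |X_i|] \leq 2\sigma\sqrt{\log n}$, consistent with how it is derived in the proof of the biased version. Two small points of care: for the first bound your phrase ``otherwise the bound is trivial'' should be made precise --- for small $n$ one falls back on $\E[\max_i X_i^2] \leq \sum_i \sigma_i^2 \leq n\sigma^2 \leq 4\sigma^2\log n$, which covers $2 \leq n \leq 8$ while your splitting argument covers the rest (and $n=1$ is genuinely excluded, as it is from any such $\log n$ bound); and for the second bound the stated constant $2$ comes out exactly only if you optimize $\lambda = \sqrt{2\log(2n)}/\sigma$ rather than $\sqrt{2\log n}/\sigma$, since $\sigma\sqrt{2\log(2n)} \leq 2\sigma\sqrt{\log n}$ for all $n \geq 2$.
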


By slightly changing the proof of Proposition~\ref{pro:fact-8.6}, we can show a similar bound for biased normal random variables.
\begin{proposition}\label{pro:fact-8.6-biased}
  Suppose $X_1,X_2,\ldots,X_n$ are normal random variables that are not necessarily independent such that $\E[X_i] = \mu_i\;(i\in [n])$ and $\E[X_i^2]  = \sigma_i^2\;(i \in [n])$.
  Then, we have $\E[\max\limits_{i\in[n]} X_i^2] \leq 4\lambda^2 \log n$ and $\E[\max\limits_{i\in[n]} X_i^2] \leq 2\lambda \sqrt{\log n}$, where $\lambda := \max\limits_{i \in [n]} \sqrt{\mu_i^2 + \sigma_i^2}$.
\end{proposition}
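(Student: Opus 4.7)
The plan is to reduce the biased case to the unbiased Proposition~\ref{pro:fact-8.6} by decomposing each $X_i$ around its mean. Write $X_i = \mu_i + Y_i$, where $Y_i := X_i - \mu_i$ is a centered Gaussian with variance $\tau_i^2 := \sigma_i^2 - \mu_i^2 \geq 0$ (noting that $\sigma_i^2 = \E[X_i^2] \geq \mu_i^2$). From the definition $\lambda^2 = \max_i (\mu_i^2 + \sigma_i^2) = \max_i(2\mu_i^2 + \tau_i^2)$, we immediately get $\tau_i \leq \lambda$ and $|\mu_i| \leq \lambda/\sqrt{2}$.

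The first step is to bound $X_i^2$ by a centered Gaussian square plus a deterministic term. The cleanest route is the Chernoff-style argument used for Proposition~\ref{pro:fact-8.6}: since $Y_i \sim \mathcal{N}(0,\tau_i^2)$ with $\tau_i \leq \lambda$, the standard Gaussian tail gives $\Pr[|X_i| \geq s] \leq 2\exp(-(s-|\mu_i|)^2/(2\tau_i^2))$ for $s \geq |\mu_i|$. For $s \geq 2|\mu_i|$ (equivalently $s^2 \geq 4\mu_i^2$) we have $(s-|\mu_i|)^2 \geq s^2/4$, so $\Pr[X_i^2 \geq t] \leq 2\exp(-t/(8\lambda^2))$ whenever $t \geq 4\mu_i^2$. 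Taking a union bound, $\Pr[\max_i X_i^2 \geq t] \leq 2n\exp(-t/(8\lambda^2))$ for $t \geq 2\lambda^2$, and integrating the tail from a threshold of order $\lambda^2 \log n$ yields $\E[\max_i X_i^2] = O(\lambda^2 \log n)$, matching the stated bound up to constants (the claimed constant $4$ is then recovered by optimizing the threshold exactly as in Proposition~\ref{pro:fact-8.6}; alternatively, one can use the soft bound $X_i^2 \leq 2\mu_i^2 + 2Y_i^2$ together with the centered bound $\E[\max_i Y_i^2] \leq 4\tau^2 \log n$ to get an $O(\lambda^2 \log n)$ estimate immediately).

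For the second inequality (which, by analogy with Proposition~\ref{pro:fact-8.6}, should read $\E[\max_i |X_i|] \leq 2\lambda\sqrt{\log n}$, or in the squared form follows from Jensen via $\E[\max_i |X_i|] \leq \sqrt{\E[\max_i X_i^2]}$), I would use $\max_i |X_i| \leq \max_i |\mu_i| + \max_i |Y_i|$ and then apply Proposition~\ref{pro:fact-8.6} to the centered family $\{Y_i\}$ with $\max_i \tau_i \leq \lambda$, giving $\E[\max_i |Y_i|] \leq 2\lambda\sqrt{\log n}$. Since $\max_i|\mu_i| \leq \lambda/\sqrt{2}$ is absorbed into the logarithmic term for $n \geq 2$, the desired inequality follows.

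The main (mild) obstacle is keeping the leading constants identical to those in Proposition~\ref{pro:fact-8.6}: the naive reduction $X_i^2 \leq 2\mu_i^2 + 2Y_i^2$ loses a factor of two, so achieving the clean constant $4$ requires reproducing the threshold-plus-tail-integration proof of Proposition~\ref{pro:fact-8.6} directly on the biased tail bound above rather than reducing as a black box. Beyond this bookkeeping, no new ingredient is needed.
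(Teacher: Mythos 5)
Your proof is correct in substance but follows a genuinely different route from the paper's. The paper proves Proposition~\ref{pro:fact-8.6-biased} by the moment method: it writes $X_i = \mu_i + \sigma_i Z_i$, bounds $\max_i X_i^2 \leq (\sum_i X_i^{2p})^{1/p}$, applies Jensen, uses the explicit Gaussian moment formula $\E[Z^{2p}] = (2p)!/(p!2^p)$ to absorb the bias into $\lambda^{2}$ via $\E[X_i^{2p}] \leq \lambda^{2p}\E[Z^{2p}]$, and picks $p = \lceil \log n\rceil$; the constant $4$ falls out of $p\,n^{1/p}$. You instead decompose $X_i = \mu_i + Y_i$ and run a tail bound, union bound, and tail integration (or, as a shortcut, the pointwise bound $X_i^2 \leq 2\mu_i^2 + 2Y_i^2$ reduced to Proposition~\ref{pro:fact-8.6}). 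Your route is more elementary and self-contained, but as you note it yields leading constant $8$ rather than $4$ from the crude inequality $(s-|\mu_i|)^2 \geq s^2/4$; the paper's moment computation handles the mean shift without any such loss, which is exactly what makes the stated constant come out cleanly. Since the proposition is only invoked in Lemma~\ref{lem:numerator-general-large-bias-rounding} for an $O(\log n)$ bound, your version suffices for every use in the paper. Two small points in your favor: you correctly diagnose that the second inequality should read $\E[\max_i |X_i|] \leq 2\lambda\sqrt{\log n}$ and derive it by Jensen exactly as the paper does; and be aware that the statement itself is internally inconsistent ($\E[X_i^2]=\sigma_i^2$ versus the proof's treatment of $\sigma_i$ as the standard deviation of $X_i$) — your reading takes $\sigma_i^2$ as the second moment, which makes $\lambda$ larger and hence proves a formally weaker (but still literally sufficient) bound than the paper's proof establishes.
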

\begin{proof}
  For $i \in [n]$, we write $X_i = \mu_i + \sigma_i Z_i$, where $Z_i$ has the standard normal distribution $\caN(0,1)$.
  Observe that, for any real numbers $x_1, x_2, \ldots, x_n$ and  positive integer $p$, we have $\max_{i \in [n]} x_i^2 \leq {(\sum_{i\in [n]} x_i^{2p})}^{1/p}$.
  Hence, we have
  \begin{align*}
    & \E \Bigl[\max_{i\in [n]} X_i^2\Bigr]
    \leq \E \Bigl[{\Bigl(\sum_{i \in [n]}X_i^{2p}\Bigr)}^{1/p}\Bigr]\\
    & \leq {\Bigl(\E\Bigl[ \sum_{i \in [n]}X_i^{2p}\Bigr]\Bigr)}^{1/p} \tag{by Jensen's Inequality because $t \mapsto t^{1/p}$ is concave}\\
    & \leq \lambda^2 {\Bigl(\E \Bigl[\sum_{i \in [n]}Z_i^{2p}\Bigr]\Bigr)}^{1/p}\\
    & = \lambda^2 {\Bigl( \sum_{i \in [n]}\frac{(2p)!}{p!2^p}\Bigr)}^{1/p} \tag{$\E[Z_i^{2p}] = \frac{(2p)!}{p!2^p}$.}\\
    & \leq \sigma^2 p d^{1/p} \tag{using $\frac{(2p)!}{p!} \leq(2p)^p$.}
  \end{align*}
  Selecting $p = \lceil \log n\rceil$ provides the first result $\E[\max_{i\in [n]} X_i^2] \leq  4\lambda^2 \log n$.
  Moreover, the inequality $\E[|X|] \leq \sqrt{\E X^2}$ immediately provides the second result.
\end{proof}

\begin{proposition}[Fact~8.7 of~\cite{Chan:2018eu}]\label{pro:fact-8.7}
  Let $X_1,\ldots , X_n$ be normal random variables that are not necessarily independent $\E[X_i] = 0\;(i\in [n])$ and $\E[\sum_{i \in [n]}X_i^2]  = 1\;(i \in [n])$.
  Then, we have
  \[
    \Pr\Bigl[\sum_{i \in [n]}X_i^2 \geq \frac{1}{2}\Bigr] \geq \frac{1}{12}.
  \]
\end{proposition}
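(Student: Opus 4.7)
The plan is to apply the Paley--Zygmund (second-moment) inequality to the non-negative random variable $Y := \sum_{i \in [n]} X_i^2$. Since by hypothesis $\E[Y] = 1$, Paley--Zygmund at level $\theta = 1/2$ yields
\[
  \Pr\Bigl[Y \geq \tfrac{1}{2}\Bigr] \;\geq\; \Bigl(1 - \tfrac{1}{2}\Bigr)^{2} \frac{\E[Y]^{2}}{\E[Y^{2}]} \;=\; \frac{1}{4\,\E[Y^{2}]}.
\]
So it suffices to prove the second-moment bound $\E[Y^{2}] \leq 3$, and the target probability $1/12$ falls out immediately.

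To bound $\E[Y^{2}] = \sum_{i,j} \E[X_i^{2} X_j^{2}]$, I would invoke the jointly Gaussian structure of the $X_i$'s. (In every use of this proposition in the paper, each $X_i$ arises as $\langle \bmw_i, \bmg\rangle$ for a single standard Gaussian vector $\bmg$, so $(X_1,\ldots,X_n)$ is jointly Gaussian with mean zero.) Isserlis'/Wick's theorem then gives, for any mean-zero jointly Gaussian pair $X_i, X_j$,
\[
  \E[X_i^{2} X_j^{2}] \;=\; \E[X_i^{2}]\,\E[X_j^{2}] \;+\; 2\,\E[X_i X_j]^{2}.
\]
Summing over $i,j$ and using $\sum_i \E[X_i^{2}] = 1$, we get $\E[Y^{2}] = 1 + 2\sum_{i,j}\E[X_i X_j]^{2}$.

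The cross term is controlled by Cauchy--Schwarz applied pointwise, $\E[X_i X_j]^{2} \leq \E[X_i^{2}]\,\E[X_j^{2}]$, whence
\[
  \sum_{i,j} \E[X_i X_j]^{2} \;\leq\; \Bigl(\sum_i \E[X_i^{2}]\Bigr)^{\!2} \;=\; 1.
\]
Combining, $\E[Y^{2}] \leq 3$, and substituting back into Paley--Zygmund gives $\Pr[Y \geq 1/2] \geq 1/12$.

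The only subtle point -- really the main obstacle -- is that the statement as written only assumes each $X_i$ is marginally normal, while Isserlis' identity requires joint Gaussianity. I would simply record this as a standing assumption, noting that it holds automatically in all intended applications (where the $X_i$'s are linear functionals of a common Gaussian vector). Everything else is a routine two-line computation; no delicate concentration or covering argument is needed.
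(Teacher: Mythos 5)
Your proof is correct, and the paper itself offers no proof of this statement---it is quoted verbatim as Fact~8.7 of the cited reference, where the standard argument is exactly the Paley--Zygmund route you take. One remark: the "main obstacle" you identify (joint Gaussianity needed for Isserlis' theorem) is not actually an obstacle. You can bound the cross terms directly by Cauchy--Schwarz on the random variables themselves, $\E[X_i^2 X_j^2] \leq \sqrt{\E[X_i^4]}\sqrt{\E[X_j^4]} = 3\,\sigma_i^2\sigma_j^2$, using only the marginal normality $\E[X_i^4]=3\sigma_i^4$ that the statement actually assumes. Summing gives $\E[Y^2] \leq 3\bigl(\sum_i \sigma_i^2\bigr)^2 = 3$ with no joint-Gaussianity hypothesis, so the proposition holds exactly as stated and your standing assumption can be dropped.
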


\section{Inequalities}

The following vector version of the Cauchy-Schwarz inequality holds:
\begin{lemma}\label{lem:vector-cauchy-schwarz}
  For $\alpha_1,\ldots,\alpha_k \in \bbR$ and $\bmv_1,\ldots,\bmv_k \in \bbR^n$, we have
  \[
    \Bigl\|\sum_{i\in[k]}\alpha_i \bmv_k\Bigr\|_2^2
    \leq \sum_{i \in [k]}\alpha_i^2 \cdot \sum_{i \in [k]}\|\bmv_i\|_2^2.
  \]
\end{lemma}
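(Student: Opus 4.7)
The plan is to reduce the lemma to the scalar Cauchy--Schwarz inequality by first extracting the magnitudes $\|\bmv_i\|_2$ from the vector linear combination. Concretely, I would bound $\bigl\|\sum_{i\in[k]}\alpha_i \bmv_i\bigr\|_2$ from above by the scalar sum $\sum_{i\in[k]} |\alpha_i| \|\bmv_i\|_2$, and then bound the square of that scalar sum by $\sum_i \alpha_i^2 \cdot \sum_i \|\bmv_i\|_2^2$ using the standard Cauchy--Schwarz inequality applied to the vectors $(|\alpha_1|,\dots,|\alpha_k|)$ and $(\|\bmv_1\|_2,\dots,\|\bmv_k\|_2)$ in $\bbR^k$.

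For the first step I would expand the squared norm as an inner product,
\[
  \Bigl\|\sum_{i\in[k]}\alpha_i \bmv_i\Bigr\|_2^2
  = \sum_{i,j\in[k]} \alpha_i \alpha_j \langle \bmv_i, \bmv_j\rangle,
\]
and then apply the (ordinary) Cauchy--Schwarz inequality in $\bbR^n$ coordinatewise to obtain $\langle \bmv_i,\bmv_j\rangle \leq \|\bmv_i\|_2 \|\bmv_j\|_2$. Combining terms yields the bound
\[
  \Bigl\|\sum_{i\in[k]}\alpha_i \bmv_i\Bigr\|_2^2 \leq \Bigl(\sum_{i\in[k]} |\alpha_i| \|\bmv_i\|_2\Bigr)^2.
\]
Alternatively, one could just quote the triangle inequality for the $\ell_2$-norm to arrive at this same intermediate estimate.

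For the second step, I would apply the scalar Cauchy--Schwarz inequality to the pair $(|\alpha_i|)_{i \in [k]}$ and $(\|\bmv_i\|_2)_{i\in[k]}$, giving
\[
  \Bigl(\sum_{i\in[k]} |\alpha_i| \|\bmv_i\|_2\Bigr)^2 \leq \sum_{i\in[k]} \alpha_i^2 \cdot \sum_{i\in[k]} \|\bmv_i\|_2^2,
\]
and chain this with the previous inequality. There is no substantive obstacle here: the lemma is essentially a two-line consequence of Cauchy--Schwarz used twice (once in $\bbR^n$ on pairs $\bmv_i, \bmv_j$, and once in $\bbR^k$ on the magnitudes). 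The only minor care is to work with $|\alpha_i|$ rather than $\alpha_i$ when passing through the triangle-inequality step, so that the cross terms $\alpha_i \alpha_j \langle \bmv_i,\bmv_j\rangle$ with potentially negative signs are correctly dominated.
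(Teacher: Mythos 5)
Your proof is correct, and it takes a slightly different route from the paper's. The paper expands the squared norm coordinate-wise, $\bigl\|\sum_{i}\alpha_i \bmv_i\bigr\|_2^2 = \sum_{j \in [n]}\bigl(\sum_{i\in[k]}\alpha_i \bmv_i(j)\bigr)^2$, applies the scalar Cauchy--Schwarz inequality in $\bbR^k$ separately to each coordinate $j$ (pairing $(\alpha_i)_i$ with $(\bmv_i(j))_i$), and then interchanges the order of summation to assemble $\sum_i \|\bmv_i\|_2^2$. You instead first pass to magnitudes via the triangle inequality (or equivalently via Cauchy--Schwarz on the cross terms $\langle \bmv_i,\bmv_j\rangle$), obtaining the intermediate bound $\bigl(\sum_{i}|\alpha_i|\,\|\bmv_i\|_2\bigr)^2$, and then apply Cauchy--Schwarz once in $\bbR^k$ to the pair $(|\alpha_i|)_i$, $(\|\bmv_i\|_2)_i$. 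The two arguments pass through different intermediate quantities but are equally elementary and both land on the stated bound; yours has the minor advantage of never looking at individual coordinates, while the paper's avoids introducing absolute values. Your closing remark about working with $|\alpha_i|$ to dominate the possibly negative cross terms is exactly the right point of care.
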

\begin{proof}
  \begin{align*}
    & \Bigl\|\sum_{i\in[k]}\alpha_i \bmv_k\Bigr\|_2^2
    =
    \sum_{j \in [n]} {\Bigl(\sum_{i\in[k]}\alpha_i \bmv_k(j)\Bigr)}^2
    \leq
    \sum_{j \in [n]} {\Bigl(\sqrt{\sum_{i\in[k]}\alpha_i^2} \sqrt{\sum_{i \in [k]}{\bmv_k(j)}^2}\Bigr)}^2
    =
    \sum_{j \in [n]}\Bigl(\sum_{i\in[k]}\alpha_i^2\Bigr)\Bigl( \sum_{i \in [k]}{\bmv_k(j)}^2 \Bigr)\\
    & =
    \Bigl(\sum_{i\in[k]}\alpha_i^2 \Bigr)\Bigl(\sum_{j \in [n]} \sum_{i \in [k]}{\bmv_k(j)}^2\Bigr) = \sum_{i \in [k]}\alpha_i^2 \cdot \sum_{i \in [k]}\|\bmv_i\|_2^2. \qedhere
\end{align*}
\end{proof}

\section{Lower Bounds on the Number of Balls for Covering Extreme Points of a Base Polytope}\label{apx:covering}
In this section, we show that there exists a submodular function $F\colon2^V \to \bbR$ such that, to cover all the extreme points of the base polytope $B(F)$ using $\ell_2$-balls of radius $\epsilon \|B(F)\|_H$, we need almost as many number of balls as in Theorem~\ref{the:covering-base-polytope}.

Let $V=[n]$ be a finite set of $n$ elements.
For an integer $k \leq n/2$, consider a submodular function $F_k\colon\bbR^V\to \bbR$ with $F_k(S) = \min\set{|S|, k}\;(S \subseteq V)$, which is the rank function of a uniform matroid, and define $G_k\colon \bbR^V \to \bbR$ as its symmetrize version, that is, $G_k(S) = F_k(S) + F_k(V\setminus S)-F_k(V)\;(S \subseteq V)$.
It is easy to verify that the extreme points of $B(G_k)$ are of the form $\bmx \in \set{-1/2,0,1/2}^V$ with $\#\set{i \in [n] \mid \bmx(i)=-1/2}=\#\set{i \in [n] \mid \bmx(i)=1/2} = k$.
Note that the number of extreme points is $\binom{n}{k}\cdot \binom{n-k}{k} = \frac{n!}{k!k!(n-2k)!}$ and $\|B(F)\|_H = \sqrt{k/2}$.
Also, a ball of radius $r \in \bbR_+$ in $\ell_2$-norm can cover at most $\sum_{i=0}^{4r^2}\binom{n}{i}2^i$ extreme points.
Hence, to cover all the extreme points using balls of radius $\epsilon \|B(F)\|_H=\epsilon\sqrt{k/2}$, we need at least
\[
  N_k(\epsilon) := \frac{n!}{k!k!(n-2k)!\sum_{i=0}^{2\epsilon^2 k} \binom{n}{i}2^i}
\]
balls.
Noticing that
\[
  \sum_{i=0}^{r} \binom{n}{i}2^i \leq \sum_{i=0}^{r} \frac{{(2n)}^i}{i!} =\sum_{i=0}^{r} \frac{r^i}{i!} {\Bigl(\frac{2n}{r}\Bigr)}^i = e^r{\Bigl(\frac{2n}{r}\Bigr)}^r,
\]
we have by Stirling's formula that
\begin{align*}
  & \log N_k(r) \geq \log \frac{\sqrt{2\pi}n^{n+1/2}e^{-n}}{{(e k^{k+1/2}e^{-k})}^2e {(n-2k)}^{(n-2k)+1/2}e^{-(n-2k)} {(en/\epsilon^2 k)}^{2\epsilon^2 k}} \\
  & = \Omega\left(\left(n+\frac12\right)\log n-n - 2 \left(\left(k+\frac12\right)\log k-k\right) - \left(n-2k+\frac12\right)\log (n-2k) + (n-2k) -2\epsilon^2 k \log \frac{n}{\epsilon^2 k} \right) \\
  & = \Omega\left(2k\log (n-2k)  - 2\left(k+\frac12\right)\log k  -2\epsilon^2 k \log \frac{n}{\epsilon^2 k} \right) \\
  & = \Omega\left(2k\log \frac{n-2k}{k} - \log k  -2\epsilon^2 k \log \frac{n}{\epsilon^2 k} \right).
\end{align*}
Then, for any small constant $\epsilon \in (0,1)$, by choosing $k = 1/\epsilon^2$, we have
$\log N_k(r) = \Omega\bigl(\log(\epsilon^2 n)/\epsilon^2\bigr)$.
Recalling that the logarithm of the number of balls required in Theorem~\ref{the:covering-base-polytope} is $O\bigl(\log U_{\ref{lem:covering-number-of-l1-ball}}(\epsilon/6,B_1^V)+ \log \log n\bigr) = O\bigl(\log(\epsilon^2 n)/\epsilon^2 + \log \log n\bigr)$, we can conclude that the bound in Theorem~\ref{the:covering-base-polytope} is almost tight.

\section{Expressing Deep Neural Network}\label{sec:dnn}

A typical feed-forward neural network $f\colon\bbR^n \to \bbR$ used in deep learning is of the following form:
\[
  f(\bmx) = W_L(\sigma_{L-1}(W_{L-1}(\cdots \sigma_2(W_2\sigma_1(W_1 \bmx))))),
\]
where $W_\ell\in \bbR^{d_\ell \times d_{\ell-1}}\;(\ell \in \set{1,\ldots,L})$ is a matrix with $d_0 = n$ and $d_L=1$ and $\sigma_\ell\colon\bbR^{d_\ell} \to \bbR^{d_\ell}\;(\ell \in \set{1,\ldots,L-1})$ is a rectified linear unit (ReLU), which applies the following operation coordinate-wise: $x \mapsto \max\set{x,0}$.
Then, in the regression setting with the $\ell_2$-norm loss, given training examples $(\bmx_1,y_1),\ldots,(\bmx_m,y_m) \in \bbR^n \times \bbR$, we aim to find $W_1,\ldots,W_L$ that minimizes the loss function $\sum_{i=1}^m\|f(\bmx_i) - y_i\|_2^2$.
As the loss function is non-convex, we cannot hope to obtain the global minimum in polynomial time, and hence we want to analyze the structure of local minima.
When ReLUs are not applied in a neural network, every local minimum is known to be a global minimum (under a plausible assumption)~\cite{Kawaguchi:2016ub}.
However, the proof heavily relies on elegant properties of linear transformations and it does not generalize to the case with ReLUs.

Note that the function $\max\set{x-y,0}$ is the \Lovasz extension of the cut function of the directed graph consisting of a single arc $(x,y)$.
Using this fact, we can express the feed-forward neural network $f$ as an iterated applications of \Lovasz transformations.
First, define $W'_\ell \in \bbR^{(d_\ell+1)\times d_{\ell-1}}$ as the matrix obtained from $W_\ell$ by adding the all-zero row vector.
Then, we define $\sigma'_\ell\colon\bbR^{d_\ell} \to \bbR^{d_\ell}$ as $\sigma'_\ell(\bmx)(i) = \max\set{\bmx(i)-\bmx(d_\ell+1),0}$.
Finally, we define $f'\colon\bbR^n \to \bbR$  as
\[
  f'(\bmx) = W_L(\sigma'_{L-1}(W'_{\ell-1}(\cdots \sigma'_2(W'_2\sigma'_1(W'_1 \bmx))))).
\]
We can observe that $\sigma'_\ell\;(\ell \in \set{1,\ldots,L-1})$ acts as a ReLU because the last element of the vector given to $\sigma'_{\ell}$ is always zero.
Hence, we have $f' \equiv f$.
This observation implies that we could deepen the understanding of deep learning by studying \Lovasz transformations.

Indeed, the smallest non-trivial eigenvalue of the Laplacian $L_F$ of a submodular transformation $F\colon\set{0,1}^V \to \bbR^E$ is equal to  $\min_{\bmx \bot \bmone} \|f(\bmx)\|_2/\|\bmx\|_2$ for the corresponding \Lovasz transformation $f\colon\bbR^V \to \bbR^E$, which can be regarded as the smallest non-trivial singular value of $f$. (The connection will become clear in Section~\ref{sec:Laplacian}.)
Hence, this work can be seen as the first step toward extending linear algebra to the algebra with \Lovasz transformations, or \emph{submodular algebra}.

\end{document}